\theoremstyle{plain}
\newcommand{\ex}[2]{\mathbb{E}_{{#1}}\left[{#2}\right]}
\newtheorem{definition}{Definition}
\newtheorem{theorem}{Theorem}
\newtheorem{lemma}{Lemma} 
\newtheorem{corollary}{Corollary}
\title{Computable Bounds for Rate Distortion with Feed-Forward for Stationary and Ergodic Sources}
\author{Iddo Naiss and Haim Permuter
\thanks{Iddo Naiss and Haim Permuter are with the Department of Electrical and Computer Engineering,
 Ben-Gurion University of the Negev, Beer-Sheva, Israel. Emails: naiss@bgu.ac.il, haimp@bgu.ac.il.
}}
\begin{document}
\maketitle
\begin{abstract}
In this paper we consider the rate distortion problem of discrete-time, ergodic, and stationary sources with
feed forward at the receiver. We derive a sequence of achievable and computable rates that
converge to the feed-forward rate distortion. We show that, for ergodic and stationary sources, the rate
\begin{align}
R_n(D)=\frac{1}{n}\min I(\hat{X}^n\rightarrow X^n)\nonumber
\end{align}
is achievable for any $n$, where the minimization is taken over the transition conditioning probability
$p(\hat{x}^n|x^n)$ such that $\ex{}{d(X^n,\hat{X}^n)}\leq D$. The limit of $R_n(D)$ exists and is the feed-forward rate distortion.
We follow Gallager's proof where there is no feed-forward and, with appropriate modification, obtain our result.
We provide an algorithm for calculating $R_n(D)$ using the alternating minimization procedure, and present several numerical examples. We also present a dual form for the optimization of $R_n(D)$, and transform it into a geometric programming problem.
\end{abstract}

\begin{keywords}
Alternating minimization procedure, Blahut-Arimoto algorithm,
causal conditioning, concatenating code trees, directed information, ergodic and stationary sources, geometric programming,
ergodic modes, rate distortion with feed-forward.

\end{keywords}

\begin{section}{Introduction}\label{SecIntro}
The rate distortion function for memoryless sources is well known and was given by Shannon in his seminal work\cite{Shannon}. Shannon\cite{Shannon} showed that the rate distortion function is the minimum of mutual information between the source $X$ and the reconstruction $\hat{X}$, where the minimization is over transition probabilities $p(\hat{x}|x)$ such that the distortion constraint is satisfied, i.e., $\ex{}{d(X,\hat{X})}\leq D$. In the case where the source is stationary and ergodic, Gallager\cite{Gal} showed  that the rate distortion is the limit of the following sequence of rates. Each member of the sequence is the $n$th order rate distortion function, which is the solution of the following minimization problem
\begin{align}
\frac{1}{n}\min I(X^n;\hat{X}^n).\nonumber
\end{align}
The minimization is over all conditional probabilities $p(\hat{x}^n|x^n)$ such that the distortion constraint is satisfied, i.e., $\ex{}{d(X^n,\hat{X}^n)}\leq D$. Gallager showed that the limit of the sequence $\frac{1}{n}\min I(X^n;\hat{X}^n)$ exists and is equal to the infimum of the sequence.

The problem of source coding with feed-forward was introduced by Weissman and Merhav\cite{WeisMer} and by Venataramanan and Pradhan\cite{VenPra}, and
is depicted in Fig. \ref{RDprob}.
\begin{figure}[h]{
  \psfrag{X}[][][1]{$X^n$}\psfrag{D}[][][1]{Decoder}\psfrag{H}[][][1]{$\ \ \ \ \ \ \ \hat{X}_n(T,X^{n-s})$}
  \psfrag{W}[][][1]{$\ \ \ \ \ \ \ \ \ \ \ \ \ \ \ \ \ \ \ T(X^n)\in\{1,2,...,2^{nR}\}$}
  \psfrag{F}[][][1]{Delay $s$}\psfrag{E}[][][1]{Encoder}\psfrag{B}[][][1]{Delay $s$}\psfrag{A}[][][0.8]{$X^{n-s}$}
 \centerline{ \includegraphics[width=10cm]{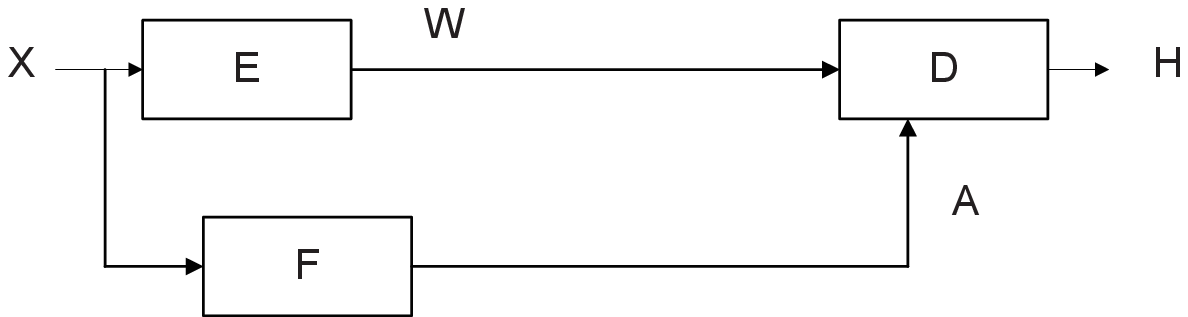}}
 \caption{Source coding with feed-forward: the decoder knows the source with delay $s$, and needs to reconstruct the source within the constraint $\ex{}{d(X^n,\hat{X}^n)}\leq D$.}
 \label{RDprob}
}\end{figure}
Weissman and Merhav\cite{WeisMer} named the problem Competitive Predictions. In their work, they defined a set of functions that predict the following $X_i$ given the previous $X^{i-1}$. After defining the \textit{loss function} between $X_i$ and the prediction, the objective was to minimizing the expected loss over all sets of predictors of size $M$. An important result in \cite{WeisMer} is that in the case where the innovation process $W_i=X^i-F_i(X^{i-1})$ is i.i.d. the distortion-rate with feed-forward function is the same as the distortion-rate function of $W_i$, where there is no feed-forward. In particular, if $X_i$ is an i.i.d. process, then $W_i=X_i$ and thus the distortion-rate with feed-forward for the source $X_i$ is the same as if there is no feed-forward.

Venkataramana and Pradhan\cite{VenPra} gave an explicit definition of the rate distortion feed-forward for an arbitrary normalized distortion function and a general source.
Their goal was to provide the rate $R$ of a source given a distortion $D$ using causal conditioning and directed information.
The source of information is modeled as the process $\{\hat{X}_n\}$ and is encoded in blocks of length $n$ into a message $T\in\{1,2,...,2^{nR}\}$. The message $T$ (after $n$ time units) is sent to the decoder that has to reconstruct the process $\{X_n\}$ using the message $T$ and causal information of the source with some delay $s$ as in Fig. \ref{RDprob}.

For that purpose, Venkataramanan and Pradhan\cite{VenPra} defined the measures
\begin{align}
\overline{I}(\hat{X}\rightarrow X)&=\limsup_{inprob}\frac{1}{n}\log\frac{p(X^n,\hat{X}^n)}{p(\hat{X}^n||X^{n-s})p(X^n)},\nonumber
\end{align}
and
\begin{align}
\underline{I}(\hat{X}\rightarrow X)&=\liminf_{inprob}\frac{1}{n}\log\frac{p(X^n,\hat{X}^n)}{p(\hat{X}^n||X^{n-s})p(X^n)}.\nonumber
\end{align}
The limsup in probability of a sequence of random variables $\{X_n\}$ is defined as the smallest extended real number $\alpha$ such that $\forall\epsilon>0$,
\begin{align}
\lim_{n\to\infty}\Pr[X_n\geq\alpha+\epsilon]=0,\nonumber
\end{align}
and the liminf in probability is the largest extended real number $\beta$ such that $\forall\epsilon>0$,
\begin{align}
\lim_{n\to\infty}\Pr[X_n\leq\beta-\epsilon]=0.\nonumber
\end{align}

The main result in \cite{VenPra} is that for a general source $\{X_n\}$ and distortion $D$, the rate distortion with feed-forward $R(D)$ is given by
\begin{align}
R(D)=\inf_{{\cal{P}}}\overline{I}(\hat{X}\rightarrow X),\nonumber
\end{align}
where the infimum is evaluated over the set ${\cal{P}}$ of probabilities $\{p(\hat{x}^n|x^n)\}_{n\geq1}$ that satisfy the distortion constraint. Moreover, if
\begin{align}
\overline{I}(\hat{X}\rightarrow X)=\underline{I}(\hat{X}\rightarrow X)\nonumber,
\end{align}
Venkataramana and Pradhan showed in \cite{VenPra}, that
\begin{align}
R(D)=\inf_{{\cal{P}}}\lim_{n\rightarrow\infty}\frac{1}{n}I(\hat{X}^n\rightarrow X^n).\nonumber
\end{align}

The work of Venkataramanan and Pradhan has made a significant contribution since it gives a multi-letter characteristic for the rate distortion function with feed-forward. In \cite{VenPra2}, they evaluated these formulas for a stock-market example and provided an analytical expression for the rate distortion function. However, these types of formulas are still very hard to evaluate for the general case. In this paper we show that assuming ergodicity and stationarity of the source, the rate distortion function with feed-forward and delay $s=1$ is upper bounded by $R_n(D)$, where
\begin{align}
R_n(D)=\frac{1}{n}\min_{p(\hat{x}^n|x^n):\ex{}{d(X^n,\hat{X}^n)}\leq D} I(\hat{X}^n\rightarrow X^n).\label{Rlim}
\end{align}
We further show that the limit of the sequence $\{R_n(D)\}$ exists, is equal to $\inf_n R_n(D)$, and is the rate distortion feed-forward function $R(D)$. These expressions for $R_n(D)$ are computable using a Blahut-Arimoto-type algorithm or using geometric programming, as demonstrated here.

In most models with causal constraints, such as feedback channels or feed-forward rate distortion, the causal conditioning probability, as well as the directed information characterizes the fundamental limits. In order to address these models, the causal conditioning probability was introduced by Massey\cite{Massey} and Kramer\cite{Kramer2} and is defined as
\begin{align}
p(\hat{x}^n||x^{n-s})=\prod_{i=1}^np(\hat{x}_i|\hat{x}^{i-1},x^{i-s}).\label{CausalPr}
\end{align}
The difference between regular and causal conditioning is that in causal conditioning the dependence of $\hat{x}_i$ on future $x_j$ is not taken into account. Following the causal conditioning probability,
Massey \cite{Massey} (who was inspired by Marko's work \cite{Marko73} on Bidirectional Communication) introduced the directed information, defined as
\begin{align}
I(\hat{X}^n\rightarrow X^n)&\triangleq H(X^n)-H(X^n||\hat{X}^n)\nonumber\\
&=\sum_{i=1}^nI(\hat{X}^i;X_i|X^{i-1}).\nonumber
\end{align}

The directed information was used by Tatikonda and Mitter\cite{TatiMit}, Permuter, Weissman, and Goldsmith \cite{PermuterWeissmanGoldsmith}, and Kim \cite{Kim} to characterize the point-to-point channel capacity with feedback. It is shown that the capacity of such channels is characterized by the maximization of the directed information over the input probability $p(x^n)$. In a previous paper\cite{NaissPermuter1}, we used these results and obtained bounds to estimate the feedback channel capacity using a Blahut-Arimoto-type algorithm (BAA) for finding the global optimum of the directed information.

The main contribution of this work lies in extending the achievability proof given by Gallager in \cite{Gal} to the case where feed-forward with delay $s=1$ exists. The extension is done by using the causal conditioning distribution, $p(\hat{x}^n||x^{n-s})$, rather than the regular reconstruction distribution $p(\hat{x}^n)$, in order to construct the codebook. The proof given is for $s=1$, but can be extended straightforwardly to any delay $s\geq1$. The difficulty in this modification is that while in \cite{Gal} the codebook was an ensemble of sequences (code words) from the reconstruction alphabet using $p(\hat{x}^n)$, our codebook is an ensemble of code trees using $p(\hat{x}^n||x^{n-s})$. This induced a major problem while showing that the probability of error is small, as discussed in Section \ref{SecAchieve}. These difficulties were overcome by appropriate modification to Gallager's proofs.

Another contribution of this paper is the development of two optimization methods for obtaining $R_n(D)$; a BA-type algorithm and a geometric programming(GP) form. The GP form is given as a maximization problem, which can be solved using standard convex optimization methods. Further, this maximization problem gives us a lower bound to the rate distortion with feed-forward, which helps us decide when to terminate the algorithm.

The remainder of the paper is organized as follows. In Section \ref{SecProbRes} we describe the problem model, provide the operational definition of the rate distortion function with feed-forward, and state our main theorems.
In Section \ref{SecAchieve} we show that $R_n(D)$ is an achievable rate for all $n$ and any distortion $D$, and in Section \ref{SecOperational} we show that the limit of $R_n(D)$ exists and is equal to the operational rate distortion function.
In Section \ref{SecalgGp} we present an alternative optimization problem for $R_n(D)$ in a standard geometric programming form that can be solved numerically using convex optimization tools. In Section \ref{secalgs} we give a description of the BAA for calculating $R_n(D)$ and present the algorithm's complexity and the memory required, and in Section \ref{SecDer} we derive the BAA and prove its convergence to the optimum value. Numerical examples are given in Section \ref{SecEx} to illustrate the performance of the suggested algorithms.
\end{section}

\begin{section}{Problem Statement and Main Results}\label{SecProbRes}
In this section we present notation, describe the problem model and summarize the main results of the paper.
We first state the definitions of a few quantities that we use in our coding theorems. We denote by $X_1^n$ the vector $(X_1,X_2,...X_n)$. Usually we use the notation $X^n=X_1^n$ for short. Further, when writing a probability mass function (PMF) we simply write $P_X(X=x)=p(x)$. An alphabet of any type is denoted by a calligraphic letter $\cal{X}$, and its size is denoted by $|{\cal{X}}|$.

In the rate distortion problem with feed-forward of delay $s=1$, as shown in Fig. \ref{RDprob}, we consider a general discrete, stationary, and ergodic source $\{X_n\}$, with the $n$th order probability distribution $p(x^n)$, alphabet $\cal{X}$ and reconstruction alphabet $\hat{\cal{X}}$. The normalized bounded distortion measure is defined as $d:{\cal{X}}^n\times\hat{\cal{X}}^n\to \mathbb{R}^+$ on pairs of sequences.
\begin{definition}[Code definition]
A $(n,2^{nR},D)$ source code with feed-forward of block length $n$ and rate $R$ consists of an encoder mapping $f$,
\begin{align}
f:&{\cal{X}}^n\mapsto\{1,2,...,2^{nR}\},\nonumber
\end{align}
and a sequence of decoder mappings $g_i,i=1,2,...,n$,
\begin{align}
g_i:&\{1,2,...,2^{nR}\}\times{\cal{X}}^{i-1}\mapsto\hat{\cal{X}},\ i=1,2,...,n.\label{MapEndDec}
\end{align}
The encoder maps a sequence $x^n$ to an index in $\{1,2,...,2^{nR}\}$. At time $i$, the decoder has the message that was sent and causal information of the source, $x^{i-1}$, and reconstructs the $i$th symbol sent, $\hat{x}_i$.
\end{definition}
\begin{definition}[Achievable rate]
A rate distortion with feed-forward pair $(R,D)$ is achievable if there exists a sequence of $(n,2^{nR},D)$-rate distortion codes with
\begin{align}
\lim_{n\to\infty}\ex{}{d(X^n,\hat{X}^n)}\leq D.\nonumber
\end{align}
\end{definition}
\begin{definition}[Rate distortion]
The rate distortion with feed-forward function $R(D)$ is the infimum of rates $R$ such that $(R,D)$ is achievable.
\end{definition}

In this paper, we define the mathematical expression for the rate distortion function as the following limit
\begin{align}
R^{(I)}(D)=\lim_{n\to\infty}R_n(D),\label{limoper}
\end{align}
where $R_n(D)$ is the $n$th order rate distortion function given by
\begin{align}
R_n(D)=\frac{1}{n}\min_{p(\hat{x}^n|x^n):\ex{}{d(X^n,\hat{X}^n)}\leq D}I(\hat{X}^n\to X^n).\nonumber
\end{align}
We show that the limit in (\ref{limoper}) exists, $R_n(D)$ is achievable and upper bounds $R^{(I)}(D)$ for all $n$. Further, we show that the rate distortion feed-forward function, $R(D)$, is equal to $R^{(I)}(D)$.
We also provide two ways to calculate numerically the value $R_n(D)$; using a BA-type algorithm and a geometric programming form.

We now state our main theorems.
\begin{theorem}[Achievability of $R_n(D)$]\label{ThRD}
For a discrete, stationary, ergodic source, and for any $D$, any $n$ and delay $s=1$, $R_n(D)$ is an achievable rate.
\end{theorem}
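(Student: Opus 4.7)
The plan is to follow Gallager's random coding argument for the rate-distortion function of stationary ergodic sources \cite{Gal}, replacing his i.i.d.\ codewords drawn from $p(\hat{x}^n)$ with random \emph{code trees} drawn from the induced causal conditioning $p(\hat{x}^n\|x^{n-1})$. Concretely, fix $n$ and let $p^*(\hat{x}^n|x^n)$ achieve the minimum defining $R_n(D)$, so that under the joint $p(x^n)p^*(\hat{x}^n|x^n)$ we have $\ex{}{d(X^n,\hat{X}^n)}\leq D$ and $\tfrac{1}{n}I(\hat{X}^n\to X^n)=R_n(D)$. From this joint I would extract the causal conditional factors $p(\hat{x}_i|\hat{x}^{i-1},x^{i-1})$, which serve as the generative rule for the codebook.

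The code operates at blocklength $Nn$ for large $N$. For each message $m\in\{1,\dots,\lceil 2^{Nn(R_n(D)+\epsilon)}\rceil\}$ I would independently draw $N$ code trees $T_m^{(1)},\dots,T_m^{(N)}$, each of depth $n$. A single tree is grown node by node: at a node at depth $i-1$ reached along a source path $x^{i-1}$, with reconstruction labels $\hat{x}^{i-1}$ already assigned from the root to the parent, the node's label $\hat{x}_i$ is drawn independently from $p(\hat{x}_i|\hat{x}^{i-1},x^{i-1})$. The encoder partitions the source $X^{Nn}$ into sub-blocks $X^{(1)},\dots,X^{(N)}$ of length $n$, feeds each $X^{(k)}$ into the corresponding tree $T_m^{(k)}$, and selects the message $m^*$ whose sub-block distortions are all acceptable (using a fixed default codeword if none exists). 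The decoder, equipped with feed-forward within each sub-block, traces the corresponding path through $T_{m^*}^{(k)}$ to recover $\hat{X}^{(k)}$.

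The analysis then has two pieces. First, a per-sub-block covering lemma: for a source sub-block $x^n$ drawn from $p(x^n)$, the probability over a random code tree that the induced reconstruction $\hat{X}^n$ satisfies $d(x^n,\hat{X}^n)\leq D+\delta$ is lower-bounded by roughly $2^{-n(R_n(D)+\epsilon/2)}$, so that $\lceil 2^{n(R_n(D)+\epsilon)}\rceil$ independent trees cover a typical $x^n$ with probability approaching one. Second, stationarity and ergodicity of the source, invoked as in Gallager via an appropriate ergodic theorem, propagate the sub-block covering lemma to the full $Nn$-block code: the empirical average of sub-block distortions concentrates at $\ex{}{d(X^n,\hat{X}^n)}\leq D$ as $N\to\infty$, yielding a $(Nn,2^{Nn(R_n(D)+\epsilon)},D+2\delta)$ code for every $\epsilon,\delta>0$.

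The main obstacle, and the essential point of departure from Gallager's argument, is the covering lemma itself. In his setting each codeword is an i.i.d.\ draw from a fixed $p(\hat{x}^n)$ and the joint distortion-typicality probability factors cleanly into single-letter terms, producing the familiar $2^{-nI(X^n;\hat{X}^n)}$ exponent. Here a ``codeword'' is a tree-valued object, the reconstruction is produced only after the tree is walked by the specific realization of $X^n$, and the resulting joint law of $(X^n,\hat{X}^n)$ is $p(x^n)p(\hat{x}^n\|x^{n-1})$ rather than the original joint. I therefore expect to replace the classical typicality calculation by its causal-conditioning analogue and to bound the covering probability by an exponent controlled by the directed information, using the chain-rule expansion $I(\hat{X}^n\to X^n)=\sum_{i=1}^{n}I(\hat{X}^i;X_i|X^{i-1})$. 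Extra care is needed to verify that the expected sub-block distortion produced by the tree ensemble is compatible with $D$, since the tree-induced joint and the original joint need not agree beyond the causal marginals $p(\hat{x}_i,\hat{x}^{i-1},x^{i-1})$; this can likely be handled by restricting the minimization in $R_n(D)$ to causal conditionals or by a direct comparison. This step is exactly the ``major problem'' the authors foreshadow, and I anticipate it will absorb the bulk of the technical work.
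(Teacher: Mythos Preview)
Your code-tree construction and the identification of the causal conditioning $p(\hat{x}^n\|x^{n-1})$ as the right generative rule are exactly what the paper does. The gap is in your analysis, specifically the \emph{per-sub-block} covering lemma.

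At fixed $n$ there is no concentration: for an individual sub-block $x^n$ neither the distortion $d(x^n,\hat{X}^n)$ under the test channel $p^*(\hat{x}^n|x^n)$ nor the ``information density'' $\log\frac{p^*(\hat{x}^n|x^n)}{p(\hat{x}^n\|x^{n-1})}$ is close to its mean, so the statement ``a random tree covers a typical $x^n$ with probability $\gtrsim 2^{-n(R_n(D)+\epsilon/2)}$'' has no meaning. Relatedly, your encoder demands that \emph{every} sub-block be individually acceptable; a single hard sub-block (one with anomalously small tree-covering probability) kills every message. The concern you raise at the end---that the tree-induced joint $p(x^n)p(\hat{x}^n\|x^{n-1})$ need not have expected distortion $\leq D$---is precisely the symptom: you cannot analyse one sub-block in isolation.

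The paper's fix is to refuse to work per sub-block. It defines a \emph{full-block} test channel $p_L(\hat{x}^L|x^L)=\prod_{k}p^*(\hat{x}_{kn+1}^{kn+n}|x_{kn+1}^{kn+n})$ and a full-block information density $\log\frac{p_L(\hat{x}^L|x^L)}{p_L(\hat{x}^L\|x^{L-1})}$, declares a pair $(\tau^L,x^L)$ ``good'' when \emph{both} the total distortion and this density are within $\delta$ of their means, and then uses the ergodic theorem (across the $L/n$ sub-blocks) to show that the test-channel probability of the good set tends to $1$. On the good set one has the change-of-measure inequality $p_L(\hat{x}^L\|x^{L-1})\geq p_L(\hat{x}^L|x^L)\,2^{-LR_n(D)}$, which converts test-channel probability into tree-ensemble probability and feeds directly into Gallager's $(1-ab)^M\leq 1-a+e^{-bM}$ bound. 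This is the resolution of the ``major problem'' you anticipate: the mismatch between the tree-induced joint and the original joint is absorbed by the information density, and concentration is obtained only at the level of the sum over sub-blocks, never at a single sub-block.

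You also omit a second ingredient. The ergodic-theorem step above requires the source to be \emph{ergodic in blocks of length $n$} (i.e., ergodic under the shift $T^n$), which a general stationary ergodic source need not be. The paper handles this via Gallager's ergodic-modes decomposition: the source splits into $n'\mid n$ modes, each ergodic under $T^n$; one builds a separate codebook for each mode, concatenates them (with fixed spacer symbols to realign the shift), and uses convexity of directed information in $p(x^n)$ to control the total rate. Without this step your argument only covers block-ergodic sources.
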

\begin{theorem}[Rate distortion feed-forward]\label{ThOperat}
For any distortion $D$, the operational rate distortion function $R(D)$ is equal to the mathematical expression, $R^{(I)}(D)$, where $R^{(I)}(D)$ is given by (\ref{limoper}).
\end{theorem}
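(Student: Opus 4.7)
The plan is to prove the two inequalities $R(D)\le R^{(I)}(D)$ and $R(D)\ge R^{(I)}(D)$, establishing that the limit defining $R^{(I)}(D)$ exists along the way. The achievability half is immediate from Theorem~\ref{ThRD}: since $R_n(D)$ is an achievable rate for \emph{every} $n$, we have $R(D)\le R_n(D)$ for all $n$, hence $R(D)\le \liminf_n R_n(D)$.

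For the converse, I take any achievable pair $(R,D)$ together with a corresponding sequence of $(n,2^{nR},D)$ codes whose actual expected distortions $D_n:=\ex{}{d(X^n,\hat X^n)}$ satisfy $\lim_n D_n\le D$. Each code induces a joint distribution on $(X^n,T,\hat X^n)$, and because $\hat X_i=g_i(T,X^{i-1})$ is a deterministic function of $(T,X^{i-1})$, data processing gives $I(T;X_i|X^{i-1})\ge I(\hat X^i;X_i|X^{i-1})$ for every $i$. Summing and using $H(T)\le nR$ yields
\begin{align*}
nR \ge H(T) \ge I(T;X^n) &= \sum_{i=1}^n I(T;X_i|X^{i-1}) \\
&\ge \sum_{i=1}^n I(\hat X^i;X_i|X^{i-1}) = I(\hat X^n\to X^n).
\end{align*}
Since the induced $p(\hat x^n|x^n)$ satisfies the distortion constraint with value $D_n$, it is feasible for the minimization defining $R_n(D_n)$, so $R\ge R_n(D_n)$ for every $n$. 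Using that $R_n(\cdot)$ is convex and non-increasing in $D$ (hence continuous on the interior of its domain), combined with $D_n\to D$, I conclude $R\ge \limsup_n R_n(D)$. Together with $R(D)\le \liminf_n R_n(D)$ from the achievability side, this forces $\lim_n R_n(D)$ to exist and to equal $R(D)$, which is precisely $R(D)=R^{(I)}(D)$.

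The main obstacle is the continuity step bridging $R\ge R_n(D_n)$ (which involves the \emph{perturbed} distortion $D_n$) to a bound involving $R_n(D)$ itself, uniformly in $n$. My plan is to exploit monotonicity of $R_n$ in $D$ to obtain $R\ge R_n(D_n)\ge R_n(D+\epsilon)$ for all $n$ large enough that $D_n\le D+\epsilon$, pass to $\limsup_n$, and then take $\epsilon\downarrow 0$ using right-continuity of the limiting curve. If one additionally wants to identify $R^{(I)}(D)$ with $\inf_n R_n(D)$, one establishes the subadditivity $(n+m)R_{n+m}(D)\le nR_n(D)+mR_m(D)$ by taking optimal test-channels $p_1,p_2$ for $R_n(D)$ and $R_m(D)$ and forming the product
\begin{align*}
q(\hat x^{n+m}|x^{n+m})=p_1(\hat x^n|x^n)\,p_2(\hat x_{n+1}^{n+m}|x_{n+1}^{n+m});
\end{align*}
stationarity takes care of feasibility of the distortion, while the Markov structure $\hat X^n - X^n - (X_{n+1}^{n+m},\hat X_{n+1}^{n+m})$ induced by $q$ allows one to split $I_q(\hat X^{n+m}\to X^{n+m})$ via the chain rule for directed information into a first-block contribution equal to $nR_n(D)$ and a second-block contribution bounded by $mR_m(D)$ after absorbing $\hat X^n$ using the Markov property and matching marginals by stationarity.
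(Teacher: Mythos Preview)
Your proposal is correct and follows essentially the same route as the paper: achievability from Theorem~\ref{ThRD}, the converse via the data-processing chain $nR\ge H(T)\ge I(T;X^n)\ge I(\hat X^n\to X^n)$, and subadditivity of $\{nR_n(D)\}$ (your final paragraph is exactly the paper's Lemma~\ref{LemSubAdd}/Appendix~\ref{Appsubadd}). The only organizational difference is that the paper first invokes subadditivity to secure existence of $\lim_n R_n(D)=\inf_n R_n(D)$ and then proves the two inequalities separately, whereas you try to squeeze existence of the limit out of the sandwich $\limsup_n R_n(D)\le R(D)\le \liminf_n R_n(D)$; your way forces you to argue right-continuity of $D\mapsto \limsup_n R_n(D)$, which does follow from convexity of each $R_n$ (the $\limsup$ is a decreasing limit of suprema of convex functions, hence convex), but you should state that explicitly rather than appeal to continuity of each individual $R_n$. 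Incidentally, you are more careful than the paper about the passage from $R\ge R_n(D_n)$ to $R\ge R^{(I)}(D)$: the paper's Appendix~\ref{Appconv} simply writes ``taking $n\to\infty$'' at that step, while your $\epsilon$-monotonicity argument is the honest way to close the gap.
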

\begin{theorem}\label{ThRdMaxgp}
The $n$th order rate distortion function $R_n(D)$ can be written in a geometric programming standard form as the following maximization problem
\begin{align}
R_n(D)=\max_{\lambda,\gamma(x^n),\{p'(x_i|x^{i-1},\hat{x}^{i})\}_{i=1}^n}\frac{1}{n}
    \left(-\lambda D+\sum_{x^n}p(x^n)\log\gamma(x^n)\right),\label{lowb1}
\end{align}
subject to the constraints:
\begin{align}
&\log(p(x^n))+\log(\gamma(x^n))-\lambda d(x^n,\hat{x}^n)-\sum_{i=1}^n \log{p'(x_i|x^{i-1},\hat{x}^i)}\leq 0,\ \ \forall\ x^n,\hat{x}^n,\nonumber\\
&\sum_{x_i}p'(x_i|x^{i-1},\hat{x}^i)=1,\ \ \forall\ i,\forall\ x^{i-1},\hat{x}^{i-1},\nonumber\\
&\lambda\geq0.\nonumber
\end{align}
\end{theorem}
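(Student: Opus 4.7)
The plan is to combine Lagrangian duality for the distortion constraint with a variational (Blahut--Arimoto style) representation of directed information, and then linearize the resulting pointwise minimum over $\hat{x}^n$ by introducing a slack variable $\gamma(x^n)$ for each source sequence. The primal problem defining $R_n(D)$ is convex in $p(\hat{x}^n|x^n)$ for fixed $p(x^n)$, and Slater feasibility of the distortion constraint (using boundedness of $d$) is easily checked, so the min--max exchange used below is tight.

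Concretely, I would first use nonnegativity of KL divergence applied to the causal conditional $p(x^n||\hat{x}^n)=\prod_{i=1}^n p(x_i|x^{i-1},\hat{x}^i)$ to write, for any causally-factored auxiliary $\prod_{i=1}^n p'(x_i|x^{i-1},\hat{x}^i)$,
\begin{align}
I(\hat{X}^n\to X^n)=\max_{p'}\sum_{x^n,\hat{x}^n}p(x^n,\hat{x}^n)\log\frac{\prod_{i=1}^n p'(x_i|x^{i-1},\hat{x}^i)}{p(x^n)},\nonumber
\end{align}
with equality when $p'$ coincides with the posterior induced by the joint $p(x^n)p(\hat{x}^n|x^n)$. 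Substituting this representation into the definition of $nR_n(D)$ and dualizing the distortion constraint with multiplier $\lambda\geq0$ yields
\begin{align}
nR_n(D)=\min_{p(\hat{x}^n|x^n)}\max_{\lambda\geq0,\,p'}\left\{\sum_{x^n,\hat{x}^n}p(x^n,\hat{x}^n)\log\frac{\prod_{i=1}^n p'(x_i|x^{i-1},\hat{x}^i)}{p(x^n)}+\lambda\bigl(\ex{}{d(X^n,\hat{X}^n)}-D\bigr)\right\}.\nonumber
\end{align}

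Next, I would swap the min and max using Sion's minimax theorem: the objective is linear in $p(\hat{x}^n|x^n)$, jointly concave in $(\lambda,p')$ (since $\log p'$ is concave and $\lambda$ appears linearly), and the feasible set for $p(\hat{x}^n|x^n)$ is compact. After the swap, the inner minimization is a linear program separable across $x^n$, whose optimum is attained at a vertex of each simplex, producing
\begin{align}
nR_n(D)=\max_{\lambda\geq0,\,p'}\left\{\sum_{x^n}p(x^n)\min_{\hat{x}^n}\!\left[\log\frac{\prod_{i=1}^n p'(x_i|x^{i-1},\hat{x}^i)}{p(x^n)}+\lambda d(x^n,\hat{x}^n)\right]-\lambda D\right\}.\nonumber
\end{align}
To convert the pointwise minimum into inequality constraints, I introduce a slack $\gamma(x^n)$ satisfying $\log\gamma(x^n)\leq\log\prod_{i=1}^n p'(x_i|x^{i-1},\hat{x}^i)-\log p(x^n)+\lambda d(x^n,\hat{x}^n)$ for every $\hat{x}^n$; rearranging yields exactly the inequality constraint listed in the theorem, and since $p(x^n)\geq0$, maximizing $\sum_{x^n}p(x^n)\log\gamma(x^n)$ forces $\log\gamma(x^n)$ to equal this pointwise minimum at the optimum. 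Adjoining the normalization on each $p'(\cdot|x^{i-1},\hat{x}^i)$ and the sign constraint $\lambda\geq0$ produces exactly (\ref{lowb1}).

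The main obstacle is justifying the minimax exchange cleanly: one must verify that an optimal $\lambda^\star$ is finite (which follows from the boundedness of $d$ together with the fact that $R_n(D)$ has finite slope in the interior of its domain), so that Sion's theorem applies on compact convex sets, and that the variational form of directed information is attained and not merely a supremum. Once those technicalities are in place, everything else is linear-programming duality and direct algebraic rearrangement into the stated geometric-programming form.
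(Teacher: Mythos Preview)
Your proposal is correct and the route is genuinely different from the paper's. The paper first establishes an intermediate result (its Theorem~\ref{ThRdMax}): $R_n(D)$ equals the maximum of $\frac{1}{n}\bigl(-\lambda D+\sum_{x^n}p(x^n)\log\gamma(x^n)\bigr)$ subject to $p(x^n)\gamma(x^n)2^{-\lambda d(x^n,\hat{x}^n)}\le p'(x^n||\hat{x}^n)$ for \emph{some} causally conditioned pmf $p'$. That theorem is proved either by the Berger-type inequality $\log y\ge 1-1/y$ (showing the objective lower-bounds $I_{FF}$ and that equality is attained at the optimizer of the primal), or by explicitly differentiating the Lagrangian and reading off the dual. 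The proof of Theorem~\ref{ThRdMaxgp} is then a short sandwich argument: enlarging the feasible set by also maximizing over $p'$ can only increase the value, yet any feasible $(\lambda,\gamma,p')$ for the enlarged problem is feasible for Theorem~\ref{ThRdMax}, so the values coincide; taking logarithms and splitting $p'(x^n||\hat{x}^n)$ into its factors $\{p'(x_i|x^{i-1},\hat{x}^i)\}$ yields the GP form.

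By contrast, you go straight to the GP form: you start from the Donsker--Varadhan-type identity $I(\hat{X}^n\to X^n)=\max_{p'}\mathbb{E}\log\frac{p'(X^n||\hat{X}^n)}{p(X^n)}$, dualize the distortion constraint, apply Sion to swap $\min_{p(\hat{x}^n|x^n)}$ and $\max_{\lambda,p'}$, solve the resulting inner linear program as a pointwise minimum over $\hat{x}^n$, and then linearize that minimum with the slack $\gamma(x^n)$. This is a clean, self-contained convex-analysis derivation that avoids the paper's two-stage structure and its reliance on the algorithm's fixed-point (Lemma~\ref{runique}) to exhibit the achieving $\gamma$. The paper's route, on the other hand, dovetails with the Blahut--Arimoto development in Sections~\ref{secalgs}--\ref{SecDer} and immediately yields the per-iteration lower bound used for termination. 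One small remark on your obstacle: Sion's theorem only needs compactness on one side, and your $p(\hat{x}^n|x^n)$-simplex already supplies it; finiteness of $\lambda^\star$ is needed not for the minimax exchange itself but only to guarantee that the outer supremum over $(\lambda,p')$ is attained as a maximum.
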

\begin{theorem}[Algorithm for calculating $R_n(D)$]\label{Ths}
For a fixed source distribution $p(x^n)$, there exists an alternating minimization procedure in order to compute
\begin{align}
R_n(D)=\frac{1}{n}\min_{p(\hat{x}^n|x^n):\ex{}{d(X^n,\hat{X}^n)}\leq D}I(\hat{X}^n\rightarrow X^n).\label{RD1}
\end{align}
\end{theorem}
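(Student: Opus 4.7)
The plan is to recast the optimization in (\ref{RD1}) as a double minimization over two families of probability distributions and then invoke the alternating minimization framework of Csisz\'ar and Tusn\'ady. The first family is the reverse channel $p(\hat{x}^n|x^n)$ already present in (\ref{RD1}); the second, playing the role of the output distribution in the classical Blahut-Arimoto algorithm, is an auxiliary causally conditioned distribution $q(\hat{x}^n||x^{n-1})=\prod_{i=1}^n q_i(\hat{x}_i|\hat{x}^{i-1},x^{i-1})$.

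The key identity to establish first is
\begin{align}
I(\hat{X}^n\to X^n)=\min_{q(\hat{x}^n||x^{n-1})}\sum_{x^n,\hat{x}^n}p(x^n)p(\hat{x}^n|x^n)\log\frac{p(\hat{x}^n|x^n)}{q(\hat{x}^n||x^{n-1})},\nonumber
\end{align}
where the minimizer is the causal conditioning of the joint $p(x^n)p(\hat{x}^n|x^n)$. The inequality ``$\le$'' is immediate by substitution, and ``$\ge$'' follows from adding and subtracting this optimal $q$, which exposes a sum of per-stage conditional KL-divergences and invokes their non-negativity. This is the direct directed-information analogue of the classical representation $I(X;\hat X)=\min_{q(\hat x)}\mathbb{E}\bigl[\log(p(\hat X|X)/q(\hat X))\bigr]$ used in the Blahut-Arimoto derivation.

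Adjoining a Lagrange multiplier $\beta\ge 0$ for the distortion constraint gives the functional
\begin{align}
F(p,q,\beta)=\sum_{x^n,\hat{x}^n}p(x^n)p(\hat{x}^n|x^n)\log\frac{p(\hat{x}^n|x^n)}{q(\hat{x}^n||x^{n-1})}+\beta\,\ex{}{d(X^n,\hat{X}^n)}.\nonumber
\end{align}
Since the directed information is convex in $p(\hat{x}^n|x^n)$, Lagrangian duality is tight and, for each fixed $\beta\ge 0$, the value $\min_p\min_q F(p,q,\beta)$ traces out the rate-distortion curve with $\beta$ as slope parameter. The alternating procedure updates $p$ and $q$ in turn: for fixed $q$, the objective decouples across $x^n$, and Lagrangian stationarity yields the closed-form Boltzmann update $p(\hat{x}^n|x^n)\propto q(\hat{x}^n||x^{n-1})\exp(-\beta d(x^n,\hat{x}^n))$, normalized per $x^n$; for fixed $p$, the opening identity identifies the minimizing $q_i(\hat{x}_i|\hat{x}^{i-1},x^{i-1})$ as the causal marginals of the joint induced by $p$, again a closed-form update. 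Sweeping $\beta$ by bisection attains any target distortion $D$.

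Convergence follows by standard alternating-minimization arguments in the Csisz\'ar-Tusn\'ady style: $F$ is coordinate-wise convex, each sub-update is the unique minimizer in its coordinate, and $F$ is bounded below, so monotone decrease together with compactness of the iterate space forces convergence of $F(p^{(k)},q^{(k)},\beta)$ to its global minimum. The step I expect to be the main obstacle is the ``$\ge$'' direction of the variational identity: one must check that restricting $q$ to the causally conditioned product form loses nothing relative to arbitrary joints on $\hat{\mathcal{X}}^n$. This is exactly where the feed-forward structure of the problem enters; once it is settled, the rest of the derivation is a routine rewriting of the classical Blahut-Arimoto argument with causal conditioning in place of ordinary conditioning.
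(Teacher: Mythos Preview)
Your proposal is correct and follows essentially the same route as the paper: the paper likewise rewrites $R_n(D)$ as a double minimization of the Lagrangian $K(r,q)=I_{FF}(r,q)+\lambda\,\mathbb{E}_r[d(X^n,\hat X^n)]$ over $r(\hat x^n|x^n)$ and the auxiliary causally conditioned $q(\hat x^n||x^{n-1})$, obtains the same two closed-form updates (Boltzmann for $r$, causal marginals for $q$), and invokes an alternating-minimization convergence lemma (the paper cites Yeung's version rather than Csisz\'ar--Tusn\'ady, but the content is the same). One small clarification on your stated ``main obstacle'': the point is not to compare the causal product form against a larger class of $q$'s, but simply to show that the minimum over causally conditioned $q$ equals $I(\hat X^n\to X^n)$; the paper does this in one stroke by writing $I_{FF}(r,q)-I_{FF}(r,q^*)=D\bigl(p(x^n||\hat x^n)q^*\,\|\,p(x^n||\hat x^n)q\bigr)\ge 0$, which is equivalent to your per-stage KL decomposition.
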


Proofs to Theorem \ref{ThRD} and \ref{ThOperat} are given in Section \ref{SecAchieve} and Section \ref{SecOperational}, respectively.
The proof for Theorem \ref{ThRdMaxgp} is in Section \ref{SecalgGp}, the algorithm in Theorem \ref{Ths} is described in Section \ref{secalgs} and proved in Section \ref{SecDer}.

\end{section}

\begin{section}{Achievability proof (Theorem \ref{ThRD}).}\label{SecAchieve}
In this section we show that if the source is stationary and ergodic, then $R_n(D)$ as given in (\ref{RD1}) is achievable for any $n$.
In order to do so, we first assume that the source is ergodic in blocks of length $n$, and show achievability. A source that is ergodic in blocks is one that, by looking at each $n$ letters as a single letter from a super alphabet, we obtain an ergodic super source (presented in \cite[Chapter 9.8]{Gal}). Then, for the general ergodic sources, we follow a claim given in \cite{Gal} about ergodic modes, as explained further on. The distortion  is assumed to be normalized, finite, and of the form
\begin{align}
d(x^n,\hat{x}^n)=\frac{1}{n}\sum_{i=1}^nd(x_{i-m}^i,\hat{x}_i),\label{distform}
\end{align}
for some $m$.
An example for such a distortion can be found in \cite{VenPra2} and in Section \ref{SecEx}, in an example called the stock-market.

\begin{theorem}\label{ThRDn}
Consider a discrete stationary source that is ergodic in blocks of length $n$. For any distortion $D$ such that $R_n(D)<\infty$ and $\delta>0$, and for any $L$ sufficiently large, there exists a codebook of trees $\cal{T}_C$ of length $L$ with $|{\cal{T}}_C|\leq2^{L(R_n(D)+\delta)}$ code trees for which the average distortion per letter satisfies $\ex{}{d(X^L,\hat{X}^L)}\leq D+\delta$.
\end{theorem}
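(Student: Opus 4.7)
My plan is to follow Gallager's random coding argument from \cite{Gal}, with two key modifications: the codewords become code trees, and the mutual information is replaced by the directed information via causal conditioning. The main difficulty, flagged just before the theorem statement, is to control the covering failure probability when the codebook is an ensemble of trees rather than sequences.

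First I would fix an optimizer $p^*(\hat{x}^n|x^n)$ of (\ref{RD1}) --- so that $\ex{}{d(X^n,\hat{X}^n)}\leq D$ and $I(\hat{X}^n\to X^n)=nR_n(D)$ --- and extract from it the causal conditioning $p^*(\hat{x}^n||x^{n-1})=\prod_{i=1}^n p^*(\hat{x}_i|\hat{x}^{i-1},x^{i-1})$. The identity $I(\hat{X}^n\to X^n)=D\!\left(p^*(x^n,\hat{x}^n)\,\|\,p(x^n)\,p^*(\hat{x}^n||x^{n-1})\right)$, which follows from the causal chain rule $p(x^n,\hat{x}^n)=p(x^n||\hat{x}^n)\,p(\hat{x}^n||x^{n-1})$, recasts $nR_n(D)$ as a Kullback--Leibler divergence and is what transports the random-coding exponent calculation to the feed-forward setting.

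Next I would set $L=Kn$ and view the source as a super-source of $K$ super-letters over $\mathcal{X}^n$ (ergodic by hypothesis, so the ergodic AEP applies). Draw $M=\lceil 2^{L(R_n(D)+\delta)}\rceil$ code trees independently; each tree of depth $L$ is assembled from $K$ independent sub-trees of depth $n$, one per super-letter block, and within each sub-tree the reconstruction at the node reached by the within-block past $(\hat{x}^{i-1},x^{i-1})$ is drawn according to $p^*(\hat{x}_i|\hat{x}^{i-1},x^{i-1})$. Under this construction, when a random tree is traversed along a source realization $x^L$, the induced conditional law on the reconstruction satisfies $\Pr[\hat{X}^L=\hat{x}^L\mid X^L=x^L]=\prod_{k=1}^K p^*(\hat{x}^n_{(k)}||x^{n-1}_{(k)})$, where $(\cdot)_{(k)}$ denotes the $k$th block. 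The encoder transmits the index of the tree that minimizes $d(X^L,\hat{X}^L)$.

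The heart of the proof is the covering bound: for every $x^L$ in an appropriate jointly-typical set (under the $K$-fold product extension of $p^*$), the single-tree probability $\sum_{\hat{x}^L:\,d(x^L,\hat{x}^L)\leq D+\delta/2}\prod_k p^*(\hat{x}^n_{(k)}||x^{n-1}_{(k)})$ is at least $2^{-L(R_n(D)+\delta/3)}$ for $L$ large. I would establish this by restricting the sum to jointly typical $\hat{x}^L$, for which the KL identity gives $\log p^*(\hat{x}^L||x^{L-1})\approx -L[H(\hat{X}^n|X^n)/n+R_n(D)]$, and using that the number of such $\hat{x}^L$ is $\approx 2^{LH(\hat{X}^n|X^n)/n}$, so the product is $\approx 2^{-LR_n(D)}$. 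The union bound over the $M$ independent trees then yields $\Pr[\text{no tree achieves distortion}\leq D+\delta/2]\leq(1-2^{-L(R_n(D)+\delta/3)})^M\to 0$, and combining with the vanishing contribution of atypical $X^L$ and selecting a codebook at least as good as the ensemble average gives $\ex{}{d(X^L,\hat{X}^L)}\leq D+\delta$. The main obstacle throughout is this covering bound: because the tree's output branches on $x^L$, one cannot treat $\hat{X}^L$ as an i.i.d.\ draw from a fixed marginal $p^*(\hat{x}^n)$ as in Gallager, and instead I must verify that along the realized path $\log p^*(\hat{X}^L||X^{L-1})$ concentrates on its block-ergodic mean --- the directed-information analogue of the covering lemma underlying Gallager's classical proof.
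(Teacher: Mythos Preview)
Your proposal is correct and follows essentially the same route as the paper: the same random tree codebook built from $p^*(\hat{x}^n||x^{n-1})$, the same key observation that traversing a random tree along $x^L$ induces the law $\prod_k p^*(\hat{x}^n_{(k)}||x^{n-1}_{(k)})$ on the path, and the same block-ergodic concentration of $\log[p^*(\hat{x}^L|x^L)/p^*(\hat{x}^L||x^{L-1})]$ around $LR_n(D)$ to drive the covering bound. The only difference is cosmetic---the paper packages the argument in Gallager's original style (a set $\mathcal{A}$ defined by thresholds on the information density and the distortion, together with the inequality $(1-ab)^k\le 1-a+e^{-bk}$), whereas you phrase the same estimate via typicality and counting; both routes rest on exactly the identity and concentration you identify.
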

\begin{proof}
Let $p(\hat{x}^n|x^n)$ be the transition probability that achieves the minimum $R_n(D)$ and let $p(\hat{x}^n||x^{n-1})$ be the causal conditioning probability that corresponds to $p(x^n)p(\hat{x}^n|x^n)$.
\begin{itemize}
\item   \underline{Code design.} For any $L$, consider the ensemble of codes $\cal{T}_C$ with $|{\cal{T}}_C|=\lfloor2^{L(R_n(D)+\delta)}\rfloor$ code trees of length $L$, where each code tree $\tau^L\in\cal{T}_C$ is a concatenation of $L/n$ sub-code trees of length $n$. Each sub-code tree is generated independently according to $p(\hat{x}^n||x^{n-1})$ as in Fig. \ref{codetree}.
\begin{figure}[h!]{
\psfrag{A}[][][1]{$p(\hat{x}_1)$} \psfrag{B}[][][1]{$p(\hat{x}_2|\hat{x}_1,x_1)$} \psfrag{C}[][][1]{$p(\hat{x}_3|\hat{x}^2_1,x^2_1)$} \psfrag{D}[][][1]{$p(\hat{x}_4)$} \psfrag{E}[][][1]{$p(\hat{x}_5|\hat{x}_4,x_4)$} \psfrag{F}[][][1]{$p(\hat{x}_6|\hat{x}^5_4,x^5_4)$}
\psfrag{I}[][][1]{$\hat{x}_1$} \psfrag{J}[][][1]{$\hat{x}_2$} \psfrag{K}[][][1]{$\hat{x}_2$}
\psfrag{L}[][][1]{$x_1=1$} \psfrag{M}[][][1]{$x_1=0$}
\psfrag{G}[][][1]{Code tree 1}\psfrag{H}[][][1]{Code tree 2}
 \centerline{ \includegraphics[scale=0.75]{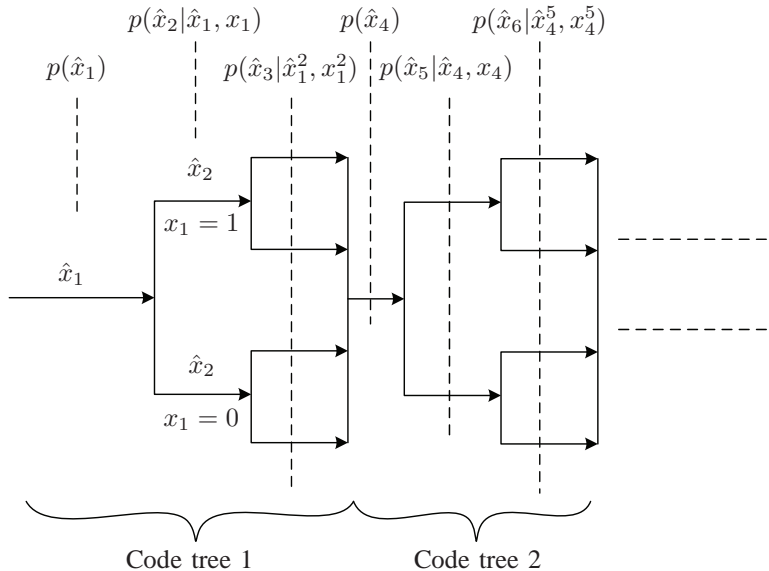}}
  \caption{Concatenation of two code trees, each of length $n=3$. The upper branches are for $x_i=1$, and the lower branches are for $x_i=0$.}
 \label{codetree}
}\end{figure}

\item   \underline{Encoder.} The encoder assigns a code tree $\tau^L\in\cal{T}_C$ for every $x^L$ such that $d(x^L,\hat{x}^L(\tau^L,x^{L-1}))$ is minimal. The sequence $\hat{x}^L(\tau^L,x^{L-1})$ is determined by walking on tree $\tau^L$, and following the branch $x^{L-1}$.
\item   \underline{Decoder.} At time $i$, the decoder possesses the index of the tree $\tau^L$ and causal information of the source $x^{i-1}$, and returns the symbol $\hat{x}_i(\tau^L,x^{i-1})$ that it produces.
\end{itemize}

Let us define a test channel as the conditional probability
\begin{align}
p_L(\hat{x}^L|x^L)&=\prod_{i=0}^{L/n-1} p(\hat{x}_{ni+1}^{ni+n}|x_{ni+1}^{ni+n}),\label{TestChan}
\end{align}
and the causal conditional probability
\begin{align}
p_L(\hat{x}^L||x^{L-1})&=\prod_{i=0}^{L/n-1}p(\hat{x}_{ni+1}^{ni+n}||x_{ni+1}^{ni+n-1}),\nonumber
\end{align}
where the distribution is according to
\begin{align}
P_{\hat{X}_{ni+1}^{ni+n}|X_{ni+1}^{ni+n}}(\hat{x}^n|x^n)&=P_{\hat{X}^n|X^n}(\hat{x}^n|x^n),\nonumber\\
P_{\hat{X}_{ni+1}^{ni+n}||X_{ni+1}^{ni+n-1}}(\hat{x}^n||x^{n-1})&=P_{\hat{X}^n||X^{n-1}}(\hat{x}^n||x^{n-1}).\nonumber
\end{align}
Moreover, we define for every code tree $\tau^L$ of length $L$ the measure
\begin{align}
I_n(\tau^L\rightarrow x^L)=\log\frac{p_L(\hat{x}^L|x^L)}{p_L(\hat{x}^L||x^{L-1})},
\end{align}
where $\hat{x}^L=\hat{x}^L(\tau^L,x^{L-1})$. Note that $I_n(\tau^L\to x^L)$ is not the directed information between the sequences $\hat{x}^L,\ x^L$, but simply a measure between a source sequence $x^L$ and the output, $\hat{x}^L$ of the test channel $p_L(\hat{x}^L|x^L)$, as defined in (\ref{TestChan}).

Let $\cal{T}$ be the set of all code trees of length $L$, and consider the following set,
\begin{align}
{\cal{A}}=\{\tau^L\in{\cal{T}},x^L\in{\cal{X}}^L:\ \textrm{either}\  I_n(\tau^L\rightarrow x^L)>L(R_n(D)+\delta/2)\ \ \ \textrm{or}\ \ \   d(x^L,\hat{x}^L(\tau^L,x^{L-1}))>L(D+\delta/2)\},\label{TestA}
\end{align}
and let $p_t({\cal{A}})$ be the probability of the set ${\cal{A}}$ on the test channel ensemble.

Let us use the notation
\begin{align}
\hat{x}^L({\cal{T}}_C,x^{L-1})=\hat{x}^L\left(\arg\min_{\tau^L\in{\cal{T}}_C}d\big(x^L,\hat{x}^L(\tau^L,x^{L-1})\big),x^L\right),\nonumber
\end{align}
where ${\cal{T}}_C$ is the ensemble of code trees as described in the coding scheme.
Now, let $p_c(d(X^L,\hat{x}^L({\cal{T}}_C,X^{L-1}))>LD)$ be the probability over the ensemble of codes ${\cal{T}}_C$ and source sequences such that the distortion exceeds $LD$. We wish to give an upper bound to the probability $p_c(d(X^L,\hat{x}^L({\cal{T}}_C,X^{L-1}))>LD)$; for this we use the following lemma.
\begin{lemma}
For a given source $\{X_i\}_{i\geq1}$ and test channel, we have the following inequality
\begin{align}
p_c\left(d(X^L,\hat{x}^L({\cal{T}}_C,X^{L-1}))>LD\right)\leq p_t({\cal{A}})+\exp\{-|{\cal{T}}_C|2^{-LR_n(D)}\},
\end{align}
where the set $A$ is described in (\ref{TestA}).
\end{lemma}
\textit{Proof.} We first write $p_c\left(d(X^L,\hat{x}^L({\cal{T}}_C,X^{L-1}))>LD\right)$ as
\begin{align}
p_c\left(d(X^L,\hat{x}^L({\cal{T}}_C,X^{L-1}))>LD\right)=\sum_{x^L\in{\cal{X}}^L}p(x^L)p_c\left(d(X^L,\hat{x}^L({\cal{T}}_C,X^{L-1}))>LD|X^L=x^L\right).\nonumber
\end{align}
For every $x^L$, let us define the set ${\cal{A}}_{x^L}$ as the set of all code trees $\tau^L\in{\cal{T}}$ for which $(\tau^L,x^L)\in {\cal{A}}$,
\begin{align}
{\cal{A}}_{x^L}=\{\tau^L\in{\cal{T}}:\ \textrm{either}\  I_n(\tau^L\rightarrow x^L)>L(R_n(D)+\delta/2)\ \ \ \textrm{or}\ \ \   d(x^L,\hat{x}^L(\tau^L,x^{L-1}))>L(D+\delta/2)\}.\label{TestAxL}
\end{align}

We observe that $d(x^L,\hat{x}^L({\cal{T}}_C,x^{L-1}))>LD$ for a given $x^L$ only if $d(x^L,\hat{x}^L(\tau^L,x^{L-1}))>LD$ for every $\tau^L\in{\cal{T}}_C$. Thus, $d(x^L,\hat{x}^L({\cal{T}}_C,x^{L-1}))>LD$ only if $\tau^L\in {\cal{A}}_{x^L}$ for every $\tau^L\in {\cal{T}}_C$. Since $\tau^L$ are independently chosen,
\begin{align}
p_c\left(d(X^L,\hat{x}^L({\cal{T}}_C,X^{L-1}))>LD|X^L=x^L\right)&\leq \left(p_t({\cal{A}}_{x^L})\right)^{|{\cal{T}}_C|}\nonumber\\
&=\left(1-p_t({\cal{A}}_{x^L}^c)\right)^{|{\cal{T}}_C|},\nonumber
\end{align}
where ${\cal{A}}_{x^L}^c$ is the complement set of ${\cal{A}}_{x^L}$. We note that the probability that tree $\tau^L$ being in ${\cal{A}}_{x^L}^c$ depends only on the branch associated with $x^{L}$. In other words, if a tree $\tau^L\in {\cal{A}}_{x^L}^c$, then all other trees with the same branch associated with $x^L$ is in ${\cal{A}}_{x^L}^c$ as well; the same goes for ${\cal{A}}_{x^L}$. Hence, we can divide the set of all code trees ${\cal{T}}$ into disjoint subsets $B_{x^L,\hat{x}^L}$ that have the same branch associated with $x^{L-1}$, i.e.,
\begin{align}
B_{x^L,\hat{x}^L}=\{\tau^L\in{\cal{T}}: \tau^L(x^{L-1})=\hat{x}^{L}\}\nonumber,
\end{align}
where $\tau^L(x^{L-1})$ is a walk on tree $\tau^L$ over the branch $x^{L-1}$. Clearly, the probability of each subset $B_{x^L,\hat{x}^L}$ is
\begin{align}
p_t(B_{x^L,\hat{x}^L})=p_L(\hat{x}^L||x^{L-1})\nonumber
\end{align}
since the left hand side is a summation of the probabilities of all trees with the same branch associated with $x^L$, and we are left with the probability of that one branch.

Now, for every $\tau^L\in B_{x^L,\hat{x}^L}\subset {\cal{A}}_{x^L}^c$, and due to the definition of ${\cal{A}}_{x^L}^c$, we have
\begin{align}
I_n(\tau^L\rightarrow x^L)=\log\frac{p_L(\hat{x}^L|x^L)}{p_L(\hat{x}^L||x^{L-1})}\leq LR_n(D).\nonumber
\end{align}
Therefore,
\begin{align}
p_L(\hat{x}^L||x^{L-1})\geq p_L(\hat{x}^L|x^L)2^{-LR_n(D)},\label{IneqAch}
\end{align}
and we obtain that
\begin{align}
p_c\left(d(X^L,\hat{x}^L({\cal{T}}_C,X^{L-1}))>LD|X^L=x^L\right)&\leq\left(1-p_t({\cal{A}}_{x^L}^c)\right)^{|{\cal{T}}_C|}\nonumber\\
&=\left(1-\sum_{B_{x^L,\hat{x}^L}\subset {\cal{A}}_{x^L}^c}p_t(B_{x^L,\hat{x}^L})\right)^{|{\cal{T}}_C|}\nonumber\\
&=\left(1-\sum_{\hat{x^L}:B_{x^L,\hat{x}^L}\subset {\cal{A}}_{x^L}^c}p_L(\hat{x}^L||x^{L-1})\right)^{|{\cal{T}}_C|}\nonumber\\
&\stackrel{(a)}{\leq}\left(1-2^{-LR_n(D)}\sum_{\hat{x}^L:B_{x^L,\hat{x}^L}\subset {\cal{A}}_{x^L}^c}p_L(\hat{x}^L|x^L)\right)^{|{\cal{T}}_C|},\nonumber
\end{align}
where (a) follows the inequality in equation (\ref{IneqAch}).

Using the inequality $(1-ab)^k\leq 1-a+\exp\{-bk\}$, and taking $a=\sum_{\hat{x}^L:B_{x^L,\hat{x}^L}\subset {\cal{A}}_{x^L}^c}p_L(\hat{x}^L|x^L)$, $b=2^{-LR_n(D)}$, we find
\begin{align}
p_c\left(d(X^L,\hat{x}^L({\cal{T}}_C,X^{L-1}))>LD|X^L=x^L\right)&\leq1-\sum_{\hat{x}^L:B_{x^L,\hat{x}^L}\subset {\cal{A}}_{x^L}^c}
    p_L(\hat{x}^L|x^L)+\exp\{-|{\cal{T}}_C|2^{-LR_n(D)}\}.\nonumber
\end{align}
By taking a sum over $x^L$ we remain with
\begin{align}
p_c\left(d(X^L,\hat{x}^L({\cal{T}}_C,X^{L-1}))>LD\right)&=\sum_{x^L}p(x^L)p_c\left(d(X^L,\hat{x}^L({\cal{T}}_C,X^{L-1}))>LD|X^L=x^L\right)\nonumber\\
&\leq\sum_{x^L}p(x^L)\left(1-\sum_{\hat{x}^L:B_{x^L,\hat{x}^L}\subset
    {\cal{A}}_{x^L}^c}p_L(\hat{x}^L|x^L)+\exp\{-|{\cal{T}}_C|2^{-LR_n(D)}\}\right)\nonumber\\
&=1-\sum_{x^L}\sum_{\hat{x}^L:B_{x^L,\hat{x}^L}\subset {\cal{A}}_{x^L}^c}p(x^L,\hat{x}^L)+\exp\{-|{\cal{T}}_C|2^{-LR_n(D)}\}.\label{sumprob}
\end{align}
Note, that
\begin{align}
\sum_{x^L}\sum_{\hat{x}^L:B_{x^L,\hat{x}^L}\subset {\cal{A}}_{x^L}^c}p(x^L,\hat{x}^L)&=
    \sum_{x^L}\sum_{\hat{x}^L:B_{x^L,\hat{x}^L}\subset {\cal{A}}_{x^L}^c}\sum_{\tau^L\in{\cal{T}}}p(x^L,\hat{x}^L,\tau^L)\nonumber\\
&\geq\sum_{x^L}\sum_{\hat{x}^L:B_{x^L,\hat{x}^L}\subset {\cal{A}}_{x^L}^c}\sum_{\tau^L\in B_{x^L,\hat{x}^L}}p(x^L,\hat{x}^L,\tau^L)\nonumber\\
&\stackrel{(a)}{=}\sum_{x^L}\sum_{B_{x^L,\hat{x}^L}\subset {\cal{A}}_{x^L}^c}\sum_{\tau^L\in B_{x^L,\hat{x}^L}}p(x^L,\tau^L)\nonumber\\
&=\sum_{x^L}\sum_{\tau^L\in {\cal{A}}_{x^L}^c}p(x^L,\tau^L)\nonumber\\
&=p_t({\cal{A}}^c),\nonumber
\end{align}
where (a) follows the fact that if $\tau^L\in B_{x^L,\hat{x}^L}$, then $\hat{x}^L$ is determined by the tree $\tau^L$ and the branch $x^L$.
Now, continuing from equation (\ref{sumprob}), we obtain
\begin{align}
p_c\left(d(X^L,\hat{x}^L({\cal{T}}_C,X^{L-1}))>LD\right)&\leq1-p_t({\cal{A}}^c)+\exp\{-|{\cal{T}}_C|2^{-LR_n(D)}\}\nonumber\\
&=p_t({\cal{A}})+\exp\{-|{\cal{T}}_C|2^{-LR_n(D)}\}.\label{per}
\end{align}\hfill\QED

We now use the result in (\ref{per}) in order to complete the proof of the theorem. Furthermore, we can see that the average distortion of the code satisfies
\begin{align}
\ex{}{d(X^L,\hat{X}^L}(\leq (D+\delta/2)+p_c\left(d(X^L,\hat{x}^L({\cal{T}}_C,X^{L-1}))>L(D+\delta/2)\right)\cdot\sup_{x^L,\hat{x}^L}{d(x^L,\hat{x}^L)}.\nonumber
\end{align}
This arises, as in \cite[Th. 9.3.1]{Gal}, from upper bounding the distortion by $D+\delta/2$ when the $d(x^L,\hat{x}^L)\leq D+\delta/2$, and by
\begin{align}
\sup_{x^L,\hat{x}^L}{d(x^L,\hat{x}^L)}\nonumber
\end{align}
otherwise. By choosing $|{\cal{T}}_C|=\lfloor2^{L(R_n(D)+\delta)}\rfloor$, the last term in (\ref{per}) goes to zero with increasing $L$. Furthermore, the first term is bounded by
\begin{align}
p_t({\cal{A}})\leq p_t\{x^L\in{\cal{X}}^L&,\tau^L\in{\cal{T}}:\ I_n(\tau^L\rightarrow x^L)>L(R_n(D)+\delta/2)\}\nonumber\\
&+p_t\{x^L\in{\cal{X}}^L,\tau^L\in{\cal{T}}:\ d(x^L,\hat{x}^L(\tau^L,x^{L-1}))>L(D+\delta/2)\}.\label{pAbound}
\end{align}

Note that
\begin{align}
p_t\left(I_n(\tau^L\rightarrow x^L)>L(\frac{1}{n}R_n(D)+\delta/2)\right) =p_t\left(\frac{1}{L}\sum_{i=1}^{L/n-1}\log\frac{p(\hat{x}_{ni+1}^{ni+n}|x_{ni+1}^{ni+n})}{p(\hat{x}_{ni+1}^{ni+n}||x_{ni+1}^{ni+n-1})}
    >R_n(D)+\delta/2\right).\nonumber
\end{align}
As assumed, the source is ergodic in blocks of length $n$. Furthermore, the test channel is defined to be memoryless for blocks of length $n$, and hence the joint process is ergodic in blocks of length $n$. Thus, with probability 1,
\begin{align}
\frac{1}{n}\lim_{L\rightarrow\infty}\frac{1}{L/n}\sum_{i=0}^{L/n-1}\log\frac{p(\hat{x}_{ni+1}^{ni+n}|x_{ni+1}^{ni+n})}{p(\hat{x}_{ni+1}^{ni+n}||x_{ni+1}^{ni+n-1})}&=
    \frac{1}{n}\ex{}{\log\frac{p(\hat{x}^n|x^n)}{p(\hat{x}^n||x^{n-1})}}\nonumber\\
&=R_n(D).\nonumber
\end{align}
Therefore, the probability of the first term in (\ref{pAbound}) goes to zero as $L$ goes to infinity, and the same goes to the second term due to the definition of the distortion.
In order to finish the proof, and due to the fact that $p_c$ goes to zero with increasing $L$ and the fact that the distortion is finite, we can choose $L$ large enough such that
\begin{align}
p_c\left(d(X^L,\hat{x}^L({\cal{T}}_C,X^{L-1}))>L(D+\delta/2)\right)\cdot\sup_{x^L,\hat{x}^L}{d(x^L,\hat{x}^L)}\leq\delta/2.\nonumber
\end{align}
In this case, we obtain $D_L\leq D+\delta$, and hence the rate $R_n(D)$ is achievable for sources that are ergodic in blocks of length $n$.
\end{proof}

Much like in Gallager's proof for the case where there is no feed-forward, we note that not all ergodic sources are also ergodic in blocks, and we need to address these cases as well. For that purpose, we need \cite[Lemma 9.8.2]{Gal} for ergodic sources. We recall, that a discrete stationary source is ergodic if and only if every invariant set of sequences under a shift operator $T$ is of probability 1 or 0. In \cite[Chapter 9.8]{Gal}, the author looks at the operator $T^n$, i.e., a shift of $n$ places, and considers an invariant set $S_0$, $p(S_0)>0$, with respect to $T^n$. In Lemma 9.8.2 in \cite{Gal}, it is stated that one can separate the source $S$ to $n'$ invariant subsets $\{S_i=T^i(S_0)\}_{i=0}^{n'-1}$, $p(S_i)=\frac{1}{n'}$, with regard to $T^n$, such that $n'$ divides $n$ and the sets $S_i,\ S_j$ are disjoint except, perhaps, an intersection of zero probability. These subsets are called \textit{ergodic modes}, due to the fact that each invariant subset of them under the operator $T^n$ is of probability 0 or $\frac{1}{n'}$. In other words, conditional on an ergodic mode $S_i$ each invariant subset of it with respect to $T^n$, is of probability 0 or 1.

Recall, that by definition,
\begin{align}
R_n(D)=\frac{1}{n}I_n(\hat{X}^n\rightarrow X^n),\nonumber
\end{align}
where the right-hand side is the average directed information between the source and reconstruction, determined according to $p(\hat{x}^n|x^n)$ that achieves $R_n(D)$. Let $I_n(\hat{X}^n\rightarrow X^n|i)$ be the average directed information between a source sequence from the $i$th ergodic mode and the ensemble of codes, using the probability $p(\hat{x}^n|x^n)$ which achieves $R_n(D)$. Note that the directed information can be written as
\begin{align}
I_n(\hat{X}^n\rightarrow X^n)&=\sum_{x^n,\hat{x}^n}p(x^n)p(\hat{x}^n|x^n)\log\frac{p(\hat{x}^n|x^n)}{p(\hat{x}^n||x^{n-1})}\nonumber\\
&=\sum_{x^n,\hat{x}^n}p(x^n)p(\hat{x}^n|x^n)\log\frac{p(\hat{x}^n|x^n)p(x^n)}{p(\hat{x}^n||x^{n-1})p(x^n)}\nonumber\\
&=D\left(p(x^n)p(\hat{x}^n|x^n)||p(\hat{x}^n||x^{n-1})p(x^n)\right),\nonumber
\end{align}
which is convex over the input probability $p(x^n)$. Thus,
\begin{align}
I_n(\hat{X}^n\rightarrow X^n)\geq\frac{1}{n'}\sum_{i=0}^{n'-1}I_n(\hat{X}^n\rightarrow X^n|i).\label{supsourceineq}
\end{align}
We observe that $\frac{1}{n}I_n(\hat{X}^n\rightarrow X^n|i)$ is an upper bound to the $n$th order rate distortion function conditional on the $i$th ergodic mode. From Theorem \ref{ThRDn}, we know that there exists a codebook ${\cal{T}}_{C_i}$ with $|{\cal{T}}_{C_i}|=\lfloor2^{L(\frac{1}{n}I_n(\hat{X}^n\rightarrow X^n|i)+\delta)}\rfloor$ code trees of length $L$ such that the average distortion constraint holds. Another observation is that if a codebook ${\cal{T}}_{C_i}$ satisfies the distortion constraint, conditional on the ergodic mode $S_i$, then it has the same effect conditional on the ergodic mode $T(S_{i-1})$. In other words, we can encode not only a source sequence from $S_{i-i}$ with ${\cal{T}}_{C_{i-1}}$, but also a shift of the a source sequence in $S_{i-1}$ with ${\cal{T}}_{C_{i}}$. We use these observations while constructing the codebook.

We can now prove Theorem \ref{ThRD}, i.e., the achievability of $R_n(D)$, where the source is ergodic and stationary. An equivalent version of Theorem \ref{ThRD} is the following:
let $R_n(D)$ be the $n$th order rate distortion function for a discrete, stationary, and ergodic source. For any $D$ such that $R_n(D)<\infty$, and $\delta>0$, and any $L$ sufficiently large, there exists a codebook of trees $\cal{T}_C$ of length $L$ with $|{\cal{T}}_C|\leq2^{L(R_n(D)+\delta)}$ code trees for which the average distortion per letter satisfies $\ex{}{d(X^n,\hat{X}^n}\leq D+\delta$.
\begin{proof}[Proof of Theorem \ref{ThRD}]
Let $p(\hat{x}^n|x^n)$ be the transition probability that achieves $R_n(D)$ and let $p(\hat{x}^n||x^{n-1})$ be the causal conditioning probability that corresponds to $p(x^n)p(\hat{x}^n|x^n)$.
\begin{itemize}
\item   \underline{Code design.} For any $L$ and any ergodic mode $S_i$, $0\leq i\leq n'$, construct an ensemble of codes ${\cal{T}}_{C_i}$, with $|{\cal{T}}_{C_i}|=\lfloor2^{L(\frac{1}{n}I_n(\hat{X}^n\rightarrow X^n|i)+\delta)}\rfloor$ 'little' code trees of length $L$, where each 'little' code tree is generated according to $p(\hat{x}^L||x^{L-1})$, as in Fig. \ref{codetree} in Theorem \ref{ThRDn} above. Now, for every $0\leq i\leq n'$, the $i$th codebook is an ensemble of 'big' code trees, which are concatenation of $n'$ 'little' code trees, starting from one in ${\cal{T}}_{C_i}$, and followed by one from ${\cal{T}}_{C_{i+1}}$ to one from ${\cal{T}}_{C_{n'+i-1}}$, where the index is calculated modiolus $n'$. In the example of a 'big' code tree in Fig. \ref{ncodetree} we see additional letters at the end of each 'little' code tree, i.e., in positions $L+1,\ 2(L+1),...,n'(L+1)$, that are fixed. The purpose of the fixed letters is to shift the sequence and encode it with a codetree from the sequential codebook. Note, that the overall length of a code tree sums up to $L'=Ln'+n'$.
    \begin{figure}[h!]{
    \psfrag{A}[][][1]{Codetree from ${\cal{T}}_{C_i}$} \psfrag{B}[][][1]{Codetree from ${\cal{T}}_{C_{i+1}}$}
    \psfrag{C}[][][1]{Codetree from ${\cal{T}}_{C_{i+2}}$} \psfrag{D}[][][1]{Fixed letters}
    \psfrag{G}[][][1]{Code tree 1}\psfrag{H}[][][1]{Code tree 2}
     \centerline{ \includegraphics[scale=0.75]{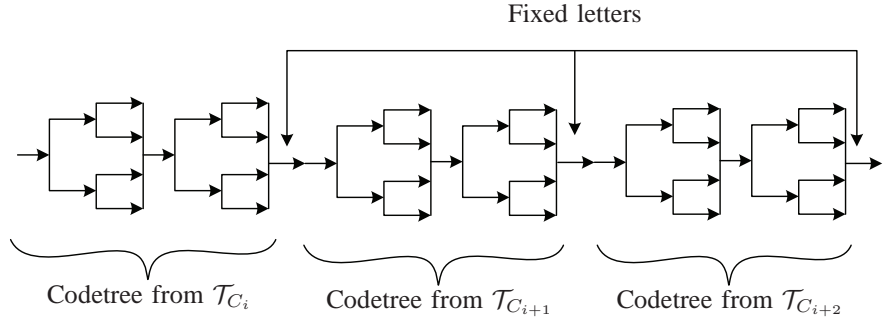}}
      \caption{A code tree from the $i$th codebook, $n=n'=3$, $L=6$.}
     \label{ncodetree}
    }\end{figure}

\item   \underline{Encoder.} For every $i$, the encoder assigns for every source sequence $x^{L'}\in S_i$ a code tree $\tau^{L'}$ from the $i$th codebook,  such that $d(x^{L'},\hat{x}^{L'}(\tau^{L'},x^{L'-1}))$ is minimal. The sequence $\hat{x}^{L'}(\tau^{L'},x^{L'-1})$ is determined by walking on tree $\tau^{L'}$, and following the branch $x^{L'-1}$.
\item   \underline{Decoder.} The decoder receives a tree $\tau^{L'}$ and causal information of $x^{L'}$ and returns the sequence $\hat{x}^{L'}$ that it produces.
\end{itemize}

Since the distortion constraint for every ergodic mode is satisfied due to Theorem \ref{ThRDn}, the overall distortion is satisfied as well. The additional fixed letters are of unknown distortion, but due to the face that the distortion is bounded, their contribution is negligible for large $L$. Moreover, note that for every $i$, the $i$th codebook is of the same size. Thus, the overall size of the codebook is
\begin{align}
|{\cal{T}}_C|&=n'\prod_{i=0}^{n'-1}|{\cal{T}}_{C_i}|\nonumber\\
&\leq n'\prod_{i=0}^{n'-1}2^{L(\frac{1}{n}I_n(\hat{X}^n\rightarrow X^n|i)+\delta)}\nonumber\\
&=2^{L(\frac{1}{n}\sum_{i=0}^{n'-1}I_n(\hat{X}^n\rightarrow X^n|i)+n'\delta+\frac{\log(n')}{L})}\nonumber\\
&\leq2^{L(\frac{n'}{n}I_n(\hat{X}^n\rightarrow X^n)+n'\delta+\frac{\log(n')}{L})}\nonumber\\
&=2^{Ln'(R_n(D)+\delta+\frac{\log(n')}{Ln'})}\nonumber\\
&\leq2^{(Ln'+n')(R_n(D)+\delta+\frac{\log(n')}{Ln'})}.\nonumber
\end{align}
Recall that $L'=Ln'+n'$, and by letting $\delta'=\delta+\frac{\log(n')}{Ln'}$ we conclude that $R_n(D)$ is an achievable rate for the general ergodic source, as required.
\end{proof}
\end{section}

\begin{section}{Proof that $R(D)=R^{(I)}(D)$ (Theorem \ref{ThOperat}).}\label{SecOperational}
In this section we show that the operational description of the rate distortion with feed-forward is equal to the mathematical one given in (\ref{RDI}). This will be done first by showing that the mathematical expression $R^{(I)}(D)$ is achievable, and then by showing that it is a lower bound to the rate distortion function.
We recall that
\begin{align}
R^{(I)}(D)=\lim_{n\to\infty}\frac{1}{n}\min_{p(\hat{x}^n|x^n):\ex{}{d(X^n,\hat{X}^n)}\leq D}I(\hat{X}^n\to X^n).\label{RDI}
\end{align}

To show that $R^{(I)}(D)$ is achievable we first need to show that the limit of the sequence $\{R_n(D)\}$ exists. For this purpose, we use the following lemma.
\begin{lemma}\label{LemSubAdd}
The sequence $R_n(D)$,
\begin{align}
R_n(D)=\frac{1}{n}\min_{p(\hat{x}^n|x^n):\ex{}{d(X^n,\hat{X}^n)}\leq D}I(\hat{X}^n\rightarrow X^n),\nonumber
\end{align}
is a sub-additive sequence, and thus
\begin{align}
\inf_n R_n(D)=\lim_{n\rightarrow\infty}R_n(D).\nonumber
\end{align}
\end{lemma}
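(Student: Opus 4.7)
The plan is to show that the sequence $a_n := n R_n(D)$ is subadditive, i.e., $a_{n+m}\leq a_n + a_m$, from which Fekete's lemma gives $\lim_n R_n(D) = \inf_n R_n(D)$. To prove subadditivity, I would take test channels $q^*(\hat{x}^n|x^n)$ and $r^*(\hat{x}^m|x^m)$ that achieve $R_n(D)$ and $R_m(D)$ respectively, and form the product test channel
\begin{align}
p(\hat{x}^{n+m}|x^{n+m}) = q^*(\hat{x}^n|x^n)\,r^*(\hat{x}_{n+1}^{n+m}|x_{n+1}^{n+m}).\nonumber
\end{align}
Feasibility of the combined channel as a candidate in the $R_{n+m}(D)$ minimization follows from linearity of expectation together with stationarity: the marginal of $X_{n+1}^{n+m}$ equals that of $X^m$, so each sub-block meets the distortion constraint $D$, and their average is $\leq D$.

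Next I would split the directed information by the natural block boundary:
\begin{align}
I(\hat{X}^{n+m}\to X^{n+m}) = \sum_{i=1}^{n}I(\hat{X}^i;X_i|X^{i-1}) + \sum_{j=1}^{m}I(\hat{X}^{n+j};X_{n+j}|X^{n+j-1}).\nonumber
\end{align}
The first sum is exactly $I(\hat{X}^n \to X^n)$ evaluated under $p(x^n)q^*(\hat{x}^n|x^n)$, which equals $n R_n(D)$ by the choice of $q^*$. The nontrivial task is to bound each term of the second sum by the corresponding term of $I(\hat{X}_{n+1}^{n+m}\to X_{n+1}^{n+m})$, where the target value after summation is $m R_m(D)$ by stationarity.

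For the second sum, the key identity is that under the product construction $\hat{X}_{n+1}^{n+j}$ is conditionally independent of $(X^n,\hat{X}^n)$ given $X_{n+1}^{n+j}$. Writing
\begin{align}
I(\hat{X}^{n+j};X_{n+j}|X^{n+j-1}) = I(\hat{X}^n;X_{n+j}|X^{n+j-1}) + I(\hat{X}_{n+1}^{n+j};X_{n+j}|X^{n+j-1},\hat{X}^n),\nonumber
\end{align}
the first term vanishes because $\hat{X}^n$ is a function only of $X^n \subseteq X^{n+j-1}$ under $q^*$. For the second term, using the identity
\begin{align}
I(A;B|C,D) = I(A;B|C) - I(A;D|C) + I(A;D|B,C),\nonumber
\end{align}
with $A=\hat{X}_{n+1}^{n+j}$, $B=X_{n+j}$, $C=X_{n+1}^{n+j-1}$, $D=(X^n,\hat{X}^n)$, the Markov condition gives $I(A;D|B,C)=0$, so $I(A;B|C,D) = I(A;B|C) - I(A;D|C) \leq I(\hat{X}_{n+1}^{n+j};X_{n+j}|X_{n+1}^{n+j-1})$. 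Summing over $j$ yields the bound $mR_m(D)$, and combining establishes $(n+m)R_{n+m}(D)\leq n R_n(D)+m R_m(D)$. The main obstacle is precisely this bookkeeping step: the source is \emph{not} memoryless across the block boundary, so naively treating the two blocks as independent fails, and one must exploit the conditional independence induced by the product test channel rather than any independence of the source. Once subadditivity is in hand, Fekete's lemma closes the proof.
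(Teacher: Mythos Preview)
Your overall strategy---form the product test channel, check feasibility via stationarity, then invoke Fekete---matches the paper. The execution of the directed-information bound, however, has a genuine gap.

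The claimed Markov relation ``$\hat{X}_{n+1}^{n+j}$ is conditionally independent of $(X^n,\hat{X}^n)$ given $X_{n+1}^{n+j}$'' is \emph{not} implied by the product construction. What the product channel gives you is $\hat{X}_{n+1}^{n+m}\perp(X^n,\hat{X}^n)\mid X_{n+1}^{n+m}$, i.e., independence once you condition on the \emph{entire} second source block. When you condition only on the prefix $X_{n+1}^{n+j}$, the conditional law of $\hat{X}_{n+1}^{n+j}$ is
\[
p(\hat{x}_{n+1}^{n+j}\mid x^{n+j})=\sum_{x_{n+j+1}^{n+m}}p(x_{n+j+1}^{n+m}\mid x^{n+j})\,r^*(\hat{x}_{n+1}^{n+j}\mid x_{n+1}^{n+m}),
\]
which still depends on $x^n$ whenever the source has memory longer than $j$ (so that $p(x_{n+j+1}^{n+m}\mid x^{n+j})$ is a genuine function of $x^n$) and $r^*$ is non-causal in its block. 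Hence $I(A;D\mid B,C)$ in your identity need not vanish, and the term-by-term inequality $I(\hat{X}_{n+1}^{n+j};X_{n+j}\mid X^{n+j-1},\hat{X}^n)\le I(\hat{X}_{n+1}^{n+j};X_{n+j}\mid X_{n+1}^{n+j-1})$ is unproven. (A secondary imprecision: $\hat{X}^n$ is not a \emph{function} of $X^n$; the first term vanishes because of the Markov chain $\hat{X}^n- X^n- X_{n+1}^{n+m}$, not determinism.)

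The paper avoids this difficulty by switching to the alternative representation $I(\hat{X}^k\to X^k)=H(\hat{X}^k\|X^{k-1})-H(\hat{X}^k\mid X^k)$. Under the product channel one has exactly $H(\hat{X}^{n+m}\mid X^{n+m})=H(\hat{X}^n\mid X^n)+H(\hat{X}_{n+1}^{n+m}\mid X_{n+1}^{n+m})$, while for the causally conditioned entropy a single application of ``conditioning reduces entropy'' gives
\[
H(\hat{X}^{n+m}\|X^{n+m-1})\le H(\hat{X}^n\|X^{n-1})+H(\hat{X}_{n+1}^{n+m}\|X_{n+1}^{n+m-1}),
\]
since dropping $(X^n,\hat{X}^n)$ from the conditioning in each term $H(\hat{X}_i\mid\hat{X}^{i-1},X^{i-1})$, $i>n$, can only increase it. Subtracting yields the desired block inequality without any Markov bookkeeping on partial prefixes. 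You can repair your argument by adopting this decomposition in place of the term-by-term mutual-information comparison.
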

Note, that a sequence $\{a_n\}$ is called sub-additive if for all $m,l$,
\begin{align}
(m+l)a_{m+l}\leq ma_m+la_l.\nonumber
\end{align}
The proof for Lemma \ref{LemSubAdd} is given in App. \ref{Appsubadd}.

We now state a lemma for the achievability of $R^{(I)}(D)$.
\begin{lemma}[Achievability of $R^{(I)}(D)$] \label{LemAch}
The mathematical expression for the rate distortion feed-forward $R^{(I)}(D)$ is achievable, and thus upper bounds $R(D)$.
\end{lemma}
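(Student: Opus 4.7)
The plan is to combine the finite-block-length achievability of Theorem \ref{ThRD} with the existence of the limit guaranteed by Lemma \ref{LemSubAdd}. Since $R^{(I)}(D)=\lim_{n\to\infty}R_n(D)=\inf_n R_n(D)$, it suffices to show that for every $\eta>0$ the pair $(R^{(I)}(D)+\eta,D)$ is achievable; the definition of $R(D)$ as the infimum of achievable rates then yields $R(D)\leq R^{(I)}(D)+\eta$, and letting $\eta\downarrow 0$ gives the desired upper bound $R(D)\leq R^{(I)}(D)$.

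Fix $\eta>0$. The first step is to pin down a single finite block length that approximates the infimum: by Lemma \ref{LemSubAdd} there exists $n_0$ with $R_{n_0}(D)<R^{(I)}(D)+\eta/2$. The second step invokes Theorem \ref{ThRD} at this fixed $n_0$: for every $\delta\in(0,\eta/2]$ and every sufficiently large $L$, Theorem \ref{ThRD} provides a codebook of trees of length $L$ with at most $2^{L(R_{n_0}(D)+\delta)}\leq 2^{L(R^{(I)}(D)+\eta)}$ code trees whose expected per-letter distortion is at most $D+\delta$. The third step is a diagonal argument: pick $\delta_k\downarrow 0$ together with a strictly increasing $L_k\to\infty$ such that each $L_k$ is large enough for Theorem \ref{ThRD} to apply at tolerance $\delta_k$. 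This yields a sequence of $(L_k,\,2^{L_k(R^{(I)}(D)+\eta)},\,D)$ codes with $\ex{}{d(X^{L_k},\hat{X}^{L_k})}\leq D+\delta_k \to D$, establishing that $(R^{(I)}(D)+\eta,D)$ is achievable.

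I do not anticipate a genuine obstacle, since Theorem \ref{ThRD} already carries the burden of constructing codes for each fixed $n_0$ and Lemma \ref{LemSubAdd} handles convergence of the rate sequence. The only bookkeeping point is that, should achievability be interpreted as requiring a code at every block length rather than along a subsequence, intermediate lengths between consecutive $L_k$'s can be filled by padding a length-$L_k$ code with arbitrary symbols, which affects the rate by a quantity that vanishes as $L_k\to\infty$ and does not disturb the asymptotic distortion bound. Chaining the three steps across a sequence $\eta\downarrow 0$ then yields $R(D)\leq R^{(I)}(D)$, completing the proof of the lemma.
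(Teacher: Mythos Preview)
Your proposal is correct and follows essentially the same approach as the paper: combine Theorem~\ref{ThRD} (achievability of each $R_n(D)$) with Lemma~\ref{LemSubAdd} ($\lim_n R_n(D)=\inf_n R_n(D)$) to conclude that $R^{(I)}(D)$ is achievable. The paper's own proof is a terse two-sentence version of exactly this argument, whereas you spell out the $\eta$--$\delta$ bookkeeping and the diagonal extraction of a code sequence explicitly.
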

\begin{proof}
We showed in Theorem \ref{ThRD} that for any $n$, $R_n(D)$ is achievable. Further, in Lemma \ref{LemSubAdd} we show that the limit exists and equal to the infimum, and hence is achievable too. Therefore, we conclude that the mathematical expression $R^{(I)}(D)$ is achievable, and forms an upper bound to the operational description $R(D)$.
\end{proof}

To show that $R^{(I)}(D)$ is a lower bound to the rate distortion function, we provide the following lemma
\begin{lemma}[Converse] \label{LemConverse}
the mathematical expression $R^{(I)}(D)$ is a lower bound to the operational rate distortion function.
\end{lemma}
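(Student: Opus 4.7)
The plan is a standard Fano-style converse, adapted to the feed-forward setting by replacing mutual information with directed information. Fix any achievable pair $(R,D)$ together with a corresponding sequence of $(n,2^{nR},D)$ codes, and write $D_n=\ex{}{d(X^n,\hat{X}^n)}$, so that $\limsup_n D_n\leq D$. Since the message $T=f(X^n)$ takes at most $2^{nR}$ values,
\begin{align}
nR\geq H(T)\geq I(T;X^n)=\sum_{i=1}^n I(T;X_i|X^{i-1}).\nonumber
\end{align}

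The key step is to convert the right-hand side into directed information. By the decoder rule (\ref{MapEndDec}), $\hat{X}_j=g_j(T,X^{j-1})$ for every $j\leq i$, so conditional on $X^{i-1}$ the vector $\hat{X}^i$ is a deterministic function of $T$; equivalently, $X_i\to T\to \hat{X}^i$ forms a Markov chain given $X^{i-1}$. The conditional data-processing inequality therefore yields $I(T;X_i|X^{i-1})\geq I(\hat{X}^i;X_i|X^{i-1})$, and summing on $i$ gives
\begin{align}
nR\geq\sum_{i=1}^n I(\hat{X}^i;X_i|X^{i-1})=I(\hat{X}^n\to X^n).\nonumber
\end{align}
The conditional law $p(\hat{x}^n|x^n)$ induced by the encoder-decoder pair trivially satisfies $\ex{}{d(X^n,\hat{X}^n)}=D_n$, so by the very definition of $R_n(D_n)$ as the minimum of $\tfrac{1}{n}I(\hat{X}^n\to X^n)$ over all such conditional distributions, $R\geq R_n(D_n)$.

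To conclude, I pass to the limit. For any $\epsilon>0$ one has $D_n\leq D+\epsilon$ for all $n$ sufficiently large, and since $R_n(\cdot)$ is non-increasing in its distortion argument (enlarging the constraint enlarges the feasible set), $R\geq R_n(D_n)\geq R_n(D+\epsilon)$. Sending $n\to\infty$ and invoking \lemref{LemSubAdd} yields $R\geq R^{(I)}(D+\epsilon)$, and then letting $\epsilon\downarrow 0$, together with right-continuity of $R^{(I)}(\cdot)$ (inherited from the convexity of each $R_n$ in $D$, hence of the limit), gives $R\geq R^{(I)}(D)$. Since this holds for every achievable rate, $R(D)\geq R^{(I)}(D)$.

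The main obstacle I anticipate is the conditional data-processing step: one must check carefully that the causal decoder structure makes $X_i\to T\to \hat{X}^i$ Markov given $X^{i-1}$, so that $I(T;X_i|X^{i-1})$ is lower bounded by $I(\hat{X}^i;X_i|X^{i-1})$ --- precisely the term that sums to the directed information --- rather than by $I(\hat{X}^n;X_i|X^{i-1})$, which would not. This is the sole place in which the feed-forward delay $s=1$ is actually used, and the analysis would go through verbatim for $s>1$ with $X^{i-1}$ replaced by $X^{i-s}$. The right-continuity of $R^{(I)}$ is a minor secondary concern that could alternatively be sidestepped by a diagonal choice $\epsilon_n\downarrow 0$ with $D_n\leq D+\epsilon_n$ and applying \lemref{LemSubAdd} directly along this sequence.
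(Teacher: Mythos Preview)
Your proof is correct and follows essentially the same route as the paper: both start from $nR\geq H(T)\geq I(T;X^n)$, use the chain rule, exploit that $\hat{X}^i$ is a deterministic function of $(T,X^{i-1})$ to pass from $I(T;X_i|X^{i-1})$ to $I(\hat{X}^i;X_i|X^{i-1})$, and sum to the directed information. The paper phrases the key step as $H(X_i|X^{i-1},T)=H(X_i|X^{i-1},T,\hat{X}^i)\leq H(X_i|X^{i-1},\hat{X}^i)$ rather than as a conditional data-processing inequality, but these are equivalent; your treatment of the $n\to\infty$ limit via $R_n(D+\epsilon)$ and right-continuity is actually more careful than the paper's, which simply writes ``taking $n$ to infinity'' without further comment.
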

For the completeness of the paper, we provide the proof of Lemma \ref{LemConverse}, this in App. \ref{Appconv}. However, similar proof was presented by Venkataramana and Pradhan in \cite{VenPra}, and their expressions involved limit in probability of the entropy and directed information as described in Section \ref{SecIntro}.

\begin{proof}[Proof of Theorem \ref{ThOperat}]
Combining Lemmas \ref{LemAch}, \ref{LemConverse} provides us with the proof for our fundamental theorem, stated in Section \ref{SecProbRes}, i.e., the operational rate distortion function $R(D)$ is equal to the mathematical one, $R^{(I)}(D)$.
\end{proof}
\end{section}

\begin{section}{Geometric programming form to $R_n(D)$ (Theorem \ref{ThRdMaxgp})}\label{SecalgGp}
In this section we show that the $n$th order rate distortion function with feed-forward $R_n(D)$ can be given as a maximization problem, written in a standard form of geometric programming. For this purpose we first state the following theorem.
\begin{theorem}\label{ThRdMax}
The $n$th order rate distortion function, $R_n(D)$, can be written as the following maximization problem
\begin{align}
R_n(D)=\max_{\lambda\geq0,\gamma(x^n)}\frac{1}{n}
    \left(-\lambda D+\sum_{x^n}p(x^n)\log\gamma(x^n)\right),\label{lowb4}
\end{align}
where, for some causal conditioned probability $p'(x^n||\hat{x}^n)$, $\gamma(x^n)$ satisfies the inequality constraint
\begin{align}
p(x^n)\gamma(x^n)2^{-\lambda d(x^n,\hat{x}^n)}\leq p'(x^n||\hat{x}^n).\label{cons3}
\end{align}
\end{theorem}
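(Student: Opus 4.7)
The plan is to derive the representation by dualizing the convex program $\min I(\hat{X}^n\rightarrow X^n)$ subject to $\ex{}{d(X^n,\hat{X}^n)}\leq D$, combined with a variational identity for the directed information.

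First, I would establish the variational identity
$$I(\hat{X}^n\rightarrow X^n)=\max_{p'(x^n||\hat{x}^n)}\sum_{x^n,\hat{x}^n}p(x^n,\hat{x}^n)\log\frac{p'(x^n||\hat{x}^n)}{p(x^n)},$$
where the maximum is taken over causal conditionals $p'(x^n||\hat{x}^n)=\prod_{i=1}^n p'(x_i|x^{i-1},\hat{x}^i)$ and is attained at the induced $p(x^n||\hat{x}^n)$. This is a direct consequence of the chain rule $p(x^n,\hat{x}^n)=p(\hat{x}^n||x^{n-1})p(x^n||\hat{x}^n)$ together with $D(p(x^n,\hat{x}^n)\,||\,p(\hat{x}^n||x^{n-1})p'(x^n||\hat{x}^n))\geq 0$; since the product of two causal conditionals sums to one, it is a bona fide joint distribution and Gibbs' inequality applies.

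Second, I would attach a Lagrange multiplier to the distortion constraint. Because $p(\hat{x}^n|x^n)\mapsto I(\hat{X}^n\rightarrow X^n)$ is convex for fixed $p(x^n)$ (being a supremum of affine functionals by the identity above), the constraint is linear, and Slater's condition holds whenever $R_n(D)<\infty$, strong duality yields
$$nR_n(D)=\max_{\lambda\geq 0}\min_{p(\hat{x}^n|x^n)}\Bigl[I(\hat{X}^n\rightarrow X^n)+\lambda\bigl(\ex{}{d(X^n,\hat{X}^n)}-D\bigr)\Bigr].$$
Inserting the variational form of $I$ inside and exchanging the innermost $\min$ and $\max$ via Sion's theorem (the objective is convex in $p(\hat{x}^n|x^n)$, concave in $p'$, on convex compact domains once $p'$ is restricted to be positive on the support of $p(x^n)p(\hat{x}^n|x^n)$) gives
$$nR_n(D)=\max_{\lambda\geq 0,\,p'}\left\{\min_{p(\hat{x}^n|x^n)}\sum_{x^n,\hat{x}^n}p(x^n)p(\hat{x}^n|x^n)\left[\log\frac{p'(x^n||\hat{x}^n)}{p(x^n)}+\lambda d(x^n,\hat{x}^n)\right]-\lambda D\right\}.$$

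Third, I would evaluate the inner linear program. For each fixed $x^n$, $p(\hat{x}^n|x^n)$ is a probability vector on $\hat{x}^n$, so the minimum is attained by putting all mass on an argmin of the bracketed quantity, producing
$$nR_n(D)=\max_{\lambda\geq 0,\,p'}\left\{\sum_{x^n}p(x^n)\min_{\hat{x}^n}\left[\log\frac{p'(x^n||\hat{x}^n)}{p(x^n)}+\lambda d(x^n,\hat{x}^n)\right]-\lambda D\right\}.$$
Introducing the slack variable $\log\gamma(x^n)$ subject to $\log\gamma(x^n)\leq\log[p'(x^n||\hat{x}^n)/p(x^n)]+\lambda d(x^n,\hat{x}^n)$ for every $\hat{x}^n$ — algebraically equivalent to $p(x^n)\gamma(x^n)2^{-\lambda d(x^n,\hat{x}^n)}\leq p'(x^n||\hat{x}^n)$ — and then maximizing over $\gamma$ reproduces this inner minimum exactly. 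Dividing by $n$ finally yields the theorem.

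The main obstacle is the justification of the two convex-analytic reductions: Lagrangian strong duality and the $\min$--$\max$ exchange. Convexity of $I(\hat{X}^n\rightarrow X^n)$ in $p(\hat{x}^n|x^n)$ follows directly from the variational representation above as a pointwise supremum of linear functionals; Slater's condition is automatic once $D$ strictly exceeds the minimum achievable distortion, which is implicit in the hypothesis $R_n(D)<\infty$; and Sion's theorem applies on finite alphabets once $p'$ is restricted to a compact subset on which $\log p'$ is continuous.
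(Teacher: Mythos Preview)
Your proof is correct and takes a genuinely different route from both proofs the paper supplies in its appendix. The paper first gives a Berger-style argument: for any feasible $(\lambda,\gamma)$ it bounds $I_{FF}(r,q)+\lambda D-\sum p(x^n)\log\gamma(x^n)\geq 0$ via the elementary inequality $\log(1/y)\geq 1-y$, then exhibits the achieving $\gamma$ by plugging in the optimal $q^*$ and invoking the fixed-point relation $r^*(\hat{x}^n|x^n)\propto q^*(\hat{x}^n||x^{n-1})2^{-\lambda d}$ together with the uniqueness of the Blahut--Arimoto fixed point. Its second proof writes the full Lagrangian (with multipliers for normalization and nonnegativity of $r$), differentiates in $r$, and reads off the constraint \eqref{cons3} from the stationarity condition.

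Your argument instead rests on the variational identity $I(\hat{X}^n\to X^n)=\max_{p'}\sum p(x^n,\hat{x}^n)\log\bigl(p'(x^n||\hat{x}^n)/p(x^n)\bigr)$, which the paper never isolates, and then performs a clean double dualization (Lagrange for the distortion constraint, Sion for the $\min_r\max_{p'}$ swap), finishing with a linear-program evaluation and a slack variable. The upshot is that you avoid any appeal to the Blahut--Arimoto iteration or its uniqueness lemma, and you make transparent why the constraint has exactly the form \eqref{cons3}: it is simply the epigraph reformulation of $\min_{\hat{x}^n}\bigl[\log(p'/p)+\lambda d\bigr]$. The price is the two convex-analytic black boxes (strong duality and Sion), whereas the paper's first proof needs nothing beyond Gibbs' inequality; your handling of the boundary $p'=0$ via upper semicontinuity of $\log$ is adequate on finite alphabets.
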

In App. \ref{Apprdmax} we provide two proofs for Theorem \ref{ThRdMax}; the first is similar to Berger's proof in \cite{Berger} for the regular rate distortion function based on the inequality $\log(y)\geq1-\frac{1}{y}$, and the second uses the Lagrange duality as presented in \cite{Boyd} and \cite{ChiBoyd} that transforms a minimization problem to a maximization one.. App. \ref{Apprdmax} also includes the connection between the rate distortion function and the parameter $\lambda$, which states that the slope of $R_n(D)$ in point $D$ is $-\frac{\lambda}{n}$.

\begin{proof}[Proof of Theorem \ref{ThRdMaxgp}]
Considering the theorem above, our interest now is to adjust the constraints in order to obtain a geometric programming form. We note that the optimization problem in (\ref{lowb4}) does not change if we maximize over $p'(x^n||\hat{x}^n)$ as well, and the constraint (\ref{cons3}) is no longer for some $p'$, i.e.,
\begin{align}
R_n(D)=\max_{\lambda\geq0,\gamma(x^n),p'(x^n||\hat{x}^n)}\frac{1}{n}
    \left(-\lambda D+\sum_{x^n}p(x^n)\log\gamma(x^n)\right),\label{lowb5}
\end{align}
where $\gamma(x^n),\ p'(x^n||\hat{x}^n)$ satisfy the inequality constraint
\begin{align}
p(x^n)\gamma(x^n)2^{-\lambda d(x^n,\hat{x}^n)}\leq p'(x^n||\hat{x}^n).\label{cons4}
\end{align}
The above statement is true since, on the one hand, the maximization in (\ref{lowb4}) increases upon maximizing over another variable,  $p'(x^n||\hat{x}^n)$, as in (\ref{lowb5}); on the other hand, the variable $\gamma^*(x^n),\ p'^*(x^n||\hat{x}^n)$ that achieves (\ref{lowb5}) satisfy the constraint (\ref{cons3}) in Theorem \ref{ThRdMax}, and hence the maximization problem in (\ref{lowb5}) cannot be greater than the one in (\ref{lowb4}).

To obtain a geometric programming standard form we transform the constraint in (\ref{cons4}), such that
\begin{align}
p(x^n)\gamma(x^n)2^{-\lambda d(x^n,\hat{x}^n)}p'(x^n||\hat{x}^n)^{-1}\leq1.\nonumber
\end{align}
Taking the $\log$ of both sides, we obtain
\begin{align}
&\log(p(x^n))+\log(\gamma(x^n))-\lambda d(x^n,\hat{x}^n)-\sum_{i=1}^n \log{p'(x^n||\hat{x}^n)}\leq 0.\nonumber
\end{align}

Note that maximizing over $p'(x^n||\hat{x}^n)$ is the same as maximizing over its products $\{p'(x_i|x^{i-1},\hat{x}^i)\}_{i=1}^n$\cite[Lemma 3]{PermuterWeissmanGoldsmith}.
Therefore, we can conclude that the rate distortion with feed-forward $R_n(D)$ can be given as a geometric programming maximization form,
\begin{align}
R_n(D)=\max_{\lambda,\gamma(x^n),\{p'(x_i|x^{i-1},\hat{x}^i)\}_{i=1}^n}\frac{1}{n}
    \left(-\lambda D+\sum_{x^n}p(x^n)\log\gamma(x^n)\right),\nonumber
\end{align}
subject to
\begin{align}
&\log(p(x^n))+\log(\gamma(x^n))-\lambda d(x^n,\hat{x}^n)-\sum_{i=1}^n \log{p'(x_i|x^{i-1},\hat{x}^i)}\leq 0,\ \ \forall\ x^n,\hat{x}^n,\nonumber\\
&\sum_{x_i}p'(x_i|x^{i-1},\hat{x}^i)=1,\ \ \forall\ i,\forall\ x^{i-1},\hat{x}^{i-1},\nonumber\\
&\lambda\geq0.\nonumber
\end{align}
Hence, we obtain a standard form of geometrical programming. This GP problem can be solved using standard convex optimization tools.
\end{proof}

\end{section}

\begin{section}{Extension of the BAA for rate distortion with feed-forward}\label{secalgs}
In this section we describe an algorithm for calculating $R_n(D)$, where
\begin{align}
R_n(D)=\frac{1}{n}\min_{r(\hat{x}^n|x^n):\ex{}{d(X^n,\hat{X}^n)}\leq D}I(\hat{X}^n\rightarrow X^n),\label{Rlim3}
\end{align}
using the alternating minimization procedure. This method was first used by Blahut and Arimoto \cite{Bla}, \cite{Ari} to obtain a numerical solution for the i.i.d. source rate distortion and for the memoryless channel capacity. Recently, in \cite{NaissPermuter1} we extended this method for finding the global maximum of the following optimization problem-
\begin{align}
C_n=\frac{1}{n}\max_{p(x^n||y^{n-1})}I(X^n\rightarrow Y^n),\nonumber
\end{align}
and we apply similar methods here.

Before we describe the algorithm, let us denote by $r=r(\hat{x}^n|x^n),\ q=q(\hat{x}^n||x^{n-1})$ the PMFs that are participating in the minimization. Further, let us consider the double optimization problem given by
\begin{align}
R_n(D)=\frac{1}{n}\left[-\lambda D+\min_{r,q}K(r,q)\right],\label{Rlimdouble}
\end{align}
where
\begin{align}
K(r,q)=I_{FF}(r,q)+\lambda\ex{r}{d(X^n,\hat{X}^n)},\nonumber
\end{align}
and $I_{FF}(r,q)$ is the directed information that can be written as
\begin{align}
I_{FF}(r,q)=I(\hat{X}^n\rightarrow X^n)=\sum_{\hat{x}^n,x^n}{p(x^n)r(\hat{x}^n|x^n)\log{\frac{r(\hat{x}^n|x^n)}{q(\hat{x}^n||x^{n-1})}}}.\label{directed4}
\end{align}
In Section \ref{SecDer} we show that the double optimization problem given in (\ref{Rlimdouble}) is equal to the one given in (\ref{Rlim3}).
Equations (\ref{Rlimdouble}), (\ref{directed4}) allow us to apply the alternating minimization procedure.

\begin{subsection}{Description of the algorithm}
In Algorithm \ref{algs} we present the steps required to minimize the directed information where the input PMF $p(x^n)$ is fixed.
\begin{algorithm}
\caption{Iterative algorithm for calculating $R_n(D)$, where $p(x^n)$ is fixed.}\label{algs}
\begin{itemize}
\item[(a)] Fix a value of $\lambda\geq 0$ that determines a point on the $R_n(D)$ curve.\\
\item[(b)] Start from a random causally conditioned point $q^0(\hat{x}^n||x^{n-1})$. Usually we start from a uniform distribution, i.e., $q^0(\hat{x}^n||x^{n-1})=2^{-n}$ for every $(x^n,\hat{x}^n)$.\\
\item[(c)] Set $k=1$. \\
\item[(d)] Compute $r^{k}(\hat{x}^n|x^n)$ using the formula
\begin{equation}
r^{k}(\hat{x}^n|x^n)=\frac{q^{k-1}(\hat{x}^n||x^{n-1})2^{-\lambda d(x^n,\hat{x}^n)}}{\sum_{\hat{x}^n}{q^{k-1}(\hat{x}^n||x^{n-1})2^{-\lambda d(x^n,\hat{x}^n)}}}.\nonumber
\end{equation}
\item[(e)] Calculate the joint probability $p(x^n,\hat{x}^n)=p(x^n)r^{k}(\hat{x}^n|x^n)$, and deduce the causal conditioned PMF $q^{k}(\hat{x}^n||x^{n-1})$ as in (\ref{CausalPr}).\\
\item[(f)] Calculate the parameter
\begin{align}
c^k_{\hat{x}^n,x^{n-1}}=\frac{q^{k}(\hat{x}^n||x^{n-1})}{q^{k-1}(\hat{x}^n||x^{n-1})}.\nonumber
\end{align}
\item[(g)] Calculate
\begin{align}
F=\log{\max_{\hat{x}^n,x^{n-1}}c^k_{\hat{x}^n,x^{n-1}}}-\sum_{x^n,\hat{x}^n}p(x^n)r^k(\hat{x}^n|x^n)\log{c^k_{\hat{x}^n,x^{n-1}}}.\nonumber
\end{align}
\item[(h)] If $F\geq\epsilon$, set $k:=k+1$, and return to (d).\\
\item[(i)] The rate distortion function, with distortion $D_k=\sum_{\hat{x}^n,x^n}p(x^n)r^k(\hat{x}^n|x^n)d(x^n,\hat{x}^n)$, is
\begin{align}
R^k_n(D_k)=\frac{1}{n}\sum_{x^n,\hat{x}^n}{p(x^n)r^k(\hat{x}^n|x^n)\log{\frac{r^k(\hat{x}^n|x^n)}{q^k(\hat{x}^n||x^{n-1})}}}.\nonumber
\end{align}
\end{itemize}
\end{algorithm}
The parameter $\lambda$ is used in the Lagrangian with which we optimize the directed information. The value of $D_k$ and hence $R_n(D_k)$ depends on $\lambda$; thus choosing $\lambda$ appropriately  sweeps out the $R_n(D_k)$ curve. The algorithm stops when $F<\epsilon$. In App. \ref{Bounds} we provide upper and lower bounds, used show that if $F<\epsilon$, we ensure that $|R^k_n(D_k)-R_n(D_k)|<\epsilon$.

Now, let us present a special case and a few extensions for Algorithm \ref{algs}.
\begin{itemize}
\item[(1)] \textit{Regular BAA, i.e., the delay $s=n$}. For delay $s=n$, the algorithm suggested here meets the original BAA, where instead of step (d) we have
    \begin{align}
    r^k(\hat{x}^n|x^n)=\frac{q^{k-1}(\hat{x}^n)2^{-\lambda d(x^n,\hat{x}^n)}}{\sum_{\hat{x}^n}{q^{k-1}(\hat{x}^n)2^{-\lambda d(x^n,\hat{x}^n)}}},\nonumber
    \end{align}
    and in step (e), $q^k(\hat{x}^n)$ corresponds to the joint probability $p(x^n)r^k(\hat{x}^n|x^n)$ as well.
    Moreover, the expression for $c^k_{\hat{x}^n,x^{n-1}}$ is reduced to
    \begin{align}
    c^k_{\hat{x}^n}=\frac{q^{k}(\hat{x}^n)}{q^{k-1}(\hat{x}^n)},\nonumber
    \end{align}
    and the termination of the algorithm in step (g) is defined by
    \begin{align}
    F=\log{\max_{\hat{x}^n}c^k_{\hat{x}^n}}-\sum_{x^n,\hat{x}^n}p(x^n)r^k(\hat{x}^n|x^n)\log{c^k_{\hat{x}^n}}\leq\epsilon,\nonumber
    \end{align}
    as in the regular Blahut-Arimoto algorithm \cite{Bla}.

\item[(2)] \textit{Function of the feed-forward with general delay $s$}. We present a generalization of the algorithm, where the feed-forward is a deterministic function of the source with some delay $s$, $z^{i-s}=f(x^{i-s})$. In that case, step (d) is replaced by
    \begin{align}
    r^k(\hat{x}^n|x^n)=\frac{q^{k-1}(\hat{x}^n||z^{n-s})2^{-\lambda d(x^n,\hat{x}^n)}}{\sum_{\hat{x}^n}q^{k-1}(\hat{x}^n||z^{n-s})2^{-\lambda d(x^n,\hat{x}^n)}},\nonumber
    \end{align}
    and in step (e) we have
    \begin{align}
    q^k(\hat{x}^n||z^{n-s})=\prod_{i=1}^{n}p(\hat{x}_i|\hat{x}^{i-1},z^{i-s}),\nonumber
    \end{align}
    where we calculate $p(\hat{x}_i|\hat{x}^{i-1},z^{i-s})$ from the joint distribution $p(x^n,\hat{x}^n)=p(x^n)r^k(\hat{x}^n|x^n)$.
    The algorithm is terminated in the same way, where
    \begin{align}
    c^k_{\hat{x}^n,z^{n-s}}=\frac{q^{k}(\hat{x}^n||z^{n-s})}{q^{k-1}(\hat{x}^n||z^{n-s})}.\nonumber
    \end{align}
\end{itemize}
\end{subsection}

\begin{subsection}{Complexity and Memory needed}
Computation complexity and memory needed for the algorithm above is presented in Table \ref{ComplexMemory2}.
\begin{table}[h!]
\caption{Memory and operations needed extended BAA for source coding with feed-forward.} 
\centering
\begin{tabular}{|c ||c | c|}
\hline
& Operation & Memory\\
\hline
&&\\
$\min_{p(\hat{x}^n|x^n):\ex{}{d(X^n,\hat{X}^n)}\leq D}\left(\frac{1}{n}I(\hat{X}^n;X^n)\right)$, regular BAA & $O({(|\cal{X}||\hat{\cal{X}}|)}^{n})$ & ${(|\cal{X}||\hat{\cal{X}}|)}^n+{|\cal{X}|}^n+{|\hat{\cal{X}|}}^n$\\
&&\\
\hline
&&\\
$\min_{p(\hat{x}^n|x^n):\ex{}{d(X^n,\hat{X}^n)}\leq D}\left(\frac{1}{n}I(\hat{X}^n\rightarrow X^n)\right)$, Alg. \ref{algs} & $O({(|\cal{X}||\hat{\cal{X}}|)}^{n})$ & $2{(|\cal{X}||\hat{\cal{X}}|)}^n+{|\cal{X}|}^n$\\
&&\\
\hline
\end{tabular}
\label{ComplexMemory2}
\end{table}
\end{subsection}
\end{section}

\begin{section}{DERIVATION OF ALGORITHM \ref{algs}.}\label{SecDer}
In this section, we first describe the alternating minimization procedure, and then (as given in Alg. \ref{algs}) prove its convergence to the global minimum given by
\begin{align}
R_n(D)=\frac{1}{n}\min_{r(\hat{x}^n||x^{n-1}):\ex{}{d(X^n,\hat{X}^n)}\leq D}I(\hat{X}^n\rightarrow X^n).\nonumber
\end{align}
Throughout this section, note that the input probability $p(x^n)$ is fixed in all minimization calculations. Further, we denote by $I_{FF}(r,q)$ the directed information, given by
\begin{align}
I_{FF}(r,q)=\sum_{\hat{x}^n,x^n}{p(x^n)r(\hat{x}^n|x^n)\log{\frac{r(\hat{x}^n|x^n)}{q(\hat{x}^n||x^{n-1})}}}.\nonumber
\end{align}

The alternating maximization procedure is described in \cite{NaissPermuter1} by two maximization functions; $c_2(u_1)\in A_2$  which is the point that achieves $\sup_{u_2\in A_2}f(u_1,u_2)$, and $c_1(u_2)\in A_1$ which is the one that achieves $\sup_{u_1\in A_1}f(u_1,u_2)$. Although in this paper we wish to solve a minimization problem, its negative can be used in the alternating maximization procedure. We now state the alternating maximization procedure lemma.
\begin{lemma}[Lemmas 9.4, 9.5 in \cite{Reymund}, "Convergence of the alternating maximization procedure"]\label{lemconv}.
Let $f(u_1,u_2)$ be a real, concave, bounded from above function, that is continuous and has continuous partial derivatives, and let the sets $A_1,A_2,$ over which we maximize be convex.
Further, assume that $c_2(u_1)\in A_2$ and $c_1(u_2)\in A_1$ for all $u_1\in A_1,\ u_2\in A_2$. Let us define an iteration as the following equation
\begin{align}
(u_1^k,u_2^k)=\left(c_1(u_2^{k-1}),c_2(c_1(u_2^{k-1}))\right),\nonumber
\end{align}
and in each iteration we consider the value $f^k=f(u_1^k,u_2^k)$. Under these conditions, $\lim_{k\rightarrow\infty}f^k=f^{*}$, where $f^*$ is the solution to the optimization problem.
\end{lemma}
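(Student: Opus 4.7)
The plan is to prove $f^k \to f^*$ in three steps: show monotonicity of the sequence $\{f^k\}$, identify cluster points of the iterates as fixed points of the alternating scheme, and then exploit joint concavity to conclude global optimality. First I would verify that the scheme is non-decreasing in $f$. Writing $u_1^k = c_1(u_2^{k-1})$ and $u_2^k = c_2(u_1^k)$, the defining maximization properties of $c_1$ and $c_2$ give $f(u_1^k, u_2^{k-1}) \geq f(u_1^{k-1}, u_2^{k-1}) = f^{k-1}$ and $f^k = f(u_1^k, u_2^k) \geq f(u_1^k, u_2^{k-1})$, so $f^k \geq f^{k-1}$. Since $f$ is bounded above, the monotone sequence $\{f^k\}$ converges to some limit $\bar f \leq f^*$.

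Second, I would pass to a limit of the iterates themselves. In our rate-distortion application $A_1$ and $A_2$ lie in probability simplices and hence are compact, so by Bolzano-Weierstrass the bounded sequence $\{(u_1^k, u_2^k)\}$ has a subsequence converging to some $(\bar u_1, \bar u_2) \in A_1 \times A_2$. Using continuity of $f$ and of the $\arg\max$ maps $c_1, c_2$ (once one verifies unique coordinate-wise maximizers), passage to the limit in the recursion yields the fixed-point relations $\bar u_1 = c_1(\bar u_2)$, $\bar u_2 = c_2(\bar u_1)$, together with $f(\bar u_1, \bar u_2) = \bar f$.

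Third, I would use joint concavity to show that every such fixed point achieves the global maximum. The coordinate-wise optimality of $\bar u_1$ and $\bar u_2$ produces the variational inequalities $\nabla_1 f(\bar u_1, \bar u_2) \cdot (u_1 - \bar u_1) \leq 0$ for all $u_1 \in A_1$ and $\nabla_2 f(\bar u_1, \bar u_2) \cdot (u_2 - \bar u_2) \leq 0$ for all $u_2 \in A_2$. Concavity of $f$ on the convex set $A_1 \times A_2$ then yields $f(u_1, u_2) \leq f(\bar u_1, \bar u_2) + \nabla_1 f(\bar u_1, \bar u_2) \cdot (u_1 - \bar u_1) + \nabla_2 f(\bar u_1, \bar u_2) \cdot (u_2 - \bar u_2) \leq f(\bar u_1, \bar u_2)$ for every feasible pair, so $\bar f = f(\bar u_1, \bar u_2) = f^*$ and the full sequence converges because it is monotone.

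The main obstacle I anticipate is the fixed-point step: ensuring that cluster points of the iterates satisfy both fixed-point relations simultaneously requires continuous dependence of the $\arg\max$ on the fixed variable, which can fail without additional regularity. In the setting of interest the objective $K(r,q)$ is strictly convex in each coordinate with the other fixed, which makes $c_1$ and $c_2$ single-valued and continuous, so the direct passage to the limit works; in the fully general case one would instead invoke Berge's maximum theorem or argue through the gradient $\nabla f$ evaluated along the convergent subsequence.
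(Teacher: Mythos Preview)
The paper does not supply its own proof of this lemma: it is stated with an explicit attribution to Lemmas~9.4 and~9.5 of Yeung's textbook~\cite{Reymund} and then invoked as a black box in the proof of Theorem~\ref{Ths}. So there is no in-paper argument to compare your proposal against.

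That said, your three-step outline (monotonicity, extraction of a cluster point that is a fixed point, first-order conditions plus joint concavity to upgrade coordinate optimality to global optimality) is the standard route for results of this type and is essentially how Yeung proceeds. You correctly flag the two assumptions that the lemma as stated does not guarantee: compactness of $A_1,A_2$ (needed for Bolzano--Weierstrass) and single-valuedness/continuity of $c_1,c_2$ (needed to pass the recursion to the limit). In the paper's application both hold---$A_1,A_2$ are subsets of finite-dimensional simplices, and Lemmas~\ref{rfixs} and~\ref{qfixs} give explicit unique coordinate maximizers---so your argument goes through for the case that matters here. For the lemma exactly as written, without compactness one can instead work directly with the values: show that the increments $f^k-f^{k-1}\to 0$, then use concavity and the first-order conditions along the iterates to bound $f^*-f^k$ by a quantity controlled by those increments; this avoids the need to locate a limit point of the iterates themselves.
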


The rate-distortion function with feed-forward can be, as in \cite{Bla}, carried out parametrically in terms of parameter $\lambda$, which is introduced as a Lagrange multiplier. In App. \ref{Bounds} we show that this parameter defines the slope of the curve $R_n(D)$ at the point it parameterizes, and the slope is given by $\frac{-\lambda}{n}$. We now write the following parametric expression for $R_n(D)$.
\begin{align}
R_n(D)=\frac{1}{n}\min_{r(\hat{x}^n|x^{n})}\left[I(\hat{X}^n\rightarrow X^n)+\lambda\left(\ex{r}{d(X^n,\hat{X}^n)}-D\right)\right],\label{RD2}
\end{align}
where $D$ is the distortion at the point $r^*(\hat{x}^n|x^n)$ that achieves $R_n(D)$. Here, the value of $D$ is not an input to the minimization, but is determined by the parameter $\lambda$.

Note that the directed information is a function of the joint distribution $p(x^n)r(\hat{x}^n|x^n)$. Since the source distribution is given, the directed information $I_{FF}$ is determined by $r=r(\hat{x}^n|x^n)$ alone. Let us define by $q=q(\hat{x}^n||x^{n-1})$ the causal conditioning probability. Now, let us define the functional
\begin{align}
K(r,q)=I_{FF}(r,q)+\lambda\ex{r}{d(X^n,\hat{X}^n)}.\label{Kfunc1}
\end{align}
From (\ref{RD2}) and (\ref{Kfunc1}) we can see, that $R_n(D)$ can be written as
\begin{align}
R_n(D)=\frac{1}{n}\left[-\lambda D+\min_{r}K(r,q)\right],\nonumber
\end{align}
where $q(\hat{x}^{n-1}||x^n)$ corresponds to the joint distribution $p(x^n)r(\hat{x}^n|x^n)$, and $D$ is the distortion at the point $r^*(\hat{x}^n|x^n)$ that achieves $R_n(D)$.

In this section, we show that we can use the alternating minimization procedure for computing $R_n(D)$. For this purpose, we present several lemmas that assist in proving our main goal. In Lemma \ref{isconvex} we show that the expression we minimize satisfies the conditions in Lemma \ref{lemconv}. In Lemma \ref{whyalts} we show that we are allowed to minimize the functional $K$ over $r(\hat{x}^n|x^n)$ and $q(\hat{x}^n||x^{n-1})$ together, rather than over $r(\hat{x}^n|x^n)$ alone, and thus use the alternating minimization procedure to achieve the optimum value. Lemma \ref{rfixs} is a supplementary claim that helps us to prove Lemma \ref{whyalts}, in which we find an expression for $q(\hat{x}^n||x^{n-1})$ that minimizes the functional $K$ where $r(\hat{x}^n|x^n)$ is fixed. In Lemma \ref{qfixs} we find an explicit expression for $r(\hat{x}^n|x^n)$ that minimizes the functional $K$ where $q(\hat{x}^n||x^{n-1})$ is fixed. Theorem \ref{Ths} combines all lemmas to show that the alternating minimization procedure, as described in Alg. \ref{algs}, converges. We end with a supplementary claim about the upper and lower bounds to the rate distortion, and then prove that the stopping condition described in Alg. \ref{algs} ensures that the error $|R^k_n(D)-R_n(D)|<\epsilon$. From here on, we denote the probabilities over which we minimize as $r=r(\hat{x}^n|x^n),\ q=q(\hat{x}^n||x^{n-1})$.

\begin{lemma}\label{isconvex}
For a fixed input PMF $p(x^n)$, the functional $K$ given in (\ref{Kfunc1}) as a function of $\{r,q\}$ is convex in $\{r,q\}$, continuous and with continuous partial derivatives. Moreover, the sets of probabilities $r,\ q$ (denoted by $A_1,\ A_2$) over which we optimize are convex.

\begin{proof}
Since the functional $K$ consists of a linear (and thus convex) expression in $r$, i.e., $\ex{r}{d(X^n,\hat{X}^n)}$, we only need to verify that the directed information is convex.
We first write the directed information in the following form
\begin{align}
I(\hat{X}^n\rightarrow X^n)&=-\sum_{\hat{x}^n,x^n}{p(x^n,\hat{x}^n)\log{\frac{p(x^n)}{p(x^n||\hat{x}^n)}}}\nonumber\\
&=-\sum_{\hat{x}^n,x^n}{p(x^n,\hat{x}^n)\log{\frac{p(x^n)q(\hat{x}^n||x^{n-1})}{p(x^n||\hat{x}^n)q(\hat{x}^n||x^{n-1})}}}\nonumber\\
&=-\sum_{\hat{x}^n,x^n}{p(x^n,\hat{x}^n)\log{\frac{q(\hat{x}^n||x^{n-1})}{p(x^n,\hat{x}^n)/p(x^n)}}}\nonumber\\
&=-\sum_{\hat{x}^n,x^n}{p(x^n)r(\hat{x}^n|x^n)\log{\frac{q(\hat{x}^n||x^{n-1})}{r(\hat{x}^n|x^n)}}}\nonumber\\
&=I_{FF}(r,q).\nonumber
\end{align}
This form is the negative of a concave function as proven in \cite[Lemma 2]{NaissPermuter1}. Furthermore, in the same lemma we show that the directed information is continuous with continuous partial derivatives; the same explanation applies here. It is also simple to verify that both sets we minimize over are convex, i.e., sets $A_1,\ A_2$, where
\begin{align}
A_1&=\{r(\hat{x}^n|x^n): r(\hat{x}^n|x^n)>0 \text{\ is a regular conditioned PMF}\},\nonumber\\
A_2&=\{q(\hat{x}^n||x^{n-1}): q(\hat{x}^n||x^{n-1}) \text{\ is a causally conditioned PMF}\}.\label{sets2}
\end{align}
\end{proof}
\end{lemma}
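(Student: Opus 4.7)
My plan is to decompose $K(r,q) = I_{FF}(r,q) + \lambda\,\ex{r}{d(X^n,\hat{X}^n)}$ and handle the three claims (joint convexity, smoothness, convex domains) in turn, since the distortion term is linear and the directed-information term is essentially a weighted family of KL divergences.

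The central algebraic step would be to rewrite
\begin{align*}
I_{FF}(r,q) = \sum_{x^n} p(x^n)\, D\!\bigl(r(\cdot|x^n)\,\big\|\,q(\cdot||x^{n-1})\bigr),
\end{align*}
which is legitimate because, for each fixed $x^{n-1}$, the causally conditioned pmf $q(\hat{x}^n||x^{n-1})$ sums to one over $\hat{x}^n$ and therefore is a bona fide distribution on $\hat{\cal{X}}^n$. From here the joint convexity of KL divergence in its two arguments (a standard log-sum or perspective-function argument) yields joint convexity of each summand in $(r,q)$; nonnegativity of $p(x^n)$ preserves convexity under the sum, and adding the linear term $\lambda\,\ex{r}{d(X^n,\hat{X}^n)}$ keeps convexity intact. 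Continuity and continuous partial derivatives on the relative interior of $A_1 \times A_2$ are then automatic, since all entries of $r$ and $q$ are strictly positive there and $t\mapsto t\log t$ is smooth on $(0,\infty)$; the extension to the boundary is handled by the standard convention $0\log 0 = 0$.

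It remains to verify that $A_1$ and $A_2$ are convex. For $A_1$ this is immediate, as it is a product of simplices indexed by $x^n$. The main subtlety is $A_2$: the natural parameterization $q(\hat{x}^n||x^{n-1}) = \prod_i q(\hat{x}_i|\hat{x}^{i-1},x^{i-1})$ is multilinear rather than affine in the conditionals, which could mislead one into thinking the set is not convex. The clean resolution I would use is to describe $A_2$ as the subset of the joint simplex on $\hat{\cal{X}}^n \times {\cal{X}}^{n-1}$ cut out by the linear consistency constraints requiring $\sum_{\hat{x}_{i+1}^n} q(\hat{x}^n||x^{n-1})$ to depend only on $(\hat{x}^i, x^{i-1})$ for each $i$; these are affine equalities, so $A_2$ is an intersection of affine subspaces with the simplex and hence convex. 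This is the step where I expect a careful reader to want an explicit argument, but once it is in place all the hypotheses of the alternating minimization lemma (Lemma \ref{lemconv}) are fulfilled.
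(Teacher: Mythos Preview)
Your proposal is correct and follows essentially the same route as the paper: both isolate the linear distortion term and reduce matters to the joint convexity of $I_{FF}(r,q)=\sum_{x^n}p(x^n)\,D\bigl(r(\cdot\mid x^n)\,\|\,q(\cdot\|x^{n-1})\bigr)$. The paper arrives at the same expression and then simply cites \cite[Lemma~2]{NaissPermuter1} for convexity and smoothness, whereas you supply the argument directly via the joint convexity of the KL divergence; your version is therefore more self-contained. Your treatment of $A_2$ is also more careful than the paper's ``simple to verify'': identifying the causal-conditioning structure with the affine marginal-consistency constraints $\sum_{\hat{x}_{i+1}^n}q(\hat{x}^n\|x^{n-1})$ depending only on $(\hat{x}^i,x^{i-1})$ is exactly the clean way to see convexity of $A_2$, and it sidesteps the potential confusion from the multilinear product parameterization that you correctly flag.
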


Recall that in order to use the alternating minimization procedure we minimize over $\{r(\hat{x}^n|x^n),\ q(\hat{x}^n||x^{n-1})\}$ instead of over $r(\hat{x}^n|x^n)$ alone, and thus need the following lemma.
\begin{lemma}\label{whyalts}
For any discrete random variables $X^n,\ \hat{X}^n$, the following holds
\begin{align}
R_n(D)=\frac{1}{n}\left[-\lambda D+\min_{r,q}K(r,q)\right],\nonumber
\end{align}
where $D$ is the distortion at the point $r^*(\hat{x}^n|x^n)$ that achieves $R_n(D)$

To prove this lemma, we note that $\ex{r}{d(X^n,\hat{X}^n)}$, which does not contain the variable $q$, is part of the functional $K$. Hence, it suffices to show that
\begin{align}
\min_{r(\hat{x}^n|x^n)}\frac{1}{n}I(\hat{X}^n\rightarrow X^n)=
 \min_{q(\hat{x}^n||x^{n-1})}\min_{r(\hat{x}^n|x^n)}\frac{1}{n}I(\hat{X}^n\rightarrow X^n)\label{doublemin}
\end{align}

The proof is given after the following supplementary claim, in which we calculate the specific $q(\hat{x}^n||x^{n-1})$ that minimizes the directed information when $r(\hat{x}^n|x^n)$ is fixed.
\begin{lemma}\label{rfixs}
For fixed $r(\hat{x}^n|x^n)$, there exists a unique $c_2(r)$ that achieves $\min_{q(\hat{x}^n||x^{n-1})}I(\hat{X}^n\rightarrow X^n),$ and is given by
\begin{align}
q^*(\hat{x}^n||x^{n-1})=\frac{p(x^n)r(\hat{x}^n|x^n)}{p(x^n||\hat{x}^n)},\label{QFFform}
\end{align}
where $p(x^n||\hat{x}^n)$ is calculated using the joint distribution $p(x^n)r(\hat{x}^n|x^n)$.
\begin{proof}[Proof for Lemma \ref{rfixs}]
\begin{align}
&I_{FF}(r,q)-I_{FF}(r,q^*)\nonumber\\
&=\sum_{x^n,\hat{x}^n}p(x^n)r(\hat{x}^n|x^n)\log\frac{r(\hat{x}^n|x^n)}{q(\hat{x}^n||x^{n-1})}- \sum_{x^n,\hat{x}^n}p(x^n)r(\hat{x}^n|x^n)\log\frac{r(\hat{x}^n|x^n)}{q^*(\hat{x}^n||x^{n-1})}\nonumber\\
&=\sum_{x^n,\hat{x}^n}p(x^n)r(\hat{x}^n|x^n)\log\frac{q^*(\hat{x}^n||x^{n-1})}{q(\hat{x}^n||x^{n-1})}\nonumber\\
&=\sum_{x^n,\hat{x}^n}p(x^n||\hat{x}^n)q^*(\hat{x}^n||x^{n-1})\log\frac{p(x^n||\hat{x}^n)q^*(\hat{x}^n||x^{n-1})}{p(x^n||\hat{x}^n)q(\hat{x}^n||x^{n-1})}\nonumber\\
&=D\left(p(x^n||\hat{x}^n)q^*(\hat{x}^n||x^{n-1})\parallel p(x^n||\hat{x}^n)q(\hat{x}^n||x^{n-1})\right)\nonumber\\
&\stackrel{(a)}{\geq} 0,\nonumber
\end{align}
where (a) follows from the non-negativity of the divergence. Equality holds if and only if the joint PMFs are the same, i.e., $q=q^*$.
\end{proof}
\end{lemma}
\textit{Proof of Lemma \ref{whyalts}:} The PMF that minimizes the directed information is the one that corresponds to the joint distribution $r(\hat{x}^n|x^n)p(x^n)$; thus (\ref{doublemin}) holds, and thus the functional $K$ can be minimized over both $r,\ q$ combined. \hfill\QED
\end{lemma}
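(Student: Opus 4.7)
The plan is to show that introducing $q=q(\hat{x}^n||x^{n-1})$ as an independent minimization variable alongside $r=r(\hat{x}^n|x^n)$ does not change the optimum. Because the distortion term $\ex{r}{d(X^n,\hat{X}^n)}$ depends only on $r$, the statement reduces, as the paper already indicates, to verifying
\begin{align}
\min_{r}\, I(\hat{X}^n\rightarrow X^n)=\min_{q}\min_{r}\, I_{FF}(r,q).\nonumber
\end{align}
The inequality ``$\geq$'' is immediate: choosing $q$ to be the causal conditioning induced by the joint $p(x^n)r(\hat{x}^n|x^n)$ makes $I_{FF}(r,q)$ coincide with the standard directed information, so the right-hand side is no larger than the left. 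The work lies in the reverse inequality.

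First I would isolate a small auxiliary claim: for every fixed $r$, the causally conditioned PMF $q^*(\hat{x}^n||x^{n-1})$ derived from $p(x^n)r(\hat{x}^n|x^n)$ minimizes $I_{FF}(r,q)$ over all admissible $q$. To prove it, observe that only a single $\log q$ term in $I_{FF}$ depends on $q$, so $I_{FF}(r,q)-I_{FF}(r,q^*)$ collapses to $\sum_{x^n,\hat{x}^n} p(x^n,\hat{x}^n)\log[q^*(\hat{x}^n||x^{n-1})/q(\hat{x}^n||x^{n-1})]$. Multiplying numerator and denominator inside the logarithm by the causal conditional $p(x^n||\hat{x}^n)$ rewrites the expression as the Kullback-Leibler divergence between two joint distributions on $(X^n,\hat{X}^n)$ that share the common factor $p(x^n||\hat{x}^n)$ and differ only in the $q$-factor. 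Non-negativity of KL gives $I_{FF}(r,q)\geq I_{FF}(r,q^*)$, with equality iff $q=q^*$, establishing both existence and uniqueness of the minimizer.

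Plugging this back in, $\min_q\min_r I_{FF}(r,q)=\min_r I_{FF}(r,q^*(r))=\min_r I(\hat{X}^n\rightarrow X^n)$, since by construction $I_{FF}(r,q^*(r))$ is the directed information of the joint distribution induced by $r$. Restoring the distortion term and the $-\lambda D$ constant recovers the claimed identity. The main obstacle I anticipate is the causal-conditioning bookkeeping: one must justify the factorization $p(x^n)r(\hat{x}^n|x^n)=p(x^n||\hat{x}^n)\,q^*(\hat{x}^n||x^{n-1})$ carefully, since this chain rule (due to Massey and Kramer) is precisely what enables the rewrite as a KL divergence. Once that step is clean, the argument parallels the familiar derivation of the output-distribution minimizer in the Blahut-Arimoto setup, as in \cite{NaissPermuter1}.
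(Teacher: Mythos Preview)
Your proposal is correct and follows essentially the same route as the paper: you reduce to the identity $\min_r I(\hat{X}^n\rightarrow X^n)=\min_{q}\min_{r} I_{FF}(r,q)$, prove the inner optimization by writing $I_{FF}(r,q)-I_{FF}(r,q^*)$ as a KL divergence via the causal-conditioning chain rule $p(x^n)r(\hat{x}^n|x^n)=p(x^n||\hat{x}^n)q^*(\hat{x}^n||x^{n-1})$, and then substitute back. This is exactly the content and method of the paper's Lemma~\ref{rfixs} and the ensuing one-line proof of Lemma~\ref{whyalts}.
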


In the following lemma, we derive an explicit expression for $r(\hat{x}^n|x^n)$ that achieves $R_n(D)$, where $q(\hat{x}^n||x^{n-1})$ is fixed.
\begin{lemma}\label{qfixs}
For fixed $q(\hat{x}^n||x^{n-1})$, there exists $c_1(q)$ that achieves $R_n(D)$, and is given by
\begin{equation}
r(\hat{x}^n|x^n)=\frac{q(\hat{x}^n||x^{n-1})2^{-\lambda d(x^n,\hat{x}^n)}}{\sum_{\hat{x}^n}{q(\hat{x}^n||x^{n-1})2^{-\lambda d(x^n,\hat{x}^n)}}}.\nonumber
\end{equation}
\end{lemma}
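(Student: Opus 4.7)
}
The goal is to minimize, for fixed $q=q(\hat{x}^n\|x^{n-1})$, the functional
\begin{align}
K(r,q)=\sum_{x^n,\hat{x}^n}p(x^n)r(\hat{x}^n|x^n)\log\frac{r(\hat{x}^n|x^n)}{q(\hat{x}^n\|x^{n-1})}+\lambda\sum_{x^n,\hat{x}^n}p(x^n)r(\hat{x}^n|x^n)d(x^n,\hat{x}^n)\nonumber
\end{align}
over the convex set $A_1$ subject only to the normalization constraints $\sum_{\hat{x}^n}r(\hat{x}^n|x^n)=1$ for every $x^n$. Since $K$ is convex in $r$ by Lemma \ref{isconvex}, any candidate $r^\ast$ for which the first-order stationarity conditions hold will be the unique global minimizer, so the proof reduces to producing such an $r^\ast$ and then verifying optimality by a direct divergence calculation.

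The plan is to obtain the candidate $r^\ast$ heuristically by forming the Lagrangian with multipliers $\{\mu_{x^n}\}$ for the normalization constraints, setting $\partial/\partial r(\hat{x}^n|x^n)$ to zero, and solving the resulting equation $\log r^\ast=\log q-\lambda d(x^n,\hat{x}^n)-\text{const}(x^n)$. Eliminating the multiplier through the normalization $\sum_{\hat{x}^n}r^\ast(\hat{x}^n|x^n)=1$ yields exactly the stated formula
\begin{align}
r^\ast(\hat{x}^n|x^n)=\frac{q(\hat{x}^n\|x^{n-1})2^{-\lambda d(x^n,\hat{x}^n)}}{A(x^n)},\qquad A(x^n):=\sum_{\hat{x}^n}q(\hat{x}^n\|x^{n-1})2^{-\lambda d(x^n,\hat{x}^n)}.\nonumber
\end{align}

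To make the argument self-contained (and avoid invoking constraint-qualification technicalities for boundary points), I would then verify global optimality by the same divergence trick used in Lemma \ref{rfixs}. The key identity, following directly from the definition of $r^\ast$, is $\log q(\hat{x}^n\|x^{n-1})-\lambda d(x^n,\hat{x}^n)=\log r^\ast(\hat{x}^n|x^n)+\log A(x^n)$. Substituting this into $K(r,q)$ and using $\sum_{\hat{x}^n}r(\hat{x}^n|x^n)=1$ gives
\begin{align}
K(r,q)=\sum_{x^n}p(x^n)\,D\!\bigl(r(\cdot|x^n)\,\|\,r^\ast(\cdot|x^n)\bigr)-\sum_{x^n}p(x^n)\log A(x^n),\nonumber
\end{align}
where the second term depends only on $q$ and $\lambda$, i.e.\ equals $K(r^\ast,q)$. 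Non-negativity of relative entropy then yields $K(r,q)\geq K(r^\ast,q)$ with equality iff $r=r^\ast$, proving both that $r^\ast$ attains the minimum and that it is unique.

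The only nontrivial step is the algebraic rearrangement leading to the divergence form; once that identity is in hand the result is immediate. I expect no genuine obstacle, since the structure mirrors Lemma \ref{rfixs} and the classical Blahut--Arimoto derivation, with the ordinary marginal $q(\hat{x}^n)$ replaced throughout by the causal conditioning $q(\hat{x}^n\|x^{n-1})$ dictated by the feed-forward setting.
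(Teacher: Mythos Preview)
Your proposal is correct. The derivation of the candidate $r^\ast$ via the Lagrangian with normalization multipliers is exactly what the paper does: it forms the functional $J$, differentiates with respect to $r(\hat{x}^n|x^n)$, and solves to obtain the stated formula, invoking \cite[Ch.~5.5.3]{Boyd} to justify that the KKT stationarity condition is sufficient for this convex problem.

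The difference is that you go further: rather than appealing to KKT sufficiency, you verify global optimality directly via the identity $K(r,q)=\sum_{x^n}p(x^n)D\bigl(r(\cdot|x^n)\,\|\,r^\ast(\cdot|x^n)\bigr)+K(r^\ast,q)$, which is the classical Blahut--Arimoto decomposition. This buys you two things the paper's proof does not: the argument is fully self-contained (no constraint-qualification issues at the boundary of the simplex), and uniqueness of the minimizer drops out immediately from the equality condition in the divergence, whereas the paper establishes uniqueness separately in Lemma~\ref{runique}. Your algebraic identity is correct as written, so the plan goes through without obstruction.
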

\begin{proof}
Following \cite[Ch. 5.5.3]{Boyd}, since we are solving a convex optimization problem, we can apply the KKT conditions with the constraints $\sum_{\hat{x}^n}r(\hat{x}^n|x^n)=1$, and set up the functional
\begin{equation}
J=\sum_{x^n,\hat{x}^n}p(x^n)r(\hat{x}^n|x^n)\log\frac{r(\hat{x}^n|x^n)}{q(\hat{x}^n||x^{n-1})}+
\lambda\left(\sum_{x^n,\hat{x}^n}p(x^n)r(\hat{x}^n|x^n)d(x^n,\hat{x}^n)-D\right)+\sum_{x^n}\nu(x^n)\sum_{\hat{x}^n}r(\hat{x}^n|x^n).\nonumber
\end{equation}
Solving $\frac{\partial J}{\partial r(\hat{x}^n|x^n)}=0$ yields the expression for $r(\hat{x}^n|x^n)$ as
\begin{equation}
r(\hat{x}^n|x^n)=\frac{q(\hat{x}^n||x^{n-1})2^{-\lambda d(x^n,\hat{x}^n)}}{\sum_{\hat{x}^n}q(\hat{x}^n||x^{n-1})2^{-\lambda d(x^n,\hat{x}^n)}}.\label{RFFform}
\end{equation}
\end{proof}

Another lemma that is required is one that states that the algorithm, when converges, remains fixed on its variables. we already know that the variable $q$ that optimize the directed information is unique; we have to show that within the algorithm, the variable $r$ is unique as well.
\begin{lemma}\label{runique}
Using the iterations in Alg. \ref{algs}, the variable $r$ is unique, and does not change if convergence is achieved.
\end{lemma}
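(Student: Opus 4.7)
The plan is to reduce the claim to the explicit formulas provided by Lemmas \ref{rfixs} and \ref{qfixs}. By Lemma \ref{qfixs}, step (d) of Algorithm \ref{algs} computes $r^k$ as a closed-form function of the previous iterate $q^{k-1}$ alone, namely
\begin{align*}
r^k(\hat{x}^n|x^n) = \frac{q^{k-1}(\hat{x}^n||x^{n-1})2^{-\lambda d(x^n,\hat{x}^n)}}{\sum_{\hat{x}^n}q^{k-1}(\hat{x}^n||x^{n-1})2^{-\lambda d(x^n,\hat{x}^n)}}.
\end{align*}
This map $q\mapsto r(q)$ is continuous and deterministic, so once the $q$-iterates have stabilized --- i.e., $q^k=q^{k-1}$ --- the subsequent $r$-iterates automatically stabilize: $r^{k+1}=r(q^k)=r(q^{k-1})=r^k$. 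This establishes the ``does not change'' half of the statement.

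For uniqueness, I would observe that Lemma \ref{isconvex} shows $K(r,q)$ is jointly convex in $(r,q)$, and for fixed $q$ the functional $K(\cdot,q)$ is in fact strictly convex in $r$, since $I_{FF}(r,q)$ contains the term $\sum_{x^n,\hat{x}^n}p(x^n)r(\hat{x}^n|x^n)\log r(\hat{x}^n|x^n)$ (strictly convex in $r$ because $x\log x$ is strictly convex), while the distortion term $\lambda\ex{r}{d(X^n,\hat{X}^n)}$ and the $-\log q$ term are linear in $r$. Strict convexity on the convex set $A_1$ of conditional PMFs guarantees at most one minimizer, and the KKT-derived formula of Lemma \ref{qfixs} produces exactly that unique minimizer in closed form.

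The only real subtlety --- and the step that deserves care --- is making precise what ``convergence is achieved'' means. In the framework of Lemma \ref{lemconv}, convergence is a priori only of the objective value $K(r^k,q^k)$ to its infimum. To promote this to pointwise stabilization of $r$, I would use that the stopping test in step (g) is based on the ratio $c^k_{\hat{x}^n,x^{n-1}}=q^k/q^{k-1}$, so that $F\to 0$ forces $q^k\to q^{k-1}$ componentwise; then continuity of the explicit map in step (d) transfers stabilization of $q$ to stabilization of $r$, while the strict convexity argument above rules out any alternative limit. No machinery beyond the preceding two lemmas is needed, and the statement functions as a consolidation that justifies using the iterate $r^k$ itself (not merely the objective value) as the algorithm's output.
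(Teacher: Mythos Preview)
Your argument is essentially correct but follows a different route from the paper's. The paper does not invoke strict convexity or the deterministic nature of the map $q\mapsto r$; instead it proves the lemma by a direct Blahut-style computation: it expands $K(r^k,q^k)-K(r^{k+1},q^{k+1})$, substitutes the update formula (\ref{RFFform}), and applies the inequality $\log y\geq 1-\tfrac{1}{y}$ to both summands, obtaining $0$ with strict inequality unless $r^{k+1}=r^k$ and $q^{k+1}=q^k$. Thus the paper's mechanism is ``strict monotone decrease of the objective along the iterates forces the pair $(r^k,q^k)$ to stabilize,'' which simultaneously handles the ``does not change'' and ``uniqueness within the algorithm'' claims in one stroke. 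Your approach decouples these: first observing that step (d) makes $r^k$ a continuous deterministic function of $q^{k-1}$ (so $q$-stabilization implies $r$-stabilization), and second using strict convexity of $r\mapsto \sum p(x^n)r\log r$ to guarantee uniqueness of the minimizer for fixed $q$. The paper's approach has the advantage of giving strict decrease of the objective as a byproduct and of not presupposing that $q$ stabilizes first; your approach is more structural and makes clearer \emph{why} the minimizer is unique.

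One small loose end in your write-up: you argue that $F\to 0$ in the stopping test forces $q^k\to q^{k-1}$ componentwise. Strictly speaking, $F=\log\max c^k-\mathbb{E}[\log c^k]=0$ only forces $c^k_{\hat{x}^n,x^{n-1}}$ to be constant on the support of $p(x^n)r^k(\hat{x}^n|x^n)$, not identically~$1$. You then need to observe (i) the algorithm preserves positivity of $q^k$ and hence of $r^k$ when initialized at the uniform $q^0$, so the support is full, and (ii) a causal conditioning PMF sums to $1$ over $\hat{x}^n$ for each $x^{n-1}$, so a constant ratio between two such objects must equal $1$. With those remarks your argument is complete; the paper's computation bypasses this by working with $K$ directly rather than the stopping quantity $F$.
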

\begin{proof}
The uniqueness is proven in a similar way to a proof given by Blahut in \cite[Theorem 6]{Bla}, and we follow it with appropriate modifications.
We recall that in the $k$th iteration,
\begin{align}
K(r^k,q^k)&=I_{FF}(r^k,q^k)+\lambda\ex{r^k}{d(X^n,\hat{X}^n)}\nonumber\\
&=\sum_{x^n,\hat{x}^n}p(x^n)r^k(\hat{x}^n|x^n)\log\frac{r^k(\hat{x}^n|x^n)}{q^k(\hat{x}^n||x^{n-1})2^{-\lambda d(x^n,\hat{x}^n)}}.\nonumber
\end{align}
Further, from \cite[Theorem 6]{Bla} we can see that
\begin{align}
K(r^{k+1},q^{k+1})=-\sum_{x^n,\hat{x}^n}p(x^n)r^{k}(\hat{x}^n|x^n)\log\left(\sum_{\hat{x}^n}{q^k(\hat{x}^n||x^{n-1})2^{-\lambda
     d(x^n,\hat{x}^n)}}\right)+\sum_{x^n,\hat{x}^n}p(x^n)r^{k+1}(\hat{x}^n|x^n)\log\frac{q^k(\hat{x}^n||x^{n-1})}{q^{k+1}(\hat{x}^n||x^{n-1})}.\nonumber
\end{align}
Hence,
\begin{align}
K(r^k,q^k)-K(r^{k+1},q^{k+1})&=\sum_{x^n,\hat{x}^n}p(x^n)r^k(\hat{x}^n|x^n)\log\frac{r^k(\hat{x}^n|x^n)\sum_{\hat{x}^n}{q^k(\hat{x}^n||x^{n-1})2^{-\lambda
     d(x^n,\hat{x}^n)}}}{q^k(\hat{x}^n||x^{n-1})2^{-\lambda d(x^n,\hat{x}^n)}}\nonumber\\
     &\ \ \ \ \ +\sum_{x^n,\hat{x}^n}p(x^n)r^{k+1}(\hat{x}^n|x^n)\log\frac{q^{k+1}(\hat{x}^n||x^{n-1})}{q^{k}(\hat{x}^n||x^{n-1})}\nonumber\\
&\stackrel{(a)}{\geq}\sum_{x^n,\hat{x}^n}p(x^n)r^k(\hat{x}^n|x^n)\left(1-\frac{q^k(\hat{x}^n||x^{n-1})2^{-\lambda d(x^n,\hat{x}^n)}}
     {r^k(\hat{x}^n|x^n)\sum_{\hat{x}^n}{q^k(\hat{x}^n||x^{n-1})2^{-\lambda d(x^n,\hat{x}^n)}}}\right)\nonumber\\
     &\ \ \ \ \ +\sum_{x^n,\hat{x}^n}p(x^n)r^{k+1}(\hat{x}^n|x^n)\left(1-\frac{q^k(\hat{x}^n||x^{n-1})}{q^{k+1}(\hat{x}^n||x^{n-1})}\right)\nonumber\\
&\stackrel{(b)}{=}\sum_{x^n,\hat{x}^n}p(x^n)r^k(\hat{x}^n|x^n)\left(1-\frac{r^{k+1}(\hat{x}^n|x^n)}{r^k(\hat{x}^n|x^n)}\right)\nonumber\\
     &\ \ \ \ \ +\sum_{x^n,\hat{x}^n}p(x^n||\hat{x}^n)q^{k+1}(\hat{x}^n||x^{n-1})\left(1-\frac{q^k(\hat{x}^n||x^{n-1})}{q^{k+1}(\hat{x}^n||x^{n-1})}\right)\nonumber\\
&=0+0,\nonumber
\end{align}
where (a) follows from the inequality $\log(y)\geq1-\frac{1}{y}$, and (b) follows from Equation (\ref{RFFform}) where $q=q^k,\ r=r^{k+1}$.
Note, that we have strict inequality unless $q^k=q^{k+1}$, $r^k=r^{k+1}$. Thus, $K(r^k,q^k)$ is non-increasing and is strictly decreasing unless the distribution stabilizes, and hence the uniqueness of the optimum parameter $r^*$ emerges.
\end{proof}

Now, we can prove Theorem \ref{Ths} as stated in Section \ref{SecProbRes}.
\begin{proof}[Proof of Theorem \ref{Ths}]
First, we have to show the existence of a double minimization problem, i.e., an equivalent problem where we minimize over two variables instead of only one; this was shown in Lemma \ref{whyalts}. Now, in order for the alternating minimization procedure to work on this optimization problem, we need to show that the conditions given in Lemma \ref{lemconv} are satisfied for the functional $K$; this was shown in Lemma \ref{isconvex}. The steps described in Alg. \ref{algs} are proved in Lemmas \ref{rfixs} and \ref{qfixs}, thus giving us an algorithm to compute $R_n(D)$, where the minimization is evaluated according to parameter $\lambda$.
\end{proof}

Our last step in proving the convergence of Alg. \ref{algs} is to show why the stopping condition ensures a small error. For this reason
we state a lemma introducing the existence of bounds to the rate distortion with feed-forward function, and then conclude that the stopping condition does ensure a small error in the algorithm, i.e., $|R^k_n(D_k)-R_n(D_k)|<\epsilon$, where $R^k_n(D_k)$ is the upper bound in the $k$th iteration, and $D_k=\ex{r^k}{d(X^n,\hat{X}^n)}$. For this purpose, we define the following expressions in each iteration,
\begin{align}
c^k_{\hat{x}^n,x^{n-1}}&=\frac{q^{k}(\hat{x}^n||x^{n-1})}{q^{k-1}(\hat{x}^n||x^{n-1})}\nonumber\\
\gamma^k(x^n)&=\left(\sum_{\hat{x}^n}q^{k-1}(\hat{x}^n||x^{n-1})2^{-\lambda d(x^n,\hat{x}^n)}\right)^{-1}.\label{cgamk}
\end{align}
\begin{lemma}\label{Lembounds}
Let the parameter $\lambda\geq0$ be given, and let $c^k_{\hat{x}^n,x^{n-1}},\ \gamma^k(x^n)$ be as in (\ref{cgamk}) in the $k$th iteration of Alg. \ref{algs}. Then, at point
\begin{align}
D_k=\ex{r^k}{d(X^n,\hat{X}^n)},\nonumber
\end{align}
we have the following bounds.
\begin{align}
I^k_L(D_k)\leq R_n(D_k)\leq I^k_U(D_k),\nonumber
\end{align}
where
\begin{align}
I^k_U(D_k)&=\frac{1}{n}\left(-\lambda D
    +\sum_{x^n}p(x^n)\log\gamma^k(x^n)-\sum_{x^n,\hat{x}^n}p(x^n)r^k(\hat{x}^n|x^n)\log{c^k_{\hat{x}^n,x^{n-1}}}\right),\nonumber\\
I^k_L(D_k)&=\frac{1}{n}\left(-\lambda D
    +\sum_{x^n}p(x^n)\log\gamma^k(x^n)-\log{\max_{\hat{x}^n,x^{n-1}}c^k_{\hat{x}^n,x^{n-1}}}\right).\label{bounds1}
\end{align}
Note, that $R^k_n(D_k)=I^k_U(D_k)$.
\end{lemma}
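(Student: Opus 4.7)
The plan is to treat the two bounds separately. The upper bound is essentially identification: the iterate $r^k$ is by construction a valid conditional PMF with $\ex{r^k}{d(X^n,\hat{X}^n)}=D_k$, so $R_n(D_k)$ is automatically bounded above by $\frac{1}{n}I_{FF}(r^k,q^k)$, and I will show this quantity equals $I^k_U(D_k)$. The lower bound will be obtained by exhibiting a feasible triple for the dual maximization in Theorem~\ref{ThRdMax} whose objective value equals $I^k_L(D_k)$.

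For the upper bound, I would substitute the closed-form expression from step (d) of Algorithm~\ref{algs}, namely $r^k(\hat{x}^n|x^n)=\gamma^k(x^n)q^{k-1}(\hat{x}^n||x^{n-1})2^{-\lambda d(x^n,\hat{x}^n)}$, into the ratio $r^k/q^k$. Splitting the logarithm produces three terms: $\log\gamma^k(x^n)$, $-\lambda d(x^n,\hat{x}^n)$, and $-\log(q^k/q^{k-1})=-\log c^k_{\hat{x}^n,x^{n-1}}$. Taking expectation against $p(x^n)r^k(\hat{x}^n|x^n)$ and dividing by $n$ gives the three summands of $I^k_U(D_k)$, using $\ex{r^k}{d(X^n,\hat{X}^n)}=D_k$ for the middle term; hence $R^k_n(D_k)=I^k_U(D_k)\geq R_n(D_k)$.

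For the lower bound, I invoke Theorem~\ref{ThRdMax} with the candidate dual point: keep the same $\lambda$, take $\gamma(x^n)=\gamma^k(x^n)/M$ where $M=\max_{\hat{x}^n,x^{n-1}}c^k_{\hat{x}^n,x^{n-1}}$, and choose $p'(x^n||\hat{x}^n)$ to be the causal conditional induced by the joint $p(x^n)r^k(\hat{x}^n|x^n)$. Combining the causal Bayes decomposition $p(x^n)r^k(\hat{x}^n|x^n)=p'(x^n||\hat{x}^n)q^k(\hat{x}^n||x^{n-1})$ with the algorithmic formula for $r^k$ yields the pointwise identity
\begin{align}
p(x^n)\gamma^k(x^n)2^{-\lambda d(x^n,\hat{x}^n)}=c^k_{\hat{x}^n,x^{n-1}}\,p'(x^n||\hat{x}^n).\nonumber
\end{align}
Dividing both sides by $M$ gives $p(x^n)\gamma(x^n)2^{-\lambda d(x^n,\hat{x}^n)}=(c^k_{\hat{x}^n,x^{n-1}}/M)\,p'(x^n||\hat{x}^n)\leq p'(x^n||\hat{x}^n)$, so the candidate is feasible. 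Plugging it into the objective of Theorem~\ref{ThRdMax} at $D=D_k$ yields $\frac{1}{n}(-\lambda D_k+\sum_{x^n}p(x^n)\log\gamma^k(x^n)-\log M)=I^k_L(D_k)$, which is dominated by $R_n(D_k)$.

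The main obstacle is recognizing the correct rescaling in the lower bound. Without dividing by $M$, the pair $(\gamma^k,p')$ satisfies the dual constraint only at the $(\hat{x}^n,x^{n-1})$ attaining the maximum and violates it elsewhere; dividing by $M$ restores pointwise feasibility while costing exactly the $-\log M$ penalty that produces the gap between $I^k_L$ and $I^k_U$. Once this scaling is identified, the remaining algebra (causal Bayes, expectation, and the definitions of $\gamma^k$ and $c^k$) is routine, and the upper bound computation requires only that $r^k,q^k$ remain strictly positive so the logarithms are well defined, which is preserved throughout the iteration.
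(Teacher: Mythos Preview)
Your proposal is correct and follows essentially the same route as the paper: the upper bound is obtained by evaluating $\frac{1}{n}I_{FF}(r^k,q^k)$ via the step~(d) substitution, and the lower bound by invoking Theorem~\ref{ThRdMax} with the feasible point $\big(\lambda,\gamma^k/M,p'\big)$, where $p'(x^n||\hat{x}^n)$ is the causal conditional induced by $p(x^n)r^k(\hat{x}^n|x^n)$. The paper carries out precisely this argument (in its Appendix on bounds), using the same pointwise identity $p(x^n)\gamma^k(x^n)2^{-\lambda d(x^n,\hat{x}^n)}=c^k_{\hat{x}^n,x^{n-1}}\,p'(x^n||\hat{x}^n)$ and the same rescaling by $M=\max_{\hat{x}^n,x^{n-1}}c^k_{\hat{x}^n,x^{n-1}}$.
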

The proof for Lemma \ref{Lembounds} is given in App. \ref{Bounds}.

From Lemma \ref{Lembounds} we can conclude the following claim
\begin{corollary}\label{CorStop}
Let us define the error in the algorithm as $|R^k_n(D)-R_n(D)|$. The error defined here is smaller than $\epsilon$ if the following inequality is satisfied:
\begin{align}
F=\log{\max_{\hat{x}^n,x^{n-1}}c^k_{\hat{x}^n,x^{n-1}}}-\sum_{x^n,\hat{x}^n}p(x^n)r^k(\hat{x}^n|x^n)\log{c^k_{\hat{x}^n,x^{n-1}}}\leq\epsilon,\nonumber
\end{align}
where $c^k_{\hat{x}^n,x^{n-1}}$ is defined in the $k$th iteration by Equation (\ref{cgamk}).
\end{corollary}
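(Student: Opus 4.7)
The plan is to invoke Lemma \ref{Lembounds} directly and reduce the claim to a single arithmetic identity. First I would note that, by the lemma, in every iteration $k$ the true rate-distortion value $R_n(D_k)$ is sandwiched by $I^k_L(D_k) \leq R_n(D_k) \leq I^k_U(D_k)$, and that, by construction, $R^k_n(D_k) = I^k_U(D_k)$. Hence the approximation error is automatically one-sided:
\begin{align}
0 \;\leq\; R^k_n(D_k) - R_n(D_k) \;\leq\; I^k_U(D_k) - I^k_L(D_k).\nonumber
\end{align}

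Next I would compute $I^k_U(D_k) - I^k_L(D_k)$ explicitly using the formulas for the upper and lower bounds given in (\ref{bounds1}). The terms $-\lambda D$ and $\sum_{x^n} p(x^n) \log \gamma^k(x^n)$ cancel exactly, leaving
\begin{align}
I^k_U(D_k) - I^k_L(D_k) \;=\; \frac{1}{n}\left(\log\max_{\hat{x}^n,x^{n-1}} c^k_{\hat{x}^n,x^{n-1}} - \sum_{x^n,\hat{x}^n} p(x^n) r^k(\hat{x}^n|x^n) \log c^k_{\hat{x}^n,x^{n-1}}\right) \;=\; \frac{F}{n}.\nonumber
\end{align}

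Combining the two displays, the error is bounded by $F/n$, and since $n \geq 1$ this is at most $F$. Therefore, if the stopping condition $F \leq \epsilon$ holds, we immediately obtain $|R^k_n(D_k) - R_n(D_k)| \leq F/n \leq F \leq \epsilon$, which is the claim.

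There is essentially no obstacle here beyond the bookkeeping: the entire content was packaged into Lemma \ref{Lembounds}, and the corollary is just the observation that the gap between the two bounds in (\ref{bounds1}) equals $F/n$ after cancellation of the common additive terms. The only subtlety to remark on is that the inequality is one-sided at each iteration (the algorithm always approaches $R_n(D_k)$ from above), so the absolute value in the error can be dropped, making the termination test both valid and conservative.
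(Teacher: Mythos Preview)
Your proposal is correct and follows essentially the same approach as the paper's own proof, which simply notes that the upper and lower bounds in (\ref{bounds1}) differ only in their last term, so $F<\epsilon$ forces the error below $\epsilon$. Your write-up is in fact slightly sharper: you make explicit that the gap equals $F/n$ rather than $F$, and that the error is one-sided, points the paper leaves implicit.
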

\begin{proof}
The proof follows from Equation (\ref{bounds1}), in which the upper bound and lower bound differ only in their last expression. Thus, if $F<\epsilon$, then $R_n(D)$ is close to the upper bound $R^k_n(D)$ by, at most, $\epsilon$.
\end{proof}
\end{section}

\begin{section}{Numerical Examples}\label{SecEx}
In this section we present several examples for the rate distortion source coding with feed-forward. First, by using Alg. \ref{algs} we demonstrate, for a specific example, that feed-forward does not decrease the rate distortion function where the source is memoryless (i.i.d.) as shown in\cite{WeisMer}. Then we provide two explicit examples for a Markovian source; one where the distortion is single letter, and one with a general distortion function as presented in \cite{VenPra2}. Geometric programming is used as well, to verify our results.

In all of the examples, we run Alg. \ref{algs} with various values of $\lambda$, and thus construct the graph of $R_n(D)$ using interpolations. Alternatively, one can use the geometric programming form and find, for every distortion $D$ given as input, the rate $R$.
\begin{subsection}{A memoryless (i.i.d.) source}
Analogous to the memoryless channel, it was shown by Weissman and Merhav \cite{WeisMer} that for an i.i.d. source feed-forward does not decrease the rate distortion function. In this example, the source is distributed $X\sim B(\frac{1}{2})$, and the distortion function is single letter, i.e.,
\begin{align}
d(x^n,\hat{x}^n)=\frac{1}{n}\sum_{i=1}^{n}d(x_i,\hat{x}_i).\nonumber
\end{align}
Running our algorithm with delay $s=1$ and block length $n=5$, we would expect to obtain the same result as with no feed-forward at all (as shown in \cite[ch. 10.3.1]{CoverTomas}),which is given by
\begin{equation}
R(D)=\left\{\begin{tabular}[c]{l l} $H_b(p)-H_b(D),$ & $0\leq D\leq\min\{p,1-p\}$\\$0,$ & $D\geq\min\{p,1-p\}$\end{tabular}\right.\label{iidRDformula}
\end{equation}
Note that $H_b(p),\ H_b(D)$ are the binary entropies with parameters $p,\ D$, respectively.
Indeed, the function above and the performance of Alg. \ref{algs} coincide, as illustrated in Fig. \ref{iidRD}.
\begin{figure}[h!]{
\psfrag{R}[][][0.8]{$R(D)$} \psfrag{D}[][][0.8]{$D$}
 \centerline{ \includegraphics[width=6cm]{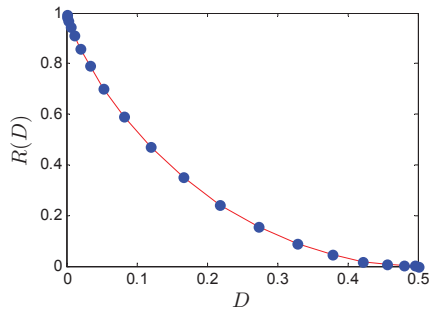}}
  \caption{Rate distortion function for a binary source, and feed-forward with delay 1. The circles represent the performance of Alg. \ref{algs}, regular line is the plot of (\ref{iidRDformula}).}
 \label{iidRD}
}\end{figure}
Note that the joint distribution $p(x^n)r(\hat{x}^n|x^n)$ is the same as the one that achieves the analytical calculation, in which $p(x_i)=0.5$, and $X\oplus\hat{X}\sim B(D)$. For $D=0.2$ and $n=3$, solving the geometrical programming form using a Matlab code produces the rate $R=0.278072$, which is close to $R(0.2)$ using Equation (\ref{iidRDformula}). The value of $\lambda$ turns out to be 6, which means that the slope at point $(R=0.278072, D=0.2)$ is -2.
\end{subsection}

In the following example, we present the performance of Alg. \ref{algs} for a Markov source and a single letter distortion.
\begin{subsection}{Markov source and single letter distortion}
The Markov source is presented in Fig. \ref{StockMod}.
\begin{figure}[h!]{
\psfrag{p}[][][0.8]{$p$} \psfrag{q}[][][0.8]{$q$} \psfrag{P}[][][0.8]{$1-p$} \psfrag{Q}[][][0.8]{$1-q$}\psfrag{1}[][][0.8]{$0$} \psfrag{0}[][][0.8]{$1$}
 \centerline{ \includegraphics[width=6cm]{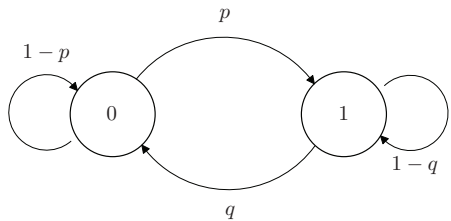}}
  \caption{A symmetrical Markov chain.}
 \label{StockMod}
}\end{figure}
This model was solved by Weissman and Merhav in \cite{WeisMer} for the symmetrical case $p=q$. We extend this model for the case of general transition probabilities $p,q$. The analytical solution for this example is detailed in App. \ref{Appmarkex}; there we show that for any $n$
\begin{align}
R_n(D)=\frac{1}{n}H_b(\pi)+\frac{n-1}{n}\left(\pi_1 H_b(p)+\pi_2 H_b(q)\right)-H_b(D).\label{markp2peq}
\end{align}
By taking $n$ to infinity, we have
\begin{align}
R(D)=\pi_1 H_b(p)+\pi_2 H_b(q)-H_b(D),\nonumber
\end{align}
where $\pi=[\pi_1,\pi_2]$ is the stationary distribution of the source. In Fig. \ref{Markp2p} (a) we present the graphs of $R_n(D)$ for $n=1$ up to $n=12$, where $p=0.3,\ q=0.2$, and $X_0$ has the stationary distribution $[0.4,0.6]$. It is evident that $R_n(D)$ decreases as $n$ increases and converges to the analytical calculation.

In \cite[Lemma 6]{NaissPermuter1} we provided another estimator for the feedback channel capacities, namely, the directed information rate. There, we show that if the limit exists, then
\begin{align}
\lim_{n\rightarrow\infty}\frac{1}{n}I(X^n\rightarrow Y^n)=\lim_{n\rightarrow\infty}\left(I(X^n\rightarrow Y^n)-I(X^{n-1}\rightarrow Y^{n-1})\right).\nonumber
\end{align}
We can also use the directed information rate to estimate $R_n(D)$. This is applied in two ways: either when the rate value is fixed or when the distortion value is fixed. In both cases we first have to fix an axes vector and interpolate the other vector with respect to the fixed one; then we can calculate differences between the interpolated vectors.

In Fig. \ref{Markp2p} (b) we present this estimator only for $n=12$ where the vector of the distortion is interpolated, i.e., $12D_{12}(R)-11D_{11}(R)$. We can see that this estimation is much more accurate than the one in Fig. \ref{Markp2p} (a).
\begin{figure}[h!]{
\begin{center}
\psfrag{R}[][][0.8]{$R(D)$} \psfrag{D}[][][0.8]{$D$} \psfrag{n}[][][0.8]{$n$}
 \subfloat[]{ \includegraphics[width=6cm]{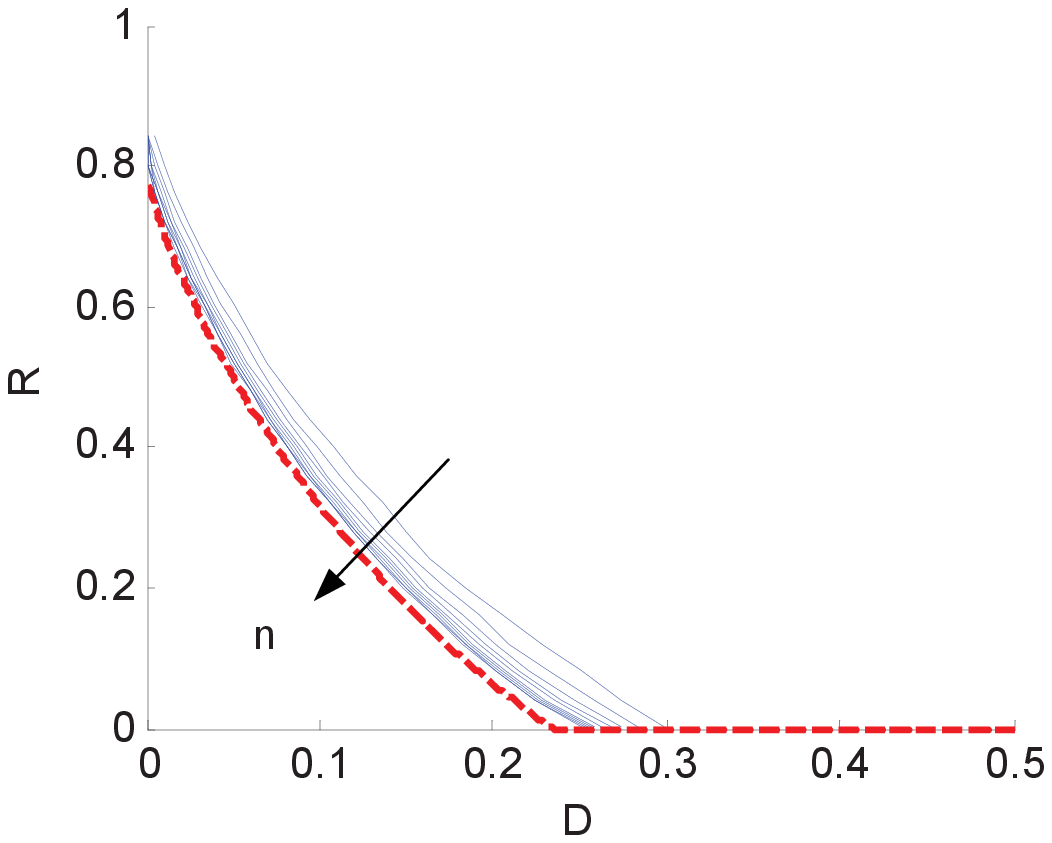}}
 \psfrag{R}[][][0.8]{$R(D)$} \psfrag{D}[][][0.8]{$D$}
 \subfloat[]{ \includegraphics[width=6cm]{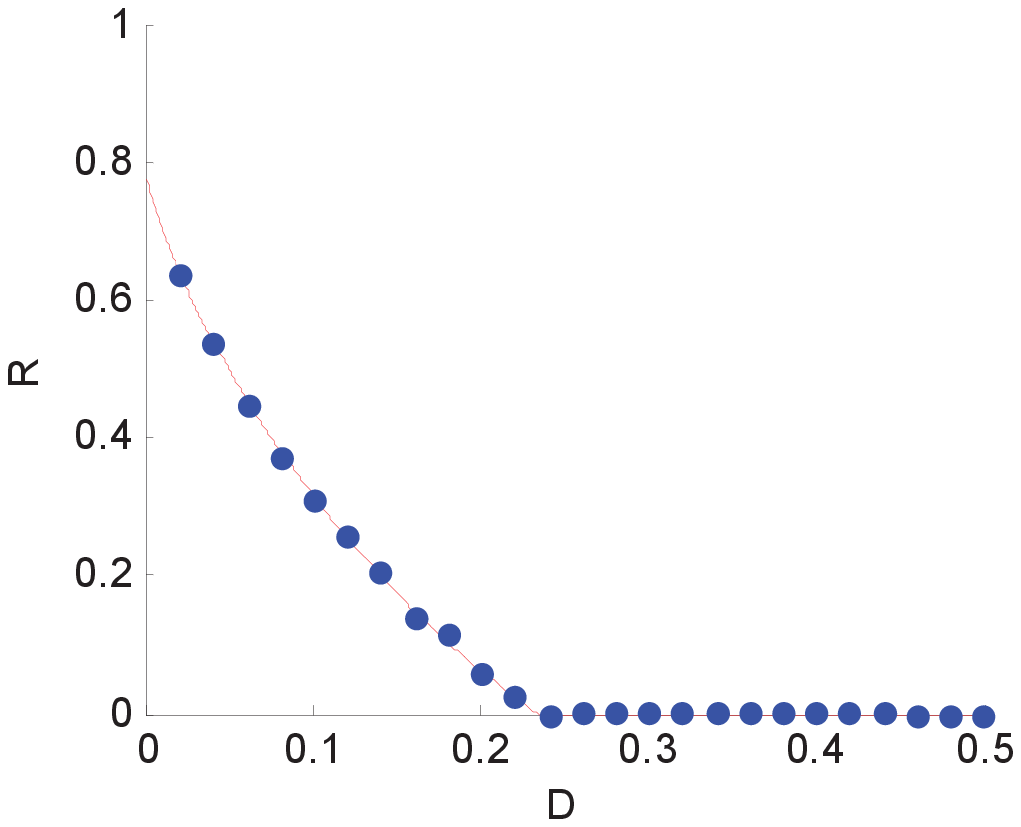}}
  \caption{$R(D)$ for the Markov source example and feed-forward with delay 1.}
    (a) Graph of $R_n(D)$; the arrow marks the way $R_n(D)$ responds to $n$ increasing. The dashed line is the analytical calculation.\\
    (b) Graph of $12D_{12}(R)-11D_{11}(R)$. The circles represent the performance of Alg. \ref{algs}.
  \label{Markp2p}
  \end{center}
}\end{figure}

This is a good opportunity to present the performance of the upper and lower bounds to a specific rate distortion pair $(R,D)$, and the geometrical programming solution to this problem. We ran our BA-type algorithm for the specific parameters $\lambda=9.216,\ n=3$ that corresponds to the rate distortion pair $(R=0.35884, D=0.10627)$ at slope $\frac{9.216}{3}\approx3$, this presented in Fig. \ref{IttGraphmMark} (a). We also ran ten distortion points using GP from $D=0$ to $D=0.27$ and compared it to $R_3(D)$ as in (\ref{markp2peq}) and the BAA performance, the solution is in Fig. \ref{IttGraphmMark} (b).
\begin{figure}[h!]{
\begin{center}
 \psfrag{N}[][][1]{Iteration $k$}\psfrag{V}[][][1]{Value}
 \psfrag{U}[][][1]{$I^k_U$}\psfrag{L}[][][1]{$I^k_L$}
 \subfloat[]{ \includegraphics[width=6cm]{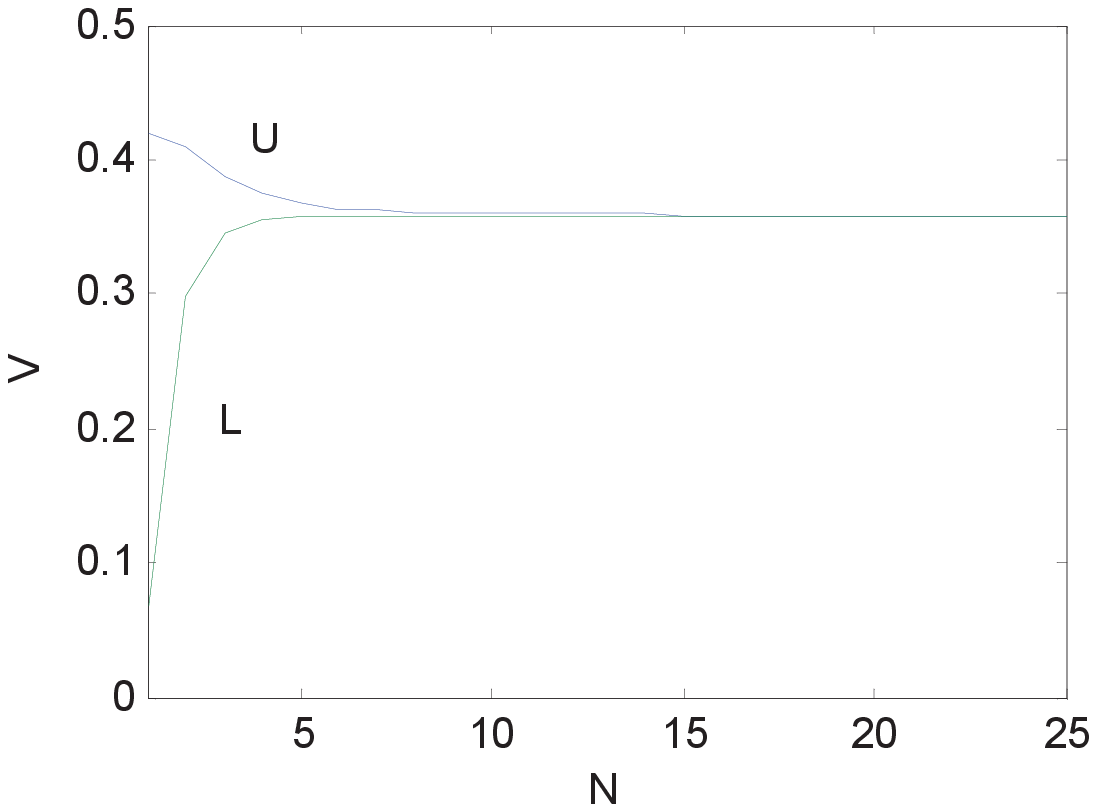}}
 \psfrag{D}[][][1]{$D$} \psfrag{R}[][][1]{$R_3$}
 \subfloat[]{ \includegraphics[width=6cm]{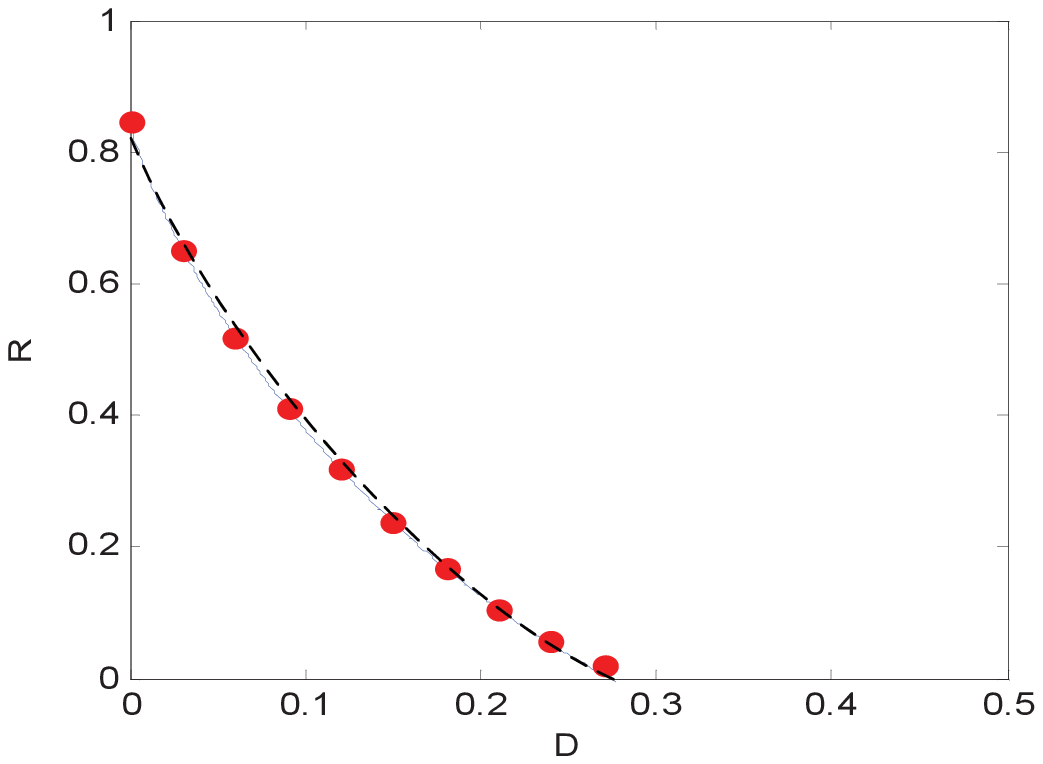}}
 \caption{Bounds for $R_3(D)$ and performance of GP and BAA for $R_3(D)$.}
 (a) Graph of the upper and lower bounds as a function of the iteration for $n=3$, $\lambda=9.216$ as given in Equation(\ref{bounds1}).\\
 (b) Graph of the solution using the GP and BAA method for $n=3$. The solid line is $R_3(D)$ as in (\ref{markp2peq}), the circles represent the performance of the GP, and the dashed line is the BAA result.
 \label{IttGraphmMark}
 \end{center}
}\end{figure}
\end{subsection}

\begin{subsection}{Stock market example. Markov source and general distortion}
The stock market example, in which we wish to observe the behavior of a particular stock over an $N$-day period, was introduced and solved in \cite{VenPra2}. Assume the stock can take $k+1$ values, $0\leq i\leq k$, and is modulated as a $k+1$ state Markov chain. On a given day $i$, the probability for the stock value to increase by 1 is $p_i$, to decrease by 1 is $q_i$, and to remain the same is $1-p_i-q_i$. When the stock value is in state 0, the value cannot decrease.
Similarly, when in state k the value cannot increase. If an investor would like to be forewarned whenever the stock value drops, he is advised with a binary decision $\hat{X}_n$. $\hat{X}_n=1$ if the value drops from day $n-1$ to day $n$, and $\hat{X}_n=0$ otherwise. The distortion is modulated in the following form
\begin{align}
d(x^n,\hat{x}^n)=\frac{1}{n}\sum_{i=1}^{n}e(\hat{x}_i,x_{i-1},x_i),\nonumber
\end{align}
where $e(.,.,.)$ is given in Table \ref{TT3}.
\begin{table}[h!]
\caption{Distortion $e(\hat{x}_i,x_{i-1},x_i)$, $j\in\{0,1,...,k\}$} 
\centering
$(x_{i-1},x_i)$\\
\begin{tabular}{|c |c | c|c|}
\hline
& $j,j+1$ & $j,j$ & $j,j-1$\\
\hline
$\hat{x}_i=0$ & 0 & 0 & 1\\
\hline
$\hat{x}_i=1$ & 1 & 1 & 0\\
\hline
\end{tabular}
\label{TT3}
\end{table}
It was shown in \cite{VenPra2} that the rate-distortion function of a general Markov-chain source with $k$ states, is given by
\begin{align}
R(D)=\sum_{i=1}^{k-1}\pi_i\left(H(p_i,q_i,1-p_i-q_i)-H_b(\epsilon)\right)+\pi_k\left(H_b(q_k)-H_b(\epsilon)\right),\nonumber
\end{align}
where $\pi=[\pi_0,\pi_1,...,\pi_k]$ is the stationary distribution of the Markov chain, and $\epsilon=\frac{D}{1-\pi_0}$.

In our special case we have $k=2$, i.e., $2$ states for the Markov chain, and transition probabilities $p_i=0.3,\ q_i=0.2$ as illustrated in Fig. \ref{StockMod}.
The stationary distribution of such a source is $\pi=[0.4,0.6]$, and we are left with
\begin{align}
R(D)&=\pi_1\left(H_b(q)-H_b(\epsilon)\right)\nonumber\\
&=0.6(H_b(0.2)-H_b(\frac{D}{0.6})).\nonumber
\end{align}
Since the rate cannot be less than zero, and is a descending function of the distortion, the rate-distortion function is as above when $H_b(0.2)\geq H_b(\frac{D}{0.6})$, i.e., when $D\leq 0.12$, and thus we obtain
\begin{equation}
R(D)=\left\{\begin{tabular}[c]{l l} $0.6(H_b(0.2)-H_b(\frac{D}{0.6})),$ & $D\leq0.12$\\$0,$ & otherwise.\end{tabular}\right.\label{analSM}
\end{equation}

In Fig. \ref{SMfig}(a) we present the graphs of $R_n(D)$ for $n=1$ up to $n=12$ with the distortion described here and where $X_0$ has the stationary distribution $[0.4,0.6]$. We can see that $R_n(D)$ decreases as $n$ increases as expected and converges to the analytical calculation.
In Fig. \ref{SMfig} (b) we present the directed information rate estimator only for $n=12$, where the vector of the distortion is interpolated, i.e., $12D_{12}(R)-11D_{11}(R)$. We can see that this estimator is much more accurate than the one in Fig. \ref{SMfig} (a).
\begin{figure}[h!]{
\begin{center}
\psfrag{R}[][][0.8]{$R(D)$} \psfrag{D}[][][0.8]{$D$} \psfrag{n}[][][0.8]{$n$}  \psfrag{A}[][][0.8]{}
 \subfloat[]{ \includegraphics[width=6cm]{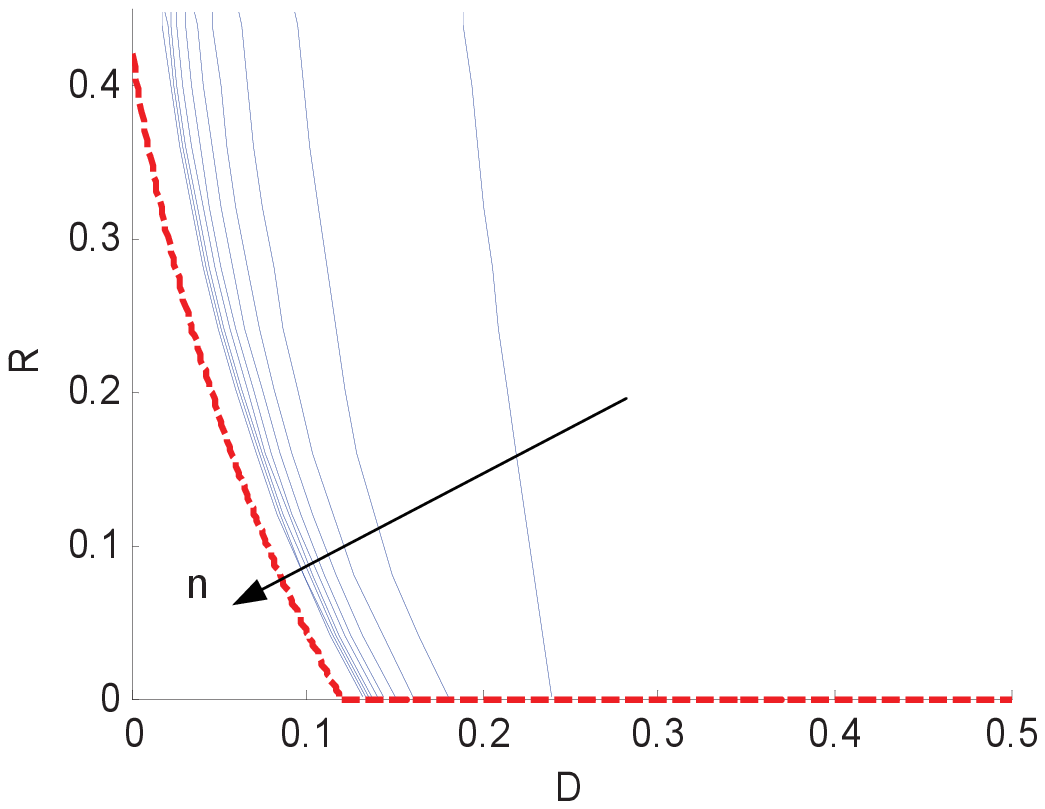}}
 \psfrag{R}[][][0.8]{$R(D)$} \psfrag{D}[][][0.8]{$D$}
 \subfloat[]{ \includegraphics[width=6cm]{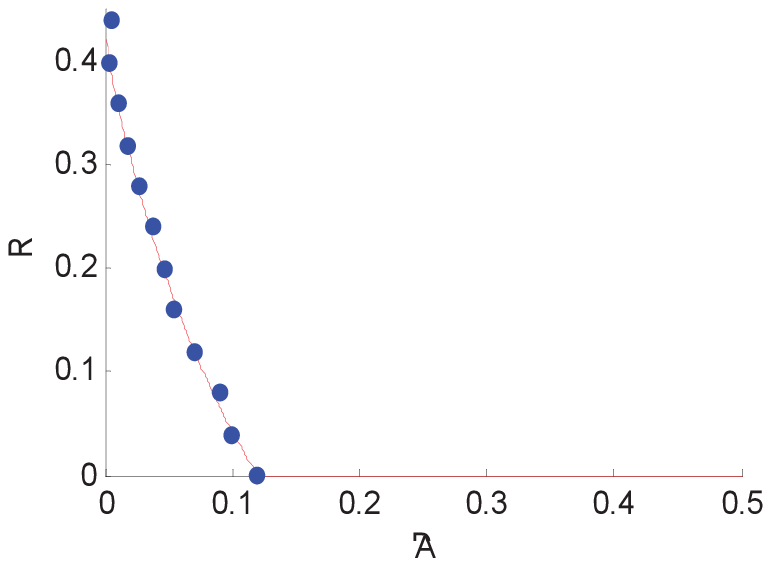}}
  \caption{$R(D)$ for the stock market example and feed-forward with delay 1.} (a) Graph of $R_n(D)$; the arrow marks the way $R_n(D)$ responds to $n$ increasing. The dashed line is the analytical calculation.\\
    (b) Graph of $12D_{12}(R)-11D_{11}(R)$. The circles represent the performance of Alg. \ref{algs}.
  \label{SMfig}
  \end{center}
}\end{figure}
\end{subsection}

\begin{subsection}{The effects of the delay on $R_n(D)$}
In this example we use the Markov source (Fig. \ref{StockMod}) example with a single letter distortion. We run Alg. \ref{algs} with delays $s\in\{1,2,..,10\}$ and block length $n=10$, where $X_0$ has the stationary distribution. We expect the rate distortion function to increase with the delay $s$. This is expected because as the delay $s$ increases the value of the directed information increases as well. Due to the fact that for $s\in\{3,4,...,10\}$ all graphs are close together, we present $R_n(D)$ only for $s=1,2,10$, and the results are shown in Fig. \ref{Markdelay}.
\begin{figure}[h!]{
\psfrag{R}[][][0.8]{$R(D)$} \psfrag{D}[][][0.8]{$D$} \psfrag{d}[][][0.8]{$s\in\{1,2,10\}$}
 \centerline{ \includegraphics[width=6cm]{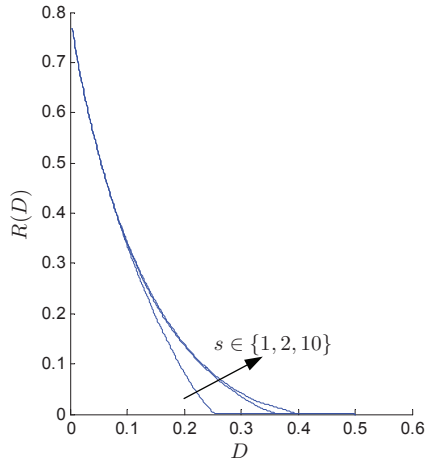}}
  \caption{$R_{10}(D)$ for a Markov source as a function of the delay.}
 \label{Markdelay}
}\end{figure}
\end{subsection}
\end{section}

\begin{section}{Conclusions}
In this paper we considered the rate distortion problem of discrete-time, ergodic, and stationary sources with
feed forward at the receiver. We first derived a sequence of achievable rates, $\{R_n(D)\}_{n\geq1}$, that
converge to the feed-forward rate distortion. By showing that the sequence is sub-additive, we proved that the limit of $R_n(D)$ exists and thus equals to the feed-forward rate distortion.
We provided an algorithm for calculating $R_n(D)$ using the alternating minimization procedure, and also presented a dual form for the optimization of $R_n(D)$, and transformed it into a geometric programming maximization problem.
\end{section}

\appendices
\section{Proof of Lemma \ref{LemSubAdd}}\label{Appsubadd}
We start by showing that the sequence $\{R_n(D)\}$ is sub additive; the methodology is similar to Gallager's proof in \cite[Th. 9.8.1]{Gal} for the case of no feed-forward. Then, by showing that the sequence $R_n(D)$ is sub-additive, following \cite[Lemma 4A.2]{Gal} we obtain our main objective, i.e.,
\begin{align}
\lim_nR_n(D)=\inf_nR_n(D).\nonumber
\end{align}

To commence, we recall that a sequence $\{a_n\}$ is called sub-additive if for all $m,l$,
\begin{align}
(m+l)a_{m+l}\leq ma_m+la_l.\nonumber
\end{align}
Let $l,n$ be arbitrary positive integers and, for a given $D$, let $p_n(\hat{x}^n|x^n)$ and $p_l(\hat{x}^l|x^l)$ be the conditional PMFs that achieve the minimum of the directed information with block length of $n$ and $l$, i.e., that achieve $R_n(D)$ and $R_l(D)$, respectively. Suppose we transmit $m=n+l$ samples as follows; the first $n$ samples are transmitted using $p_n$, and the sequential $l$ samples are transmitted using $p_l$. Hence, the overall conditional PMF is
\begin{align}
p_{n+l}(\hat{x}^{n+l}|x^{n+l})=p_n(\hat{x}^{n}|x^{n})p_{l}(\hat{x}_{n+1}^{n+l}|x_{n+1}^{n+l}).\nonumber
\end{align}
We can see in Section \ref{secalgs} that the directed information can be written as
\begin{align}
I(\hat{X}^m\rightarrow X^m)=H(\hat{X}^m||X^{m-1})-H(\hat{X}^m|X^m).\nonumber
\end{align}
From the construction of the conditional overall PMF $p_{n+l}$, its clear that \begin{align}
H(\hat{X}^{n+l}|X^{n+l})=H(\hat{X}^n|X^n)+H(\hat{X}_{n+1}^{n+l}|X_{n+1}^{n+l}).\nonumber
\end{align}
Furthermore,
\begin{align}
H(\hat{X}^m||X^{m-1})&=\sum_{i=1}^{n+l}H(\hat{X}_i|\hat{X}^{i-1},X^{i-1})\nonumber\\
&=H(\hat{X}^n||X^{n-1})+\sum_{i=n+1}^{n+l}H(\hat{X}_i|\hat{X}^{i-1},X^{i-1})\nonumber\\
&\leq H(\hat{X}^n||X^{n-1})+\sum_{i=n+1}^{n+l}H(\hat{X}_i|\hat{X}_{n+1}^{i-1},X_{n+1}^{i-1})\nonumber\\
&=H(\hat{X}^n||X^{n-1})+H(\hat{X}_{n+1}^{n+l}||X_{n+1}^{n+l-1}).\nonumber
\end{align}
Thus, it follows that
\begin{align}
I(\hat{X}^{n+l}\rightarrow X^{n+l})\leq I(\hat{X}^{n}\rightarrow X^{n})+I(\hat{X}_{n+1}^{n+l}\rightarrow X_{n+1}^{n+l}).\label{upb}
\end{align}
Since the source is stationary, we can start the input block at any given time index; thus the PMFs $p_n$ and $p_l$ achieve $nR_n(D)+lR_l(D)$ on the right-hand side of Equation (\ref{upb}), while the left-hand side is greater than $(n+l)R_{n+l}(D)$ since we attempt to minimize the expression to achieve the rate distortion function. Hence, we obtain
\begin{align}
(n+l)R_{n+l}(D)\leq nR_n(D)+lR_l(D).\nonumber
\end{align}
Using \cite[Lemma 4A.2]{Gal} for sub-additive sequences, we obtain
\begin{align}
\inf_n R_n(D)=\lim_{n\to\infty}R_n(D).\nonumber
\end{align}\hfill\QED

\section{Proof of Lemma \ref{LemConverse}.}\label{Appconv}
In this Appendix we prove Lemma \ref{LemConverse}, which provides for us that the mathematical expression for the rate distortion feed-forward
\begin{align}
R^{(I)}(D)=\lim_{n\to\infty}\frac{1}{n}\min_{p(\hat{x}^n|x^n):\ex{}{d(X^n,\hat{X}^n)}\leq D}I(\hat{X}^n\to X^n),\label{RDI2}
\end{align}
is a lower bound to the operational definition $R(D)$.
\begin{proof}
Consider any $(n,2^{nR},D)$  rate distortion with feed-forward code defined by the mappings $f,\ \{g_i\}_{i=1}^n$ as given in Section \ref{SecProbRes}, Equation (\ref{MapEndDec}), and distortion constraint $\ex{}{d(X^n,\hat{X}^n)}\leq D+\epsilon_n$, where $\epsilon_n\to 0$ as $n$ goes to infinity.
Let the message sent be a random variable $T=f(X^n)$, and assume that the distortion constraint is satisfied. Then we have the following chain of inequalities:
\begin{eqnarray}
nR&\stackrel{(a)}{\geq}&H(T)\nonumber\\
&\geq&I(X^n;T)\nonumber\\
&\stackrel{(b)}{=}&\sum_{i=1}^{n}I(X_i;T|X^{i-1})\nonumber\\
&=&\sum_{i=1}^{n}\left(H(X_i|X^{i-1})-H(X_i|X^{i-1},T)\right)\nonumber\\
&\stackrel{(c)}{=}&\sum_{i=1}^{n}\left(H(X_i|X^{i-1})-H(X_i|X^{i-1},T,\hat{X}^i)\right)\nonumber\\
&\stackrel{(d)}{\geq}&\sum_{i=1}^{n}\left(H(X_i|X^{i-1})-H(X_i|X^{i-1},\hat{X}^i)\right)\nonumber\\
&=&\sum_{i=1}^{n}I(X_i;\hat{X}^i|X^{i-1})\nonumber\\
&\stackrel{(e)}{=}&I(\hat{X}^n\rightarrow X^n),\nonumber
\end{eqnarray}
where (a) follows from the fact that the alphabet of $T$ is $nR$, (b) follows from the chain rule for mutual information, (c) is due to the fact that given $X^{i-1},T$, we know $\hat{X}^i$, and (d) is since conditioning reduces the entropy. Step (e) follows the chain rule for directed information.
Taking $n$ to infinity, we obtain $R\geq R^{(I)}(D)$, and the distortion constraint satisfies
\begin{align}
\lim_{n\to\infty}\ex{}{d(X^n,\hat{X}^n)}\leq D.\nonumber
\end{align}
\end{proof}

\section{Proof of Theorem \ref{ThRdMax}.}\label{Apprdmax}
In this appendix we provide a proof for Theorem \ref{ThRdMax}. We recall that Theorem \ref{ThRdMax} states that the rate distortion function can be written as the following optimization problem:
\begin{align}
R_n(D)=\max_{\lambda\geq0,\gamma(x^n)}\frac{1}{n}
    \left(-\lambda D+\sum_{x^n}p(x^n)\log\gamma(x^n)\right),\label{lowb3}
\end{align}
where, for some causal conditioned probability $p'(x^n||\hat{x}^n)$,  $\gamma(x^n)$ satisfies the inequality constraint
\begin{align}
p(x^n)\gamma(x^n)2^{-\lambda d(x^n,\hat{x}^n)}\leq p'(x^n||\hat{x}^n).\label{cons2}
\end{align}

We prove this theorem in two ways. One is similar to Berger's proof in \cite{Berger}, based on the inequality $\log(y)\geq1-\frac{1}{y}$, for the regular rate distortion function. The other is using the Lagrange duality between the minimization problem we are familiar with and a maximization problem as presented in \cite{Boyd} and \cite{ChiBoyd}. We also provide the connection between the curve of $R_n(D)$ and the parameter $\lambda$; this is embodied in Lemma \ref{Rtag}.

Before we begin, we recall that a step in Alg. \ref{algs} is defined by the following equality
\begin{align}
r^k(\hat{x}^n|x^n)=\frac{q^{k-1}(\hat{x}^n||x^{n-1})2^{-\lambda d(x^n,\hat{x}^n)}}{\sum_{\hat{x}'^n}q^{k-1}(\hat{x}'^n||x^{n-1})2^{-\lambda d(x^n,\hat{x}'^n)}}.\label{step1}
\end{align}
This equality is the outcome of differentiating the Lagrangian when $q(\hat{x}^n||x^{n-1})$ is fixed, as given in Section \ref{SecDer}.
We shall use this equality throughout the proof.

As mentioned, the first proof follows the one in \cite{Berger}.
\begin{proof}[Proof of Theorem \ref{ThRdMax}]
First, we show that for every $r(\hat{x}^n|x^n)$ for which the distortion constraint is satisfied, the following chain of inequalities holds
\begin{align}
I_{FF}(r,q)+\lambda D-\sum_{x^n}p(x^n)\log\gamma(x^n)&\stackrel{(a)}{\geq}
  I_{FF}(r,q)+\lambda\ex{r(\hat{x}^n|x^n)}{d(X^n,\hat{X}^n)}-\sum_{x^n}p(x^n)\log\gamma(x^n)\nonumber\\
&=\sum_{x^n,\hat{x}^n}p(x^n)r(\hat{x}^n|x^n)\log\frac{r(\hat{x}^n|x^n)2^{\lambda
    d(x^n,\hat{x}^n)}}{q(\hat{x}^n||x^{n-1})\gamma(x^n)}\nonumber\\
&\stackrel{(b)}{\geq}\sum_{x^n,\hat{x}^n}p(x^n)r(\hat{x}^n|x^n)
    \left(1-\frac{q(\hat{x}^n||x^{n-1})\gamma(x^n)}{r(\hat{x}^n|x^n)2^{\lambda d(x^n,\hat{x}^n)}}\right)\nonumber\\
&=1-\sum_{x^n,\hat{x}^n}q(\hat{x}^n||x^{n-1})p(x^n)\gamma(x^n)2^{-\lambda d(x^n,\hat{x}^n)}\nonumber\\
&\stackrel{(c)}{\geq}1-\sum_{x^n,\hat{x}^n}q(\hat{x}^n||x^{n-1})p'(x^n||\hat{x}^n)\nonumber\\
&\stackrel{(d)}{=}0,\nonumber
\end{align}
where (a) follows from the fact that the distortion $D$ exceeds $\ex{r(\hat{x}^n|x^n)}{d(X^n,\hat{X}^n)}$ for every $r(\hat{x}^n|x^n)$ as has been assumed, (b) follows from the inequality $\log\frac{1}{y}\geq 1-\frac{1}{y}$, (c) is due to the constraint in Equation (\ref{cons2}), and (d) follows from the fact that $q(\hat{x}^n||x^{n-1})p'(x^n||\hat{x}^n)$ is equal to some joint distribution $p(x^n,\hat{x}^n)$ \cite{Massey}. Since the chain of inequalities is true for every $r(\hat{x}^n|x^n)$, we can choose the one that achieves $R_n(D)$, and then divide by $n$ to obtain the inequality in Equation (\ref{lowb3}) in our Theorem.

To complete the proof of Theorem \ref{ThRdMax}, we need to show that equality holds in the chain of inequalities above for some $\gamma(x^n)$ that satisfies the constraint. If so, let us denote by $r^*(\hat{x}^n|x^n)$ the conditional PMF that achieves $R_n(D)$. Further, we denote by $q^*(\hat{x}^n||x^{n-1})$ the corresponding causal conditioned PMF. Now, consider the following chain of equalities.
\begin{align}
nR_n(D)&=\sum_{x^n,\hat{x}^n}p(x^n)r^*(\hat{x}^n|x^n)\log\frac{r^*(\hat{x}^n|x^n)}{q^*(\hat{x}^n||x^{n-1})}\nonumber\\
&\stackrel{(a)}{=}\sum_{x^n,\hat{x}^n}p(x^n)r^*(\hat{x}^n|x^n)\log\frac{2^{-\lambda d(x^n,\hat{x}^n)}}
    {\sum_{\hat{x}'^n}q^*(\hat{x}'^n||x^{n-1})2^{-\lambda d(x^n,\hat{x}'^n)}}\nonumber\\
&\stackrel{(b)}{=}-\lambda\ex{r^k(\hat{x}^n|x^n)}{d(X^n,\hat{X}^n)}+\sum_{x^n}p(x^n)\log\gamma(x^n)\nonumber\\
&=-\lambda D+\sum_{x^n}p(x^n)\log\gamma(x^n),\nonumber
\end{align}
where (a) is due to a step in the algorithm given by (\ref{step1}), and by the uniqueness of $r^*(\hat{x}^n|x^n)$ in the algorithm, as shown in Lemma \ref{runique}, and (b) follows the expression for $\gamma(x^n)$ given by
\begin{align}
\gamma(x^n)=\left(\sum_{\hat{x}'^n}q^*(\hat{x}'^n||x^{n-1})2^{-\lambda d(x^n,\hat{x}'^n)}\right).
\end{align}

Therefore, we are left with verifying that the $\gamma(x^n)$ above satisfies the constraint:
\begin{align}
p(x^n)\gamma(x^n)2^{-\lambda d(x^n,\hat{x}^n)}&=p(x^n)\frac{2^{-\lambda d(x^n,\hat{x}^n)}}{\sum_{\hat{x}^n}q^*(\hat{x}^n||x^{n-1})2^{-\lambda d(x^n,\hat{x}^n)}}\nonumber\\
&\stackrel{(a)}{=}\frac{p(x^n)r^*(\hat{x}^n|x^n)}{q^*(\hat{x}^n||x^{n-1})}\nonumber\\
&=\frac{p(x^n,\hat{x}^n)}{q^*(\hat{x}^n||x^{n-1})}\nonumber\\
&\stackrel{(b)}{=}p'(x^n||\hat{x}^n),\nonumber
\end{align}
where (a) follows from Equation (\ref{step1}), and (b) is due to the causal conditioning chain rule. Hence, we showed that $R_n(D)$ is the solution to the optimization problem given in Equation (\ref{lowb3}).
\end{proof}

We also present an alternative proof for Theorem \ref{ThRdMax}, this using the Lagrange duality, as in \cite{Boyd}, \cite{ChiBoyd}.
\begin{proof}[Alternative proof for Theorem \ref{ThRdMax}]
Recall that $R_n(D)$ is the result of
\begin{align}
\min_{r(\hat{x}^n|x^n)}\sum_{\hat{x}^n,x^n}{p(x^n)r(\hat{x}^n|x^n)\log{\frac{r(\hat{x}^n|x^n)}{q(\hat{x}^n||x^{n-1})}}},\nonumber
\end{align}
where $q(\hat{x}^n||x^{n-1})$ is defined by $p(x^n)r(\hat{x}^n|x^n)$, subject to the following conditions:
\begin{align}
&\sum_{x^n,\hat{x}^n}p(x^n)r(\hat{x}^n|x^n)d(x^n,\hat{x}^n)\leq D,\nonumber\\
&\forall\ x^n:\ \sum_{\hat{x}^n}r(\hat{x}^n|x^n)=1,\nonumber\\
&\forall\ x^n,\hat{x}^n:\ r(\hat{x}^n|x^n)\geq 0.\nonumber
\end{align}

Let us define the Lagrangian as
\begin{align}
J(r,\lambda,\gamma,\mu)=\sum_{x^n,\hat{x}^n}p(x^n)&r(\hat{x}^n|x^n)\log\frac{r(\hat{x}^n|x^n)}{q(\hat{x}^n||x^{n-1})}+
    \lambda\left(\sum_{x^n,\hat{x}^n}p(x^n)r(\hat{x}^n|x^n)d(x^n,\hat{x}^n)-D\right)\nonumber\\
&+\sum_{x^n}\gamma(x^n)\left(\sum_{\hat{x}^n}r(\hat{x}^n|x^n)-1\right)-\sum_{x^n,\hat{x}^n}\mu(x^n,\hat{x}^n)r(\hat{x}^n|x^n),\nonumber
\end{align}
where $\mu(x^n,\hat{x}^n)\geq0$ for all $x^n,\hat{x}^n$. Differentiating the Lagrangian, $J(r,\lambda,\gamma,\mu)$, over the variable $r(\hat{x}^n|x^n)$, we obtain
\begin{align}
\frac{\partial J}{\partial r(\hat{x}^n|x^n)}&=p(x^n)\log\frac{r(\hat{x}^n|x^n)}{q(\hat{x}^n||x^{n-1})}+\lambda p(x^n)d(x^n,\hat{x}^n)+\gamma(x^n)-\mu(x^n,\hat{x}^n).\nonumber
\end{align}
Solving the equation $\frac{\partial J}{\partial r(\hat{x}^n|x^n)}=0$ in order to find the optimum value, yields the following expression
\begin{align}
r(\hat{x}^n|x^n)=q(\hat{x}^n||x^{n-1})\gamma'(x^n)2^{\frac{\mu(x^n,\hat{x}^n)}{p(x^n)}-\lambda d(x^n,\hat{x}^n)},\label{minimizer}
\end{align}
where $\gamma'(x^n)=2^{-\frac{\gamma(x^n)}{p(x^n)}}$.
Multiplying both sides by $\frac{p(x^n)}{q(\hat{x}^n||x^{n-1})}$ we are left with the constraint
\begin{align}
p(x^n||\hat{x}^n)&=p(x^n)\gamma'(x^n)2^{\frac{\mu(x^n,\hat{x}^n)}{p(x^n)}-\lambda d(x^n,\hat{x}^n)}\nonumber\\
&\geq p(x^n)\gamma'(x^n)2^{-\lambda d(x^n,\hat{x}^n)},\label{cond}
\end{align}
where $p(x^n||\hat{x}^n)$ is induced by $r(\hat{x}^n|x^n)p(x^n)$.

From \cite[Chapter 5.1.3]{Boyd} we know that $g(\lambda,\gamma,\mu)=J(r^*,\lambda,\gamma,\mu)$ is a lower bound to $R_n(D)$. Substituting the minimizer $r(\hat{x}^n|x^n)$ using Equation (\ref{minimizer}), and the condition given by Equation (\ref{cond}) into $J$, we obtain the Lagrange dual function
\begin{equation}
g(\lambda,\gamma')=\left\{\begin{tabular}[c]{l l} $-\lambda D+\sum_{x^n}p(x^n)\log\gamma'(x^n),$ & $p(x^n)\gamma'(x^n)2^{-\lambda d(x^n,\hat{x}^n)}\leq p(x^n||\hat{x}^n)$\\
$-\infty,$ & otherwise.\end{tabular}\right.\label{analSM}
\end{equation}

By making the constraints explicit, and since the minimization problem is convex, we obtain the Lagrange dual problem, i.e., $R_n(D)$ is the solution to
\begin{align}
\max_{\gamma(x^n),\lambda}\frac{1}{n}\left(-\lambda D+\sum_{x^n}p(x^n)\log\gamma(x^n)\right),\label{maxGP}
\end{align}
subject to
\begin{align}
\forall\ x^n,\hat{x}^n:\ &p(x^n)\gamma(x^n)2^{-\lambda d(x^n,\hat{x}^n)}\leq p(x^n||\hat{x}^n),\nonumber\\
&\lambda\geq0\nonumber
\end{align}
for the $p(x^n||\hat{x}^n)$ that is induced by $r(\hat{x}^n|x^n)p(x^n)$, and $r(\hat{x}^n|x^n)$ is the optimal PMF.

We use the notation of an \textit{optimal} PMF if it achieves the optimal value. For example, the PMF $r(\hat{x}^n|x^n)$ that achieves the minimum of the directed information given the distortion constraint, is optimal. we say that the PMF, $p(x^n||\hat{x}^n)$ is optimal, if it is induced by the optimal $r(\hat{x}^n|x^n)$. Another example is the maximization problem in (\ref{maxGP}). We say that $\lambda, \gamma(x^n)$ are optimal if they achieve the maximum value. Therefore, $p(x^n||\hat{x}^n)$ is optimal as well if it satisfies Equation (\ref{cond}).

Now, we wish to substitute the constraint to
\begin{align}
\forall\ x^n,\hat{x}^n:\ &p(x^n)\gamma(x^n)2^{-\lambda d(x^n,\hat{x}^n)}\leq p'(x^n||\hat{x}^n),\label{condsome}
\end{align}
for some $p'(x^n||\hat{x}^n)$. First, note that we always achieve equality in (\ref{condsome}) since we can increase the value of $\gamma(x^n)$ and thus increase the objective. This, combined with the fact that for $r(\hat{x}^n|x^n)>0$, $\mu(x^n,\hat{x}^n)$ must be zero, we have equality in (\ref{cond}) as well (if $r(\hat{x}^n|x^n)=0$, then $q(\hat{x}^n||x^{n-1})=0$, and Equation (\ref{minimizer}) holds too). Now, let us assume that the maximum in (\ref{maxGP}) with the constraint in (\ref{condsome}) is achieved at a \textit{non-optimal} $p'(x^n||\hat{x}^n)$, i.e., one that is not achieved using the optimal $\lambda, \gamma(x^n)$. Thus, the value obtained in (\ref{maxGP}) is larger then the value achieved by $p(x^n||\hat{x}^n)$, i.e., $R_n(D)$ (since the maximization includes $p(x^n||\hat{x}^n)$). However, from the lagrange duality it should be a lower bound to $R_n(D)$, thus contradicting the fact that the maximum is achieved at a non-optimal $p'(x^n||\hat{x}^n)$.
\end{proof}

Note, that we can construct the optimal PMF $r(\hat{x}^n|x^n)$ from the solution to the maximization problem presented here. Consider the parameters $\lambda,\ \gamma(x^n),$ that achieve (\ref{maxGP}), and calculate $p(x^n||\hat{x}^n)$ according to Equation (\ref{cond}). The calculation of $r(\hat{x}^n|x^n)$ is done recursively on $r(\hat{x}^i|x^i)$. For $i=1$, calculate $r(\hat{x}^1|x^1)$ using
\begin{align}
r(\hat{x}^1|x^1)=\frac{p(x^1||\hat{x}^1)}{p(x^1)}\sum_{x_1}p(x^1)r(\hat{x}^1|x^1).\nonumber
\end{align}
Further, calculate $q(\hat{x}_1)$ using
\begin{align}
q(\hat{x}_1)=\sum_{x_1}p(x^1)r(\hat{x}^1|x^1).\nonumber
\end{align}
Now, once we have $r(\hat{x}^j|x^j),\ q(\hat{x}_j|\hat{x}^{j-1}x^{j-1})$ for every $j<i$, calculate $r(\hat{x}^i|x^i)$ using
\begin{align}
r(\hat{x}^i|x^i)&=\frac{p(x^i||\hat{x}^i)}{p(x^i)}\left[\prod_{j=1}^{i-1}q(\hat{x}_j|\hat{x}^{j-1}x^{j-1})\right]
    \frac{\sum_{x_i}p(x^i)r(\hat{x}^i|x^i)}{p(x^{i-1})r(\hat{x}^{i-1}|x^{i-1})},\nonumber
\end{align}
and then
\begin{align}
q(\hat{x}_i|\hat{x}^{i-1}x^{i-1})&=\frac{\sum_{x_i}p(x^i)r(\hat{x}^i|x^i)}{p(x^{i-1})r(\hat{x}^{i-1}|x^{i-1})}.\nonumber
\end{align}
Do so until $i=n$, and we obtain our optimal $r(\hat{x}^n|x^n)$.

Another lemma we wish to provide is the connection between the curve of $R_n(D)$ and the parameter $\lambda$. This lemma is similar to the one given by Berger in \cite[Th. 2.5.1]{Berger} for the case of no feed-forward.
\begin{lemma}\label{Rtag}
Consider the expression for $R_n(D)$ given by
\begin{align}
R_n(D)=\frac{1}{n}
    \left(-\lambda D+\sum_{x^n}p(x^n)\log\gamma(x^n)\right),\nonumber
\end{align}
where $\gamma(x^n)$ and $\lambda$ are the variables that maximize (\ref{maxGP}). We have seen that $\gamma(x^n)$ is of the form
\begin{align}
\gamma(x^n)=\left(\sum_{\hat{x}^n}q^*(\hat{x}^n||x^{n-1})2^{-\lambda d(x^n,\hat{x}^n)}\right)^{-1}.\nonumber
\end{align}
Hence, the slope at distortion $D$ is $R_n'(D)=-\frac{\lambda}{n}$.
\end{lemma}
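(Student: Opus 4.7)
My plan is to parameterize both $R_n$ and $D$ by the Lagrange multiplier $\lambda$ and apply the chain rule, using an envelope-type argument to handle the indirect dependencies. Concretely, for each $\lambda \geq 0$, let $r_\lambda^*, q_\lambda^*$ be the optimal PMFs produced by the algorithm of Section~\ref{SecDer}, set $D(\lambda) = \sum_{x^n,\hat{x}^n} p(x^n) r_\lambda^*(\hat{x}^n|x^n)\, d(x^n,\hat{x}^n)$, and let $\gamma_\lambda(x^n) = \big(\sum_{\hat{x}^n} q_\lambda^*(\hat{x}^n\|x^{n-1})\, 2^{-\lambda d(x^n,\hat{x}^n)}\big)^{-1}$. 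From the proof of Theorem~\ref{ThRdMax} we have the identity $n R_n(D(\lambda)) = -\lambda D(\lambda) + \sum_{x^n} p(x^n)\log\gamma_\lambda(x^n)$.

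Differentiating this identity with respect to $\lambda$ and using $\tfrac{d}{d\lambda} R_n(D(\lambda)) = R_n'(D)\, D'(\lambda)$ gives
\begin{align}
n R_n'(D)\, D'(\lambda) = -D(\lambda) - \lambda D'(\lambda) + \frac{d}{d\lambda}\sum_{x^n} p(x^n)\log\gamma_\lambda(x^n). \nonumber
\end{align}
The crucial step is to establish $\tfrac{d}{d\lambda}\sum_{x^n} p(x^n)\log\gamma_\lambda(x^n) = D(\lambda)$; once this is in hand the two $D(\lambda)$ terms cancel and, for $D'(\lambda)\neq 0$, the claim $R_n'(D) = -\lambda/n$ drops out.

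For the crucial identity, I would first freeze $q=q_\lambda^*$ and differentiate only the explicit $\lambda$ inside $2^{-\lambda d}$. Using $\tfrac{d}{d\lambda} 2^{-\lambda d} = -(\ln 2)\, d\, 2^{-\lambda d}$ together with the explicit formula for $r_\lambda^*$ in Lemma~\ref{qfixs}, one obtains $\partial_\lambda \log\gamma_\lambda(x^n)\big|_{q=q_\lambda^*} = \sum_{\hat{x}^n} r_\lambda^*(\hat{x}^n|x^n)\, d(x^n,\hat{x}^n)$, whose $p(x^n)$-average is precisely $D(\lambda)$. The main obstacle is then showing that the implicit contribution from $dq_\lambda^*/d\lambda$ vanishes; this is the envelope step, and it rests on Lemma~\ref{rfixs}: $q_\lambda^*$ is the unique minimizer of $K(r_\lambda^*,\cdot)$, so $\partial K/\partial q$ vanishes at $q_\lambda^*$, and hence the induced first-order variation of $\sum p(x^n)\log\gamma_\lambda$ in the $q$-direction contributes zero.

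A cleaner, shorter alternative is to bypass the parametric calculation entirely and invoke standard sensitivity analysis for convex programs, as in \cite[Section~5.6.3]{Boyd}: $n R_n(D)$ is the optimal value of a convex program in $r$ with inequality constraint $\mathbb{E}_r[d(X^n,\hat{X}^n)]\leq D$ and associated dual variable $\lambda\geq 0$. The perturbation-function identity then immediately yields $\partial(n R_n)/\partial D = -\lambda$, i.e.\ $R_n'(D) = -\lambda/n$.
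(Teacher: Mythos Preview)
Your primary route is essentially the paper's own argument: a parametric chain-rule computation whose crux is the identity $\tfrac{d}{d\lambda}\sum_{x^n}p(x^n)\log\gamma_\lambda(x^n)=D(\lambda)$. The paper parameterizes by $D$ (rather than $\lambda$) and obtains that same identity by differentiating the constant function $F=\sum_{x^n,\hat{x}^n}p(x^n)q^*(\hat{x}^n\|x^{n-1})\gamma(x^n)2^{-\lambda d(x^n,\hat{x}^n)}\equiv 1$, whereas you obtain it via an envelope argument. Your envelope step is in fact a little more explicit than the paper about the implicit $q^*$-dependence; to make it airtight you should record the link $\sum_{x^n}p(x^n)\log\gamma_q(x^n)=\min_r K(r,q)$ (which follows from Lemma~\ref{qfixs}), so that stationarity in $q$ from Lemma~\ref{rfixs} indeed kills the $dq_\lambda^*/d\lambda$ contribution. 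Your second route via the perturbation/sensitivity result in \cite[Section~5.6.3]{Boyd} is a genuinely different and shorter proof that the paper does not use; it sidesteps the parametric calculus entirely at the cost of appealing to a black-box convex-duality fact.
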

\begin{proof}
The proof is given simply by differentiating the expression for $R_n(D)$.
\begin{align}
\frac{dR_n}{dD}&=\frac{\partial R_n}{\partial D}+\frac{\partial R_n}{\partial\lambda}\frac{d\lambda}{dD}+
    \sum_{x^n}\frac{\partial R_n}{\partial\gamma(x^n)}\frac{d\gamma(x^n)}{dD}\nonumber\\
&=\frac{1}{n}\left[-\lambda-D\frac{d\lambda}{dD}+\sum_{x^n}\frac{p(x^n)}{\gamma(x^n)}\frac{d\gamma(x^n)}{dD}\right]\nonumber\\
&=-\frac{\lambda}{n}+\frac{1}{n}\left[-D+\sum_{x^n}\frac{p(x^n)}{\gamma(x^n)}\frac{d\gamma(x^n)}{d\lambda}\right]\frac{d\lambda}{dD}.\nonumber
\end{align}
Now, consider the following expression
\begin{align}
F=\sum_{x^n,\hat{x}^n}p(x^n)q^*(\hat{x}^n||x^{n-1})\gamma(x^n)2^{-\lambda d(x^n,\hat{x}^n)}.\nonumber
\end{align}
Using the $\gamma(x^n)$ given above, we have $F=1$ and thus $\frac{\partial F}{\partial\lambda}=0$. However,
\begin{align}
\frac{\partial F}{\partial\lambda}&=\sum_{x^n,\hat{x}^n}\left[\frac{d\gamma(x^n)}{d\lambda}-d(x^n,\hat{x}^n)\gamma(x^n)\right]
    p(x^n)q^*(\hat{x}^n||x^{n-1})2^{-\lambda d(x^n,\hat{x}^n)}\nonumber\\
&=\sum_{x^n}\frac{d\gamma(x^n)}{d\lambda}p(x^n)\sum_{\hat{x}^n}q^*(\hat{x}^n||x^{n-1})2^{-\lambda d(x^n,\hat{x}^n)}
    -\sum_{x^n,\hat{x}^n}p(x^n)q^*(\hat{x}^n||x^{n-1})2^{-\lambda d(x^n,\hat{x}^n)}\gamma(x^n)d(x^n,\hat{x}^n)\nonumber\\
&=\sum_{x^n}\frac{d\gamma(x^n)}{d\lambda}\frac{p(x^n)}{\gamma(x^n)}-\sum_{x^n,\hat{x}^n}p(x^n)r^*(\hat{x}^n|x^n)d(x^n,\hat{x}^n)\nonumber\\
&=\sum_{x^n}\frac{d\gamma(x^n)}{d\lambda}\frac{p(x^n)}{\gamma(x^n)}-D\nonumber\\
&=0.\nonumber
\end{align}
Hence, we can conclude that
\begin{align}
\frac{dR_n}{dD}&=-\frac{\lambda}{n}+\frac{1}{n}\left[-D+\sum_{x^n}\frac{p(x^n)}{\gamma(x^n)}\frac{d\gamma(x^n)}{d\lambda}\right]\frac{d\lambda}{dD}\nonumber\\
&=-\frac{\lambda}{n}.\nonumber
\end{align}
\end{proof}

\section{Proof for Lemma \ref{Lembounds}}\label{Bounds}
In this appendix we prove the existence of a sequence of upper and lower bounds to $R_n(D)$, the rate distortion function with feed-forward. These bounds correspond to an iteration in Alg. \ref{algs}, and both converge to $R_n(D)$. To this end, we present and prove a few supplementary claims that assist in obtaining our main goal. Theorem \ref{ThRdMax} provides an alternating form (Lagrange dual form) of an optimization problem achieving $R_n(D)$, that is proved in App \ref{Apprdmax}. In Lemma \ref{lemLow2}, we show that in each iteration we can obtain measures that satisfy the constraint in Theorem \ref{ThRdMax} to form a lower bound, and that the bound is tight and achieved as the upper bound converges. We also provide a proof for the existence of a an upper bound in each iteration.

Before we begin, we recall that a step in Alg. \ref{algs} is defined by the following equality
\begin{align}
r^k(\hat{x}^n|x^n)=\frac{q^{k-1}(\hat{x}^n||x^{n-1})2^{-\lambda d(x^n,\hat{x}^n)}}{\sum_{\hat{x}'^n}q^{k-1}(\hat{x}'^n||x^{n-1})2^{-\lambda d(x^n,\hat{x}'^n)}}.\label{step}
\end{align}
We shall use this equality throughout the proof.

As mentioned, we use Theorem \ref{ThRdMax} that provides us with the following alternating optimization problem.
\begin{align}
R_n(D)=\max_{\lambda\geq0,\gamma(x^n)}\frac{1}{n}
    \left(-\lambda D+\sum_{x^n}p(x^n)\log\gamma(x^n)\right),\label{lowb1}
\end{align}
where $\gamma(x^n)$ satisfies the inequality constraint
\begin{align}
p(x^n)\gamma(x^n)2^{-\lambda d(x^n,\hat{x}^n)}\leq p'(x^n||\hat{x}^n)\label{cons1}
\end{align}
for some causal conditioned probability $p'(x^n||\hat{x}^n)$.

We now show that in each iteration in Alg. \ref{algs}, choosing $\gamma(x^n)$ appropriately forms a lower bound for $R_n(D)$.
\begin{lemma}\label{lemLow2}
In the $k$th iteration in Alg. \ref{algs}, by letting
\begin{align}
\gamma'^k(x^n)=\left(\sum_{\hat{x}^n}q^{k-1}(\hat{x}^n||x^{n-1})2^{-\lambda d(x^n,\hat{x}^n)}\right)^{-1},\label{gammak1}
\end{align}
and
\begin{align}
c^k_{\hat{x}^n,x^{n-1}}=\frac{q^{k}(\hat{x}^n||x^{n-1})}{q^{k-1}(\hat{x}^n||x^{n-1})},\label{ck1}
\end{align}
and defining
\begin{align}
\gamma^k(x^n)=\frac{\gamma'^k(x^n)}{\max_{\hat{x}^n,x^{n-1}}c^k_{\hat{x}^n,x^{n-1}}},\label{gammak3}
\end{align}
the constraint in Equation (\ref{cons1}) is satisfied, and forms a lower bound given by
\begin{align}
R_n(D)&\geq\frac{1}{n}\left(-\lambda D
    +\sum_{x^n}p(x^n)\log\gamma^k(x^n)-\log{\max_{\hat{x}^n,x^{n-1}}c^k_{\hat{x}^n,x^{n-1}}}\right).\nonumber
\end{align}
Furthermore, this lower bound is tight, and is achieved as $R^k_n(D)$ converges to $R_n(D)$, where $R^k_n(D)$ is the upper bound.
\end{lemma}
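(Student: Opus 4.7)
The plan is to apply Theorem \ref{ThRdMax} to a carefully constructed feasible pair $(\lambda, \gamma^k)$, producing a lower bound on $R_n(D)$, and then establish tightness via the convergence of Algorithm \ref{algs}. The centerpiece is the identity coming directly from the update rule (\ref{step}), rewritten as $r^k(\hat{x}^n|x^n) = q^{k-1}(\hat{x}^n||x^{n-1})\gamma'^k(x^n) 2^{-\lambda d(x^n,\hat{x}^n)}$, which links $\gamma'^k$ to the joint distribution $p(x^n)r^k(\hat{x}^n|x^n)$ produced in the $k$th iteration.

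First I would verify that the feasibility constraint (\ref{cons1}) holds with $\gamma^k$ as defined in (\ref{gammak3}). Multiplying the identity above by $p(x^n)$ gives
\begin{align*}
p(x^n)\gamma'^k(x^n)q^{k-1}(\hat{x}^n||x^{n-1})2^{-\lambda d(x^n,\hat{x}^n)} = p(x^n,\hat{x}^n),
\end{align*}
and by the causal conditioning chain rule \cite{Massey} the joint PMF factors as $q^k(\hat{x}^n||x^{n-1})p^k(x^n||\hat{x}^n)$ for the causal conditioned $p^k$ induced by the current joint distribution. Dividing through by $q^{k-1}$ and using the definition (\ref{ck1}) of $c^k$ yields
\begin{align*}
p(x^n)\gamma'^k(x^n)2^{-\lambda d(x^n,\hat{x}^n)} = c^k_{\hat{x}^n,x^{n-1}}\,p^k(x^n||\hat{x}^n).
\end{align*}
Dividing both sides by $\max_{\hat{x}^n,x^{n-1}}c^k_{\hat{x}^n,x^{n-1}}$, the left-hand side becomes $p(x^n)\gamma^k(x^n)2^{-\lambda d(x^n,\hat{x}^n)}$ while the right-hand side is bounded above by $p^k(x^n||\hat{x}^n)$, verifying (\ref{cons1}) with $p' = p^k$.

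Having established feasibility, Theorem \ref{ThRdMax} immediately yields the claimed lower bound on $R_n(D)$ by substituting $\gamma^k$ into the objective (\ref{lowb1}); the $-\log\max c^k$ term appearing in the lemma's statement merely reflects the algebraic substitution $\log\gamma^k(x^n) = \log\gamma'^k(x^n) - \log\max c^k$. For tightness I would invoke Lemma \ref{runique}: the iterates $\{q^k,r^k\}$ converge to the unique optimum, so as $k\to\infty$ we have $q^k \to q^{k-1}$ on the relevant support, hence every $c^k_{\hat{x}^n,x^{n-1}} \to 1$ and in particular $\max c^k \to 1$. In that limit $\gamma^k$ and $\gamma'^k$ coincide, and by construction $\gamma'^k$ satisfies the optimality identity characterizing the maximizer in Theorem \ref{ThRdMax} (as exhibited in the first proof in Appendix \ref{Apprdmax}), so the lower bound converges from below to $R_n(D)$ while the upper bound $R^k_n(D)$ descends to it from above.

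The hard part is the algebraic identification of the correct $p'(x^n||\hat{x}^n)$ witnessing feasibility in (\ref{cons1}). The natural candidate suggested by the Lagrangian analysis of Appendix \ref{Apprdmax} is the causal conditioned distribution $p^k(x^n||\hat{x}^n)$ induced by the current joint $p(x^n)r^k(\hat{x}^n|x^n)$; this candidate only satisfies the inequality after rescaling $\gamma'^k$ by $\max c^k$, which is precisely the motivation behind the definition (\ref{gammak3}). Once this rescaling is in place, the remainder of the argument is a direct appeal to Theorem \ref{ThRdMax} and Lemma \ref{runique}.
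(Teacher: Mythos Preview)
Your proposal is correct and follows essentially the same route as the paper: you use the update rule (\ref{step}) together with the causal-conditioning chain rule to show $p(x^n)\gamma'^k(x^n)2^{-\lambda d(x^n,\hat{x}^n)} = c^k_{\hat{x}^n,x^{n-1}}\,p^k(x^n\|\hat{x}^n)$, divide by $\max c^k$ to obtain feasibility in (\ref{cons1}), invoke Theorem \ref{ThRdMax} for the lower bound, and appeal to Lemma \ref{runique} for tightness via $c^k\to 1$. The only cosmetic difference is that the paper first bounds $c^k$ by its maximum and then divides, whereas you divide first; the algebra and the witnessing $p'=p^k$ are identical.
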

\begin{proof}
Let us fix the parameter $\gamma'^k(x^n)$ as in (\ref{gammak1}). Hence,
\begin{align}
p(x^n)\gamma'^k(x^n)2^{-\lambda d(x^n,\hat{x}^n)}&=p(x^n)\frac{2^{-\lambda d(x^n,\hat{x}^n)}}{\sum_{\hat{x}^n}q^{k-1}(\hat{x}^n||x^{n-1})2^{-\lambda d(x^n,\hat{x}^n)}}\nonumber\\
&\stackrel{(a)}{=}\frac{p(x^n)r^k(\hat{x}^n|x^n)}{q^{k-1}(\hat{x}^n||x^{n-1})}\nonumber\\
&\stackrel{(b)}{=}\frac{p'(x^n||\hat{x}^n)q^k(\hat{x}^n||x^{n-1})}{q^{k-1}(\hat{x}^n||x^{n-1})}\nonumber\\
&\leq p'(x^n||\hat{x}^n)\max_{\hat{x}^n,x^{n-1}}\frac{q^{k}(\hat{x}^n||x^{n-1})}{q^{k-1}(\hat{x}^n||x^{n-1})}\nonumber
\end{align}
where (a) follows from the definition of a step in Alg. \ref{algs} and given above in Equation (\ref{step}), and (b) follow the chain rule of causal conditioning, and $p'(x^n||\hat{x}^n)=\frac{p(x^n)r^k(\hat{x}^n|x^n)}{q^{k}(\hat{x}^n||x^{n-1})}$ is a causal conditioned PMF.
Hence, combined with (\ref{gammak3}), we obtain
\begin{align}
p(x^n)\gamma^k(x^n)2^{-\lambda d(x^n,\hat{x}^n)}&=\frac{p(x^n)\gamma'(x^n)2^{-\lambda d(x^n,\hat{x}^n)}}
    {\max_{\hat{x}^n,x^{n-1}}c^k_{\hat{x}^n,x^{n-1}}}\nonumber\\
&\leq p'(x^n||\hat{x}^n).\nonumber
\end{align}
Thus, we can use Theorem \ref{ThRdMax}, and obtain a lower bound for $R_n(D)$, i.e.,
\begin{align}
R_n(D)&\geq\frac{1}{n}\left[-\lambda D+\sum_{x^n}p(x^n)\log\gamma^k(x^n)\right]\nonumber\\
&=\frac{1}{n}\left[-\lambda D+\sum_{x^n}p(x^n)\log\gamma'^k_{x^n}
    -\sum_{x^n}p(x^n)\log\left(\max_{\hat{x}^n,x^{n-1}}c^k_{\hat{x}^n,x^{n-1}}\right)\right]\nonumber\\
&=\frac{1}{n}
    \left[-\lambda D+\sum_{x^n}p(x^n)\log\gamma'^k(x^n)-\log\left(\max_{\hat{x}^n,x^{n-1}}c^k_{\hat{x}^n,x^{n-1}}\right)\right].\label{lowb2}
\end{align}

To complete the proof of this lemma, we are left to show that as $k$ increases, i.e., the upper bound converges to $R_n(D)$, the lower bound is tight. For that matter, we note that the PMFs that achieve the optimum value $q^*,\ r^*$ are unique, as shown in Lemma \ref{runique}. Thus, it is clear that
\begin{align}
c^*_{\hat{x}^n,x^{n-1}}&=\frac{q^*(\hat{x}^n||x^{n-1})}{q^*(\hat{x}^n||x^{n-1})}=1,\label{ck2}
\end{align}
and
\begin{align}
\gamma^k(x^n)=\gamma'^k(x^n)=\left(\sum_{\hat{x}^n}q^{*}(\hat{x}^n||x^{n-1})2^{-\lambda d(x^n,\hat{x}^n)}\right)^{-1}.\label{gammak2}
\end{align}
Placing Equation (\ref{gammak2}) and (\ref{ck2}) in Equation (\ref{lowb2}), as shown in Theorem \ref{ThRdMax}, achieves equality instead of the chain of inequalities given. Thus $R_n(D)$ is, in fact, the solution to the optimization problem given in Equation (\ref{lowb1}), and we have demonstrated the existence of the lower bound
\end{proof}

\begin{lemma}\label{lemLow3}
In the $k$th iteration in Alg. \ref{algs}, the upper bound to the rate distortion is given by
\begin{align}
R_n(D_k)&\leq\frac{1}{n}\left(-\lambda D_k
    +\sum_{x^n}p(x^n)\log\gamma^k(x^n)-\sum_{x^n}p(x^n)r^k(\hat{x}^n|x^n)\log{c^k_{\hat{x}^n,x^{n-1}}}\right),\nonumber
\end{align}
where $D_k=\ex{r^k}{d(X^n,\hat{X}^n)}$.
\end{lemma}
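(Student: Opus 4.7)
The plan is to observe that the joint law $p(x^n)r^k(\hat{x}^n|x^n)$ obtained at the end of the $k$th iteration of Algorithm~\ref{algs} is, by construction, a feasible test channel for the rate--distortion minimization at distortion level $D_k = \ex{r^k}{d(X^n,\hat{X}^n)}$. Since $q^k$ is defined in step (e) of the algorithm to be the causal conditioning PMF induced by this joint law, we have $\frac{1}{n}I(\hat{X}^n\to X^n) = \frac{1}{n}I_{FF}(r^k,q^k)$ for this distribution. Because $R_n(D_k)$ is a minimum of such quantities over all transition kernels meeting the distortion constraint at level $D_k$, we get immediately
\begin{align}
R_n(D_k) \leq \frac{1}{n} I_{FF}(r^k,q^k).\nonumber
\end{align}
The remaining task is purely algebraic: to express $I_{FF}(r^k,q^k)$ in the form stated on the right-hand side of the lemma.

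For this I would invoke the update rule (\ref{step}) together with the definition of $\gamma^k$ in (\ref{cgamk}), which lets us rewrite
\begin{align}
r^k(\hat{x}^n|x^n) = q^{k-1}(\hat{x}^n||x^{n-1})\,\gamma^k(x^n)\, 2^{-\lambda d(x^n,\hat{x}^n)}.\nonumber
\end{align}
Dividing both sides by $q^k(\hat{x}^n||x^{n-1})$, taking logarithms, and using the definition $c^k_{\hat{x}^n,x^{n-1}} = q^k/q^{k-1}$ from (\ref{cgamk}) yields the pointwise identity
\begin{align}
\log\frac{r^k(\hat{x}^n|x^n)}{q^k(\hat{x}^n||x^{n-1})} = \log\gamma^k(x^n) - \lambda\, d(x^n,\hat{x}^n) - \log c^k_{\hat{x}^n,x^{n-1}}.\nonumber
\end{align}

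Substituting this into the definition $I_{FF}(r^k,q^k) = \sum_{x^n,\hat{x}^n} p(x^n) r^k(\hat{x}^n|x^n)\log\!\bigl(r^k/q^k\bigr)$ and splitting into three sums, the first term collapses to $\sum_{x^n} p(x^n)\log\gamma^k(x^n)$ because $\log\gamma^k(x^n)$ is independent of $\hat{x}^n$ and $\sum_{\hat{x}^n} r^k(\hat{x}^n|x^n) = 1$; the second term equals $-\lambda D_k$ by the very definition of $D_k$; and the third term is exactly $-\sum_{x^n,\hat{x}^n} p(x^n) r^k(\hat{x}^n|x^n) \log c^k_{\hat{x}^n,x^{n-1}}$. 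Dividing by $n$ and combining with the inequality from the first paragraph yields the claimed upper bound. No substantive obstacle is expected; the argument is essentially bookkeeping once the update rule (\ref{step}) is used to decompose $\log(r^k/q^k)$ into the three pieces above.
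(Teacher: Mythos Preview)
Your proposal is correct and follows essentially the same route as the paper: both start from the feasibility inequality $R_n(D_k)\le \tfrac{1}{n}I_{FF}(r^k,q^k)$ and then use the update rule (\ref{step}) to decompose $\log\bigl(r^k/q^k\bigr)$ into the three pieces $\log\gamma^k(x^n)$, $-\lambda d(x^n,\hat{x}^n)$, and $-\log c^k_{\hat{x}^n,x^{n-1}}$. The only difference is presentational---you isolate the pointwise identity first and then average, while the paper carries the expectation through a chain of equalities.
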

\begin{proof}
Note, that if $r^k(\hat{x}^n,x^n)$ produces a distortion $D$, then
\begin{align}
nR_n(D)&\leq I_{FF}(r^k,q^k)\nonumber\\
&=\sum_{x^n,\hat{x}^n}p(x^n)r^k(\hat{x}^n|x^n)\log\frac{r^k(\hat{x}^n|x^n)}{q^k(\hat{x}^n||x^{n-1})}\nonumber\\
&\stackrel{(a)}{=}\sum_{x^n,\hat{x}^n}p(x^n)r^k(\hat{x}^n|x^n)\log\frac{q^{k-1}(\hat{x}^n||x^{n-1})2^{-\lambda d(x^n,\hat{x}^n)}}
    {q^{k}(\hat{x}^n||x^{n-1})\sum_{\hat{x}'^n}q^{k-1}(\hat{x}'^n||x^{n-1})2^{-\lambda d(x^n,\hat{x}'^n)}}\nonumber\\
&=-\lambda\ex{r^k}{d(X^n,\hat{X}^n)}-\sum_{x^n}p(x^n)\log\sum_{\hat{x}'^n}q^{k-1}(\hat{x}'^n||x^{n-1})2^{-\lambda d(x^n,\hat{x}'^n)}-
    \sum_{x^n,\hat{x}^n}p(x^n)r^k(\hat{x}^n|x^n)\log\frac{q^{k}(\hat{x}^n||x^{n-1})}{q^{k-1}(\hat{x}^n||x^{n-1})}\nonumber\\
&\stackrel{(b)}{=}-\lambda D_k+\sum_{x^n}p(x^n)\log\gamma^k(x^n)-
    \sum_{x^n,\hat{x}^n}p(x^n)r^k(\hat{x}^n|x^n)\log{c^k_{\hat{x}^n,x^{n-1}}},\label{lastineq}
\end{align}
where (a) follows from the definition of a step in Alg. \ref{algs} and is given above in Equation (\ref{step}), and (b) follows from the definition of $\gamma^k(x^n),\ c^k_{\hat{x}^n,x^{n-1}}$. Hence, we have formed an upper bound to the rate distortion as in the lemma. Note that the only inequality is in the first line of the chain, and is due to the fact that $I_{FF}(r^k,q^k)\geq\min_{r,q}I_{FF}(r,q)$. However, upon convergence, this inequality is tight.
\end{proof}

We can now conclude our main objective in this appendix.

\textit{Proof of Lemma \ref{Lembounds}} Proving this lemma requires us to present upper and lower bounds that converge to $R_n(D)$. Lemma \ref{lemLow2} provides us with a lower bound and its tightness, whereas Lemma \ref{lemLow3} provides us with a tight upper bound as well, as required.
\hfill\QED

\section{Solution to $R(D)$ for an asymmetrical Markov source.}\label{Appmarkex}
The Markov source is presented in Fig. \ref{StockMod} above. We can describe the process $\{X_i\}$ using the equation
\begin{align}
X_i&=X_{i-1}W_1+(1-X_{i-1})W_2\nonumber\\
&=(X_{i-1}(W_1\oplus W_2))\oplus W_2,\nonumber
\end{align}
where $W_1\sim B(q)$, $W_2\sim B(p)$. This allows us to evaluate $H(X_n|X_{n-1})$:
\begin{align}
H(X_n|X_{n-1})&=H((X_{n-1}(W_1\oplus W_2))\oplus W_2|X_{n-1})\nonumber\\
&=p(x_{n-1}=1)H(W_1\oplus W_2\oplus W_2)+p(x_{n-1}=0)H(W_2)\nonumber\\
&=\pi_1 H(W_1)+\pi_2 H(W_2),\nonumber
\end{align}
where $\pi$ is the stationary distribution of the source. Now, to find the rate distortion of this model, we start with the converse
\begin{align}
\frac{1}{n}I(\hat{X}^n\rightarrow X^n)&=H(X^n)-H(X^n||\hat{X}^n)\nonumber\\
&=\frac{1}{n}H(X_1)+\frac{n-1}{n}H(X_n|X_{n-1})-\frac{1}{n}\sum_{i=1}^nH(X_i|X^{i-1},\hat{X}^i)\nonumber\\
&\stackrel{(a)}{\geq} \frac{1}{n}H_b(\pi)+\frac{n-1}{n}H(X_n|X_{n-1})-\frac{1}{n}\sum_{i=1}^nH(X_i|\hat{X}_i)\nonumber\\
&\stackrel{(b)}{\geq} \frac{1}{n}H_b(\pi)+\frac{n-1}{n}H(X_n|X_{n-1})-H_b(D)\nonumber\\
&=\frac{1}{n}H_b(\pi)+\frac{n-1}{n}\left(\pi_1 H_b(p)+\pi_2 H_b(q)\right)-H_(D),\nonumber
\end{align}
where (a) follows from the fact that conditioning reduces entropy, and (b) follows the fact that $P(X_i\neq\hat{X}_i)\leq D$ and $H_b(D)$ increases with $D$ for $D\leq\frac{1}{2}$.

However, we can achieve it by letting $X_i$ depend on $\hat{X}_i$ and $X_{i-1}$ as in Fig. \ref{MarkEx1},
\begin{figure}[h!]{
\psfrag{p}[][][1]{$1-p_1$} \psfrag{D}[][][1]{$D$}\psfrag{q}[][][1]{$p_1$} \psfrag{E}[][][1]{$1-D$}\psfrag{p}[][][1]{$1-p_1$} \psfrag{r}[][][1]{$p_2$}\psfrag{s}[][][1]{$1-p_2$}
\psfrag{X}[][][1]{$X_{i-1}$}\psfrag{Y}[][][1]{$\hat{X}_i$} \psfrag{Z}[][][1]{$X_i$}
 \centerline{ \includegraphics[width=6cm]{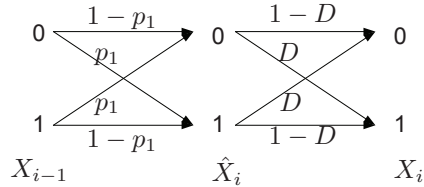}}
 \caption{Distribution of $X_i$ given $X_{i-1}$ and $\hat{X}_i$.}
 \label{MarkEx1}
}\end{figure}
where $p_1,\ p_2$ must hold for the following equation
\begin{align}
p_1D+(1-p_1)(1-D)=1-p,\nonumber\\
p_2D+(1-p_2)(1-D)=1-q,\nonumber
\end{align}
i.e.,
\begin{align}
p_1=\frac{D-p}{2D-1},\nonumber\\
p_2=\frac{D-q}{2D-1}.\nonumber
\end{align}
Note, that under this construction, the source $X^n$ is still Markovian. Further, from Fig. \ref{MarkEx1} we can see that  $X_{i-1}-\hat{X}_i-X_i$ forms a Markov chain, and $H(X_i|\hat{X}_i)=H_b(D)$. Thus, we obtain equality in (a), (b) in the above chain of inequalities, and hence showed that
\begin{align}
R_n(D)=\frac{1}{n}H_b(\pi)+\frac{n-1}{n}\left(\pi_1 H_b(p)+\pi_2 H_b(q)\right)-H_b(D).\nonumber
\end{align}
By taking $n$ to infinity we obtain
\begin{align}
R(D)=\pi_1 H_b(p)+\pi_2 H_b(q)-H_b(D).\nonumber
\end{align}

\end{document}